\def\markboth#1#2{\def\leftmark{\@IEEEcompsoconly{\sffamily}\MakeUppercase{\protect#1}}%
\def\rightmark{\@IEEEcompsoconly{\sffamily}\MakeUppercase{\protect#2}}}
\newcommand{\Hb}{\mathbf{H}}
\newcommand{\C}{\mathbf{C}}
\newcommand{\Q}{\mathbf{Q}}
\newcommand{\one}{\mathbf{1}}
\newcommand{\x}{\mathbf{x}}
\newcommand{\s}{\mathbf{s}}
\newcommand{\q}{\mathbf{q}}
\newcommand{\y}{\mathbf{y}}
\newcommand{\w}{\mathbf{w}}
\newcommand{\lambdab}{\mathbf{\boldsymbol{\lambda}}}
\newcommand{\ab}{\mathbf{a}}
\newcommand{\pp}{\mathbf{p}}
\newcommand{\Ex}[2]{{\textnormal{E}_{#1}\left[#2\right]}}
\newtheorem{definition}{Definition}
\newtheorem{proposition}{Proposition}
   \definecolor{blueH3}{rgb}{0,.5,1}
   \definecolor{blueH2}{rgb}{0,0.25,0.75}
   \definecolor{blueH1}{rgb}{0,0,0.5}
   \definecolor{grayOldText}{rgb}{.5,.5,.5}
   \definecolor{VCobalt}{HTML}{005682}
   \definecolor{TZTeal}{HTML}{008080}
   \definecolor{KYJade}{HTML}{008151}
   \definecolor{ARust}{HTML}{a10000}
\title{\iftoggle{TRport}{{\iftoggle{TRportMARK}{\color{red}}{}Technical Report on }}{}Optimal Link Scheduling in Millimeter Wave Multi-hop Networks with Space Division Multiple Access and Multiplexing}
\author{
Felipe G\'omez-Cuba$^1$, \emph{Member, IEEE}, Michele Zorzi$^2$, \emph{Fellow, IEEE}, 
\thanks{This work was presented in part at the IEEE Information Theory and Applications Workshop (ITA), La Jolla, February 2016~\cite{gomezITAoptimal}. $^1$:F. G\'omez-Cuba is with Dipartimento di Ingegneria dell'Informazione, University of Padova, Via Gradenigo 6/b, 35131 - Padova Italy, and with Department of Electrical Engineering, Stanford University, 350 Serra Mall, 94305 CA USA (e-mail: {\tt gmzcuba@stanford.edu}), $^2$Michele Zorzi is with Dipartimento di Ingegneria dell'Informazione, University of Padova, Via Gradenigo 6/b, 35131 - Padova, Italy. (Email: zorzi@dei.unipd.it). This project has received funding from the European Union's Horizon 2020 research and innovation programme under the Marie Sk\l{}odowska-Curie grant agreement No 704837.}
}
\begin{document} 
\maketitle
\iftoggle{TRport}{
    \markboth{Technical Report - SIGNET - University of Padova - 2018-06-22}{Technical Report - SIGNET - University of Padova - 2018-06-22}
  }{
    \markboth{DRAFT}{DRAFT}
  }
\begin{abstract}
In this paper we study the maximum throughput achievable with optimal scheduling in multi-hop networks with highly directive antenna arrays capable of Space Division Multiplexing (SDM) at the transmitter and Space Division Multiple Access (SDMA) at the receiver. This network model is relevant for future millimeter wave (mmWave) systems, which are expected to implement self-backhauled cellular networks with very high data rates, relying on carrier frequencies between 10-300 GHz, channels with a very large bandwidth, and a large number of antenna elements, even in mobile devices.

We adapt mmWave channel propagation, antenna array and link rate models to the classic throughput-optimality and Network Utility Maximization (NUM) scheduling framework for multi-hop networks. Directional antenna gains, transmission towards multiple destinations at once, and simultaneous reception of signals from multiple sources, are all new characteristics not featured in the existing NUM literature. We verify that the classic NUM convergence lemmas are still valid under this new set of constraints, and discuss a series of algorithms to achieve or approximate the performance of the optimal Maximum Back Pressure (MBP) solution. Finally, our analysis estimates the potential improvement in cellular network throughput capacity due to the integration of SDM/SDMA techniques and multi-hop.
\end{abstract}

\begin{IEEEkeywords}
5G, Millimeter Wave, Beamforming, Space Division Multiplexing, Dynamic Duplexing, Scheduling, Network Utility Maximization
\end{IEEEkeywords}

\section{Introduction}
\label{sec:introduction}

Millimeter wave (mmWave) frequency bands have been proposed to satisfy the increasing spectrum needs in cellular wireless networks beyond 4G standard generations. The untapped mmWave spectrum offers a $200\times$ increase in available bands, allowing channel bandwidths on the order of GHz. Additional gains can be achieved using highly directional antenna arrays with a large number of elements packed in a small form factor thanks to the short wavelength. However, propagation loss at mmWave frequencies suffers from at least a $20$~dB penalty in free space over current cellular microwave systems, and even more in harsh propagation environments e.g.,  due to little wall penetration and the attenuation of scattered reflections. The increased pathloss can be partially compensated for by the increased beamforming gain at the antenna arrays, but still the range is not expected to exceed $200$~m and additional Access Points (AP) need to be deployed to extend coverage around corners, walls or buildings \cite{rappaport2013,RanRapEr:14}. These APs are likely to be wireless back-haul Relay Nodes (RN) due to the prohibitive cost of providing fully wired connections to such ultra-dense small cells. 

Thus, there is a need to develop wireless network architectures with very narrow directional transmission antenna gains, multi-hop operation and sufficient routing flexibility to adapt to the density and heterogeneous conditions of 5G cellular deployments.
\iftoggle{TRport}{{\iftoggle{TRportMARK}{\color{red}}{}
  Furthermore, multi-hop  mmWave relaying is particularly attractive for the evolution of cellular architectures due to the fact that some conventional cellular Base Stations (BS) already rely on mmWave frequencies for backhaul, in the form of dedicated point-to-point out-of-band links that replace the usual wired backhaul connection. Therefore the addition of mmWave operation to the access section of the cellular system (between the BS and the users), brings all links to the same region of the spectrum and naturally gives rise to a multi-hop mmWave cellular network.
}}

\iftoggle{TRport}{{\iftoggle{TRportMARK}{\color{red}}{}
  Apart from out-of-band wireless backhauling, c}}{C}onventional cellular networks have traditionally operated single-hop topologies only, with the BSs transmitting directly to all the users. More recently, the Third Generation Partnership Project (3GPP) introduced a specification for relays in the Long Term Evolution (LTE) standard for 4G \cite{hoymann2012relaying,parkvall2011evolution}. However, this specification cannot achieve great benefits due to the rigid uplink (UL) and downlink (DL) separation that LTE uses to divide time and frequency resources \cite{fgomez2014improvedrelaying}. 
\iftoggle{TRport}{{\iftoggle{TRportMARK}{\color{red}}{}

  In LTE, the medium is divided in synchronized frames where each frame is divided in time in 10 subframes and each subframe is a time-frequency grid divided using Orthogonal Frequency Division Multiple Access (OFDMA). Frames are strictly divided between uplink (UL) and downlink (DL) following one of two options: in Time Division Duplex (TDD) UL and DL transmissions can only occur in different designated subframes using all the bandwidth; in Frequency Division Duplex (FDD), UL and DL can only make use of designated OFDMA subcarriers dividing each sub-frame in two sub-bands. In either case the resources dedicated to DL and UL are rigidly separated and constant for all the cells. The caveat of this rigid UL/DL separation is that even the simplest form of multi-hop (a 2-hop RN) is very difficult to accommodate. In theoretical RN models, simultaneous transmission towards the BS and users is exploited, but the LTE frame UL/DL separation forbids this and the advantages of relaying are severely reduced \cite{fgomez2014improvedrelaying}.
}}{}
Steps to make frame structures more flexible have been taken in more recent versions of the 4G standard \cite{YoB:12,Huang:14,3GPP36828}, which allow a flexible and user-specific configuration of subframes to DL/UL actions, enabling new ways to optimize relay scheduling.

The reason for the 
\iftoggle{TRport}{{\iftoggle{TRportMARK}{\color{red}}{}
  physical separation of all UL and DL channel resources in conventional 
}}{rigid UL/DL separation of
}
LTE is that the 4G model does not feature very directive transmission strategies (only up to 8 antennas at most), and scheduling any pair of UL and DL links at the same time results in too much interference. 
\iftoggle{TRport}{{\iftoggle{TRportMARK}{\color{red}}{}Combining this with the fact that the transmit power of BSs is much higher than users', it turns out that the radio of a BS receiving UL signals is absolutely incompatible with the existence of any BS transmitting in any neighbor cell or sector, and thus all cells in the system must be synchronized to perform the same DL/UL cycle. 
}}{}
Due to the fact that mmWave allows much more directive transmissions, a majority of such interference constraints are directly suppressed\iftoggle{TRport}{{\iftoggle{TRportMARK}{\color{red}}{}\
 by the mismatched pointing of the antenna arrays of the receiver and its potential interferers, whereas the few remaining few potential interferers (those in propagation directions similar to the desired transmitter) can be avoided with convenient scheduling (i.e. instead of a problem when there is any neighbor transmitting, in mmWave problems arise only if one specific neighbor is selected to transmit in a specific direction at a specific time).
}}{.} By reducing interference, there is no need for a universal UL/DL cycle coordination in mmWave, and 5G networks will be able to obtain new spatial multiplexing gains through the concept of \textit{Dynamic Duplex} \cite{russellDynamic,juanScheduling}\iftoggle{TRport}{{\iftoggle{TRportMARK}{\color{red}}{}, that is, scheduling an optimal set of transmissions that can feature both UL and DL at the same time on different locations. This relaxation of the scheduling optimization domain calls for the design of schedulers to maximize the performance of the network.
}}{.}

Another open issue is that single-hop cellular systems achieve a spatial multiplexing gain through Multi-User MIMO (MU-MIMO) techniques that take advantage of different users' channel matrices to allow receivers to simultaneously receive from several transmitters using Space Division Multiple Access (SDMA) or transmitters to simultaneously transmit to several receivers using Space Division Multiplexing (SDM). 
\iftoggle{TRport}{{\iftoggle{TRportMARK}{\color{red}}{}
  These techniques are based on algebraic techniques over the channel matrix, such as singular value decomposition, that allow to create different virtual channels on orthogonal or partially-isolated vectorial sub-spaces of the signal. These simultaneous transmissions can take the form of spatial multiplexing, when all signals in different subspaces are desired by the receiver, or interference suppression, when some of the subspaces carry a projected signal that does not have to be decoded and the signal processing works to actively reduce its impact on the desired signals. 
}}{}
Unfortunately, the design of such MU-MIMO techniques is often studied exclusively under the assumption of a single-hop cell forming a logical star topology with the BS at the center and every user directly connected to it. Therefore the effectiveness of MU-MIMO techniques needs to be evaluated in the context of multi-hop networking. On the one hand, the extent of the benefits of spatial multiplexing in MU-MIMO is not clear in a context where spatial diversity is already provided by the existence of multiple routes to the destination. On the other hand, there are very few results about multi-hop performance or optimal scheduling with a MU-MIMO physical layer in the multi-hop literature. 

The currently available body of work on optimal scheduling is dominated by ad hoc and sensor networks, that have been traditional niches for multi-hop architectures. These networks are characterized by low-complexity physical layers, making it extremely difficult to find results in the multi-hop literature that are readily compatible with the future mmWave technology \cite{6615900}.
\iftoggle{TRport}{{\iftoggle{TRportMARK}{\color{red}}{}
  On the contrary, two assumptions about the physical layer that are commonplace in multi-hop scheduling papers are very far from what is expected in mmWave.
\begin{enumerate}
 \item \textbf{Power control for fixed rates:} In sensor networks, the trade-off between battery and rate is usually resolved favoring of the former over the latter. Nodes adjust transmit power to the minimum necessary to reach the receiver, making received power constant in all links. Classic results on scheduling analysis exploit this property by modeling networks with normalized unit rate for all links. With this assumption, calculating throughput capacity is equivalent to counting the number of active links, and results on fundamental graph theory have a direct correspondence to throughput capacity measures. However, mmWave and cellular literature emphasizes high rates, whereas BSs usually have access to the power network. Received power and link capacity vary between links, and power allocation pursues a balance between maximizing the cell throughput and offering fair rates to all users.
 \item \textbf{Destructive collision model:} Since the earliest ALOHA protocols, interference has been canonically represented in most multi-hop models as the impossibility to recover either signal when two or more transmitters are active within range of the receiver. More flexible models introduce a limited physical layer awareness, in the sense that the Signal to Interference plus Noise Ratio (SINR) is measured and the model only declares a collision if SINR is below a threshold. However, even these models still view a collision as a fundamentally destructive event and force schedulers to avoid collisions as much as possible. In contrast, modern physical layer MU-MIMO techniques can receive multiple packets simultaneously and cancel the interference from undesired signals. This requires a new way of thinking in the design of network schedulers; one in which collisions are seen as fundamentally beneficial up to the point of the receiver's processing ability, and the scheduler must actively increase the number of simultaneous transmissions to maximize parallelism and spatial multiplexing. 
\end{enumerate}
}}

In this paper we revisit the scheduling framework of Network Utility Maximization (NUM) with Maximum Back Pressure (MBP) scheduling \cite{Tuto,juanScheduling}. This body of work studies the \textit{throughput capacity region} of the network, defined as the supremum rate region achievable in a network with given constraints, where depending on the specific constraints adopted the region may or may not correspond to a proper information-theoretic capacity region. 
%

\subsection{Related Work}
\label{sec:related}

\subsubsection{mmWave and MU-MIMO}

MmWave propagation and channel characteristics are studied in \cite{rappaport2013,RanRapEr:14,Rappaport2015}. Using these measurements, simulations are implemented to estimate mmWave cellular rates in a single-hop Urban Micro-Cell network in \cite{Akdeniz2013,Akdeniz2014}. Moreover, some mmWave signal sensing strategies to detect neighbors are reported in \cite{Barati2015}. Antenna array architectures for beamforming and multiple-user reception have been developed in many proposals such as \cite{Hur2013,Rappaport2014mimo,Samsung2014}; an exhaustive survey is provided in \cite{Kutty2015}. Moreover, abundant literature for large-array MIMO that is not necessarily specialized in mmWave can be found in \cite{Gesbert2007a,hoydis2011massive,Bjornson2016}. Some hardware issues of mmWave MIMO, namely the excess power consumption of Analog to Digital Converters (ADC), are tackled in \cite{Orhan2015,Mo2016,Abbas2016}.

\subsubsection{Multi-hop Scheduling}

Multi-hop scheduling was analyzed in \cite{Tass} to achieve stability in networks with single-hop traffic flows with fixed arrival rates, and random imperfect scheduling was first introduced. A congestion control technique is introduced in \cite{Kelly1997,Kelly1998} to achieve NUM by varying the traffic arrival rates, and the problem is generalized to multi-hop traffic flows in \cite{Eryilmaz2007}. Moreover, multiple extensions consider randomized power allocation \cite{ModianoPower}, QoS or delay \cite{ZhouDelay2012}, etc. A comprehensive survey may be found in \cite{Tuto}.

Some works \cite{russellDynamic,juanScheduling} have considered optimal scheduling in mmWave before, but \cite{russellDynamic} only considered centralized scheduling for a given tree topology, leaving the optimal routing/tree formation problem open. On the other hand, NUM was considered in \cite{juanScheduling} as we do, which deals with routing implicitly, but the physical layer considered in \cite{juanScheduling} did not allow SDM or SDMA, constraining spatial multiplexing and simplifying the physical layer and the scheduling problem. Nevertheless, thanks to the simplified physical layer, \cite{juanScheduling} performed an analysis of interference with fewer assumptions about the antenna array than ours.

The rest of this paper is organized as follows. Section \ref{sec:related} described some related literature. Section \ref{sec:model} describes mmWave channel, link, network and traffic models. Section \ref{sec:optimization} describes the NUM problem and the definition of MBP optimal scheduling. Section \ref{sec:approximations} describes some algorithms that can implement or approximate MBP. Section \ref{sec:numeric} provides numerical examples for a mmWave cell model and further discusses the properties of the NUM problem and the algorithms. Section \ref{sec:conclusion} concludes the paper.

\section{System Model}
\label{sec:model}
\subsection{mmWave SDM and SDMA Links}

\begin{figure*}[ht]
 \centering
 \includegraphics[width=.7\textwidth]{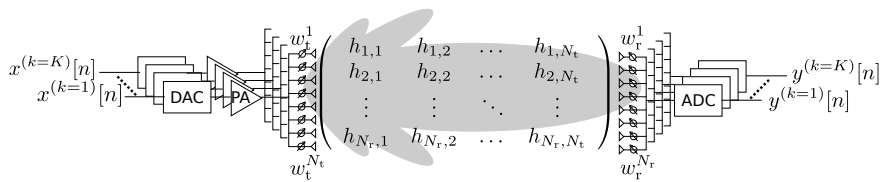}
 \caption{Analog SDM/SDMA scheme with $K$ independent transmitter-beamforming and $K$ independent receiver-beamforming signal ports. Both transmitter and receiver use each port $k\in[1,K]$ for an independent link with a different neighbor node.}
 \label{fig:analogBF}
\end{figure*}

All mmWave nodes are assumed to have arrays of $N_a$ antennas that make use of hybrid analog and digital processing techniques due to the high power consumption of ADCs, which makes full-digital MIMO difficult at high data rates \cite{Rappaport2014mimo}.
We consider the hybrid analog/digital Space Division Muliplexing/Multiple Access (SDM/SDMA) scheme represented in Fig \ref{fig:analogBF}. We assume all devices have $K\ll N$ transmission radio chains composed of DACs, power amplifiers and analog beamforming units connected to a common antenna array. Similarly, at the receive side all devices have a common antenna array connected to $K\ll N$ parallel analog beamforming units, each connected to one ADC signal port. The receiver can process signals from $K$ origins simultaneously, assigning each to one of the $K$ ports and adjusting the corresponding analog beamforming vector to the channel of the desired transmitter. The transmitter, too, can send $K$ independent signals, denoted $x_{n,1}[t]\dots x_{n,K}[t]$, to $K$ destinations at the same time, denoted by $m(n,1)\dots m(n,K)$. The transmitter $n$ must divide a total available transmit power $P_n$ among the different signals satisfying $\sum_{k=1}^{K}P_{n,m(n,k)}\leq P_n$ where $P_{n,m(n,k)}=\Ex{}{|x_{n,k}[t]|^2}$

We denote by $\mathcal{T}(m)$ the set of transmitters that have been assigned one of the $K$ receive signal ports of receiver $m$, and denote by $\mathcal{R}(n)$ the set of receivers that have been assigned one of the $K$ transmit signal ports of transmitter $n$. The port index number $k$ does not affect capacity, so for clarity we remove the index $(n,k)$ in our notation and denote the signal from transmitter $n$ to receiver $m$ as $x_{n,m}[t]$ and its power by $P_{n,m}$. In this paper we assume that $K$ is larger than the number of neighbors of all receivers $m$, i.e. all the nodes in the network whose signals can reach $m$ have been assigned one receive port. Therefore the set $\mathcal{T}(m)$ contains all the interferers for the signal received from $n$ at $m$, which we write as follows
\begin{equation}
 \begin{split}
  y_{n,m}[t]=\;&\w_{n, m}^{\mathrm{r}}\Hb_{n, m}\w_{n, m}^{\mathrm{t}}g_{n, m}x_{n,m}[t]
  +\sum_{\substack{j\in \mathcal{R}(n)\setminus m}}\w_{n, m}^{\mathrm{r}}\Hb_{n, m}\w_{n, j}^{\mathrm{t}}g_{n, m}x_{n,j}[t]\\
  &+\sum_{\substack{i\in \mathcal{T}(m)\setminus n}}\w_{n, m}^{\mathrm{r}}\Hb_{i, m}\w_{i, m}^{\mathrm{t}}g_{i, m}x_{i}[t]
  +\sum_{i\in \mathcal{T}(m)\setminus n}\sum_{\substack{j\in \mathcal{R}(i)\setminus m n}}\w_{n, m}^{\mathrm{r}}\Hb_{i, m}\w_{i, j}^{\mathrm{t}}g_{i, m}x_{i,j}[t]\\
  &+z[t],
  \end{split}
\end{equation}
where the first term is the desired transmission from $n$ to $m$, where $n$ transmits signal $x_{n,m}[t]$ with power $P_{n,m}$. The second term represents the auto-interference caused by $n$ transmitting other signals towards other receivers $\mathcal{R}(n)$, which arrive at $m$ with mismatched transmit beamforming vector, causing a residual interference. The sum transmitted power by $n$ in these first two terms adds to the power constraint $\sum_{j\in \mathcal{R}(n)}P_{n,j}\leq P_n$. The third term in the channel model represents the signals emitted towards $m$ by other transmitters $\mathcal{T}(m)$, which also leak as interference in the receive signal port of $m$ assigned to $n$ with a mismatched receive beamforming vector. The fourth term represents any other interference by transmitters that can reach $m$ ($i\in\mathcal{T}(m)\setminus n$) but are transmitting towards other destinations ($j\in\mathcal{R}(i)$) using both transmit and receive beamforming vectors mismatched on the link $n,m$. The fifth term is Additive White Gaussian Noise (AWGN) with power spectral density $N_0$. We denote by $g_{n, m}$ the macroscopic pathloss, and by $\Hb_{n, m}$ the normalized channel small scale fading matrix. Vector $\w_{n, m}^{\mathrm{r}}$ denotes the receiver beamforming at $m$'s port assigned to $n$; $\w_{n, m}^{\mathrm{t}}$ denotes the transmitter beamforming that $n$ applies to the signal transmitted towards $m$. Equivalent definitions apply to the vectors with subindex $i$ or $j$ for other transmitters and receivers in the same network.

We consider independent processing on each signal port of the receiver, and hence the receiver observes a series of $K$ scalar values $y_{n,m}$ for each transmitter $n\in\mathcal{T}(m)$. An improvement of this scheme could include the use of $K\times K$ digital MU-MIMO interference management techniques to remove terms 2 and 3. However, due to the severe shadowing, absortion and blockage in mmWave, it is expected that the channel matrices are low-rank and the beamforming gains in these mismatched interference terms is very small regardless. Some studies, such as \cite{Mudumbai2009}, propose considering mmWave links as ``pseudo-wires'',  where the effect of this interference is buried in noise and can be disregarded. Indeed, \cite{juanScheduling} compares NUM results with and without the pseudo-wired assumption, and finds the throughput capacity to be very similar in the so-called Interference-Free (IF) model and the so-called Actual-Interference (AI) model. Building on this observation, we conducted preliminary tests and found this pseudo-wired assumption holds. Embracing the pseudo-wired assumption we simplify the channel model with the approximation

\begin{equation}
\begin{split}
&\bigg|\sum_{\substack{j\in \mathcal{R}(n)\setminus m}}\w_{n, m}^{\mathrm{r}}\Hb_{n, m}\w_{n, j}^{\mathrm{t}}g_{n, m}x_{n,j}[t]\bigg|^2 +\bigg|\sum_{\substack{i\in \mathcal{T}(m)\setminus n}}\w_{n, m}^{\mathrm{r}}\Hb_{i, m}\w_{i, m}^{\mathrm{t}}g_{i, m}x_{i}[t]\bigg|^2\\
&\quad +\bigg|\sum_{\substack{j\in \mathcal{R}(n)\setminus m \\ i\in \mathcal{T}(m)\setminus n}}\w_{n, m}^{\mathrm{r}}\Hb_{i, m}\w_{i, j}^{\mathrm{t}}g_{i, m}x_{i,j}[t]\bigg|^2\ll |z[t]|^2\\
 &\quad\Rightarrow\quad y_{n,m}[t]\simeq \w_{n, m}^{\mathrm{r}}\Hb_{n, m}\w_{n, m}^{\mathrm{t}}g_{n, m}x_{n, m}[t]+z[t]
\end{split}
\end{equation}

For the calculation of the beamforming vectors, we assume that channel matrices remain constant for the duration of the scheduling algorithm, and that transmitters design the beamforming vectors to maximize the Signal to Noise Ratio (SNR) in the absence of interference.
\begin{equation}
 \w_{n, m}^{\mathrm{r}},\w_{n, m}^{\mathrm{t}}=\arg \max |\w^{\mathrm{r}}\Hb_{n, m}\w^{\mathrm{t}}|^2 \quad\textnormal{s.t.}\quad|\w_{n, m}^{\mathrm{r}}|^2=1,\quad|\w_{n, m}^{\mathrm{t}}|^2=1
\end{equation}
Since the channel is essentially static and beamforming does not depend on interference, a node $n$ can obtain the set of neighbors connected to it ($\Omega(n)$), and compute all the necessary beamforming vectors at the start of the scheduling process ($ \w_{m, n}^{\mathrm{r}},\w_{n, m}^{\mathrm{t}}, m \in \Omega(n)$). An example of mmWave neighbor detection scheme is provided in \cite{Barati2015}.

For an accurate model of mmWave propagation, we 
\iftoggle{TRport}{{\iftoggle{TRportMARK}{\color{red}}{}
  compute the macroscopic pathloss of each link with distance $d(n,m)$ in two steps. First, for each link, a state distribution is generated with three states: Outage (OUT), Line of Sight (LOS) and Non-LOS (NLOS). Second, the pathloss of the link is calculated depending on its state 
\begin{equation}
\begin{split}
 g_{n, m}(\textnormal{dB})&=\begin{cases}
		\infty & \textnormal{w.p. }p_{OUT}\\
		61.4+20\log_{10}(d(n,m))+ \log \mathcal{N}(0,5.8)& \textnormal{w.p. }p_{LOS}\\
		72.0+29.2\log_{10}(d(n,m))+ \log \mathcal{N}(0,8.7)&  \textnormal{w.p. }1-p_{OUT}-p_{LOS}
              \end{cases}\\
              &p_{OUT}=1-\min(1,e^{-0.0334d(n,m)+5.2})\\
              &p_{LOS}=(1-p_{OUT})e^{-0.0149d(n,m)}
\end{split}
\end{equation}
and the small scale fading matrix $\Hb$ is generated using the random geometric model described in \cite[Sec. III]{Akdeniz2014}
\begin{equation}
\Hb_{n,m}=\frac{1}{L}\sum_{k=1}^{N_c}\sum_{\ell=1}^{N_p}g_{k\ell}\ab_r(\theta_r^k+\theta_r^{\ell})\ab_t^{T}(\theta_t^k+\theta_t^{\ell})
\end{equation}
where $N_c\sim \textnormal{Poisson}(1.9)$ is the number of independent propagation ray clusters between nodes $n$ and $m$. Each of these clusters is composed of $N_p=20$ different and independent-amplitude scattered reflection rays. Each cluster characterizes a bundle of propagation paths that leave the transmitter with mean Angle of Departure (AoD) $\theta_r^{k}\sim U[0,2\pi)$ and arrive at the receiver with mean Angle of Arrival (AoA) $\theta_r^{k}\sim U[0,2\pi)$. Each path in the cluster has AoD and AoA slightly off of the cluster average, with root mean square angular spread $\theta_{RMS}\sim \textnormal{Exp}(10^o)$. The path angular variations are generated as wrapped Gaussians $\theta_t^{p},\theta_r^{p}\sim \textnormal{Wrapped}(\mathcal{N}(0,\theta_{RMS}))$. Finally, for each path in each cluster, the model generates an independent scalar random fading gain $g_{k\ell}\sim\mathcal{CN}(0,1)$ and a spatial signature vector for the antenna arrays that depends on the angles. For a linear array with elements separated half a wavelength, we have that both signatures are
$$\ab_\Box(\theta_\Box)=\frac{1}{\sqrt{N_\Box}}\left(0, e^{-j\pi\sin(\theta_\Box)},\dots, e^{-j\pi\sin(\theta_\Box) (N_\Box-1)}\right)^T$$
where the box $\Box$ represents that we can use this expression both for subindex $t$ and $r$.
}}{
generate $g_{n, m}$ and $\Hb_{n,m}$ using the random distributions defined in \cite[Sec. III]{Akdeniz2014}, which are omitted here due to space constraints.
}

We compute the beamforming gain in the direction of the desired link $n, m$ as $$G_{n,m}=|\w_{n, m}^{\mathrm{r}}\Hb_{n, m}\w_{n, m}^{\mathrm{t}}|^2,$$

With this channel model, the capacity of a link $n, m$ in a frame $t$ with duration $T_\mathrm{f}$ is

\begin{equation} \label{eq:rate}
c_{n , m}(t) = \alpha_1  T_\mathrm{f}W  \log \left( 1 + \alpha_2  \frac{P_{n, m}(t)  G_{n,m}   g_{n , m}}{W  N_0}  \right) \textnormal{ bits/frame}
\end{equation}

where $W$ is the system bandwidth, $P_{n,m}(t)$ is the power allocated by $n$ to transmit towards $m$, $N_0$ is the noise power spectral density, and the pathloss and beamforming gains are defined above. The two coefficients $\alpha_1,\alpha_2$ are bandwidth and power penalty factors introduced to fit  any specific practical physical layer of interest to the Shannon capacity curve, and are often obtained from empirical data. For illustration purposes, in our simulations we set these values to a $-3$ dB SNR penalty and no bandwidth penalty, i.e., $\alpha_2=0.5$ and $\alpha_1=1$.

\subsection{Network and Scheduling Model}
We represent the wireless network by the directed graph $\mathcal{G}(\mathcal{N},\mathcal{L})$, where $\mathcal{N}$ is the set of nodes (Base Stations, BS; Relay Nodes, RN; and User Equipment, UE), $\mathcal{L}$ is the set of links, and $\mathcal{F}$ is the set of traffic flows in the network, indexed by $n$, $\ell$ and $f$ respectively. We denote the cardinalities of these sets as $N$, $L$ and $F$. 

UEs can attach to as many RNs or BSs as they wish, we call the set of these two Access Points (APs), but no UE-UE connections are allowed. RNs, on the other hand, can communicate arbitrarily with any RN or BS. BSs are always connected to a wired backhaul which means that they do not need to connect wirelessly between them. Each node $n$ is aware of the set of neighbors connected to it, $\Omega(n)$, and the maximum degree of the graph is $\Omega_{\max}$. All devices have $K\geq \Omega_{\max}$ transmit and receive radio chains\footnote{This assumption improves notation clarity yet can be relaxed to accommodate networks where UEs have limited hardware as $K(n)\geq \Omega(n)$ where each node may have a different number of ports $K(n)$ and needs only to locally have more ports than neighbors. In normal scenarios $K(n)$ and $\Omega(n)$ are both consistently smaller in UEs with small arrays, and our model holds.}. However, radio stages are half-duplex in nature, and thus each device can either transmit to, or receive from, all its neighbors at once, but simultaneous transmission and reception are not possible. A potentially interesting extension that we will leave for future work is the case $1<K\leq \Omega_{\max}$, which would imply that not all neighbors can be received at the same time by some nodes. 
\iftoggle{TRport}{{\iftoggle{TRportMARK}{\color{red}}{}
  However, under such constraint, the scheduling problem described in Section \ref{sec:optimization} would not be separable and pose a much more difficult challenge. We believe a limit in the number simultaneous links could be modeled as a special type of link interference and included in a future extension of our results to a link model without the pseudo-wired assumption.
}}

For each node $n$ we define the boolean transmission role indicator $s_n(t)=1$ if node $n$ transmits at time $t$, and $0$ otherwise. Moreover, for each pair of nodes that form a link $\ell=(n,m)$, $n,m \in \mathcal{N}$, we define the normalized link power allocation $p_{n, m}(t)\in[0,1]$ to represent the fact that $n$ transmits towards $m$ with a fraction of its power equal to $p_{n, m}(t)=\frac{P_{n, m}(t)}{P_{n}}$. This means $n$ transmits to destination $m$ at time $t$ with power $P_{n,m}=p_{n, m}(t)P_{n}$. It is clear that to satisfy the half-duplex constraint $p_{n, m}(t)\leq s_n(1-s_m)$ (power must be zero if either $s_n=0$ or $s_m=1$). Moreover $\sum_m p_{n, m}(t)\leq1$ to satisfy the transmit power constraint at node $n$. 
\iftoggle{TRport}{{\iftoggle{TRportMARK}{\color{red}}{}
  Note that in an earlier version of this work \cite{gomezITAoptimal} we only considered SDMA but not SDM, and thus $p_{n, m}(t)$ was a binary indicator with values either 0 or 1; however, in this paper we generalize our results to multiple links per transmitter (SDM), allowing $p_{n, m}(t)$ to represent a real-valued fraction of power allocated by the transmitter to multiple receivers at once.
}}

We represent the state of all nodes in frame $t$ by the binary vector $\s(t)$ and we denote the power allocations for all links by the vector $\pp(t)$ with $p_{n, m}(t)$ in the $(n-1)N+m$-th index. We call the pair $(\s(t),\pp(t))$ \textit{a schedule} on the network. Note that in our terminology a \textit{schedule} $(\s(t),\pp(t))$ is the allocation for one frame $t$, and a \textit{scheduling policy} is the method that chooses all schedules $(\s(t),\pp(t))\;\forall t$. For each vector $\s(t)$, the set of all power allocations possible conditioned on $\s(t)$, $\Pi(\s(t))$, is continuous and convex and constrained by the half-duplex and power constraints. The set of all vectors $\s(t)$ is countable and contains all $2^N$ binary vectors of $N$ elements. We denote the set of all possible schedules in the network by $\Pi={\displaystyle \bigcup_{\substack{\forall \s(t)} }} \Pi(\s(t))$. Notice that given $p_{n,m}(t)=p_{m,n}(t)=0\;\forall m\in\Omega(n)$ the value of $s_n(t)$ is irrelevant to the evolution of the network, and in fact the only relevant elements of $\s(t)$ can be inferred by the nonzero elements of $\pp(t)$. We use the redundant notation $(\s(t),\pp(t))$ to conveniently highlight the separation of the problem in two sub-problems:
\begin{enumerate}[i)]
 \item the optimal power allocation over the set $\Pi(\s(t))$ with a given pre-selection of transmitter and receiver roles of the nodes $\s(t)$, and
 \item the selection of the optimal roles of each node $\s(t)$ given the ability to obtain an optimal power allocation for any given $\s(t)$.
\end{enumerate}

Next, we define the traffic features in the network. As we said above, there are $F$ flows. Each node $n$ maintains a separate queue for each flow $f$. We denote the number of packets in each queue by $q_{n}^f$. We denote by vectors $\q_n$, $\q^f$ and $\q$ the queue lengths of all flows at node $n$, the queue lengths dedicated to flow $f$ at all nodes, and all the queues of the network indexed as $n+N(f-1)$, respectively. For each flow $f\in\mathcal{F}$, we denote by $\mathcal{S}_f$ and $\mathcal{D}_f$ the sets of sources and destinations of packets (assuming there can be one or more of each\footnote{A node that is defined as a destination of a flow will withdraw from the network all packets of that flow that reach it, thus when $|\mathcal{D}_f|>1$ anycast traffic is implemented. Our model is not intended to support broadcast or multicast traffic.}). We define the number of packets produced by source $s$ for flow $f$ during the time frame $t$ as $a_s^f(t)$. When a packet of $f$ reaches a destination, it is removed from the network. We use the following definition to characterize the average packet arrival rate.
\begin{definition}
An \textit{elastic} packet arrival process associated with flow $f$ in source node $s\in \mathcal{S}_f$ is a stochastic process with a controllable time-varying mean arrival rate injected into the network $\lambda_s^f(t)=\Ex{}{a_s^f(t)}$, with a long-term mean arrival rate $x_s^f = {\displaystyle \lim_{T \to \infty} }\frac{1}{T} \sum_{t = 1}^{T} \lambda_s^f (t)$.
\end{definition}
We denote vectors $\ab(t),\lambdab(t),\x$ as the stacked packet arrival realizations, time-varying mean, and long-term average of the packet arrival processes, respectively. 

We recall that when $(\s(t),\pp(t))$ is fixed, the link capacities between any pair of nodes $c_{n , m}(t)$ are determined by \eqref{eq:rate}. In addition to transmitting at the same time to multiple neighbors, each transmitter has the ability to dedicate a fraction of the rate of each link to serve queues of different flows. Therefore, we denote by $c_{n,m}^f(t)$ the rate of link $n,m$ that node $n$ dedicates to serving the queue $q_n^f(t)$, where the assignment of rates to flows in a link must not exceed the link capacity $\sum_{f\in\mathcal{F}} c_{n,m}^f(t)\leq c_{n,m}(t)$ and the assignment over all links must not transmit more packets than there are in the queue $\sum_{m\in\Omega(n)} c_{n,m}^f(t)\leq q_n^f\;\forall f$.

Finally, as time evolves frame by frame, the evolution of each queue can be written as
\begin{equation}
 q_n^f(t+1)=\begin{cases}
 q_n^f(t)+{\displaystyle\sum_{m\in\Omega(n)}} \left[c_{m,n}^f(t)-c_{n,m}^f(t)\right]+a_s^f(t)& n\notin \mathcal{D}_f\\
             0 & n \in \mathcal{D}_f
            \end{cases}
\end{equation}

Where, if we remove the queues of flow destinations, which are always zero, we can write the previous expression summarizing the evolution of the system in matrix notation as
\begin{equation}
 \label{eq:qupdate}
 \q(t+1)=\q(t)+(\C^T(t)-\C(t))\one_{NF,1}+\ab(t)
\end{equation}
where link capacities are properly arranged in a matrix given by $\C(t):\{c_{n+N(f-1),m+N(f-1)}=c_{n,m}^{f}(t)\}$ with removed rows and columns that correspond to flow destinations that have always zero queues instead.

\section{Throughput and NUM Optimal Scheduling}
\label{sec:optimization}
\subsection{Problem Statement}
We formulate the scheduling problem as a NUM with constraints to guarantee network stability. For each flow, we define its utility function as a continuous non-decreasing function that attributes a value $\mathcal{U}^f(R^f)$ to the successful delivery of a data rate $R^f$ bits of flow $f$ to its destinations.

We say a queue is stable if it does not grow unbounded, i.e., $\lim_{t\to\infty} q_n^f(t)<\infty$ with probability $1$, and the network is stable if all queues are stable $\lim_{t\to\infty} |\q(t)|_1<\infty$ w.p.$1$. We define the stability rate region, also known as throughput capacity region, as follows
\begin{definition}
 The \textbf{throughput capacity region} $\x\in\Lambda$ is the set of long-term average rate vectors for which there exists a scheduling policy such that the network is stable.
\end{definition}
Note that $\Lambda$ defines a capacity region, because for any $\x\notin\Lambda$ the network is unstable and by definition a positive fraction of the arrival rates $\x$ will stall in the queues for an infinite time, never reaching the destination.

When the network is stable, the long-term average rates of packets leaving the network equal the long-term average rates of exogenous traffic arrivals to the network at the sources,  $R^f=\sum_{n\in \mathcal{S}(f)}x_n^f$, and the NUM problem takes the form
\begin{equation}
\label{eq:NUMproblem}
  \max_{ \x\in \Lambda} \sum_{f=1}^{F} \mathcal{U}^f \left(\sum_{n=1}^{N} x_n^f\right)
\end{equation}

\iftoggle{TRport}{{\iftoggle{TRportMARK}{\color{red}}{}
  Since $\Lambda$ is a capacity region, the NUM problem is the computation of a single point on the frontier of the capacity region of the network, such that a given function of equivalent value between rates of different users is maximized. 
}}{}
Particularly, linear utility maximizes the sum rate, whereas using a sub-linear function such as $\mathcal{U}(r)=\frac{1}{2}\log(r)$ produces throughput-fairness, and weighted functions can be employed to implement priority and Quality of Service techniques.

\subsection{Abstract Solution: Maximum Back Pressure with Congestion Control}

We say a scheduling policy is \textit{throughput optimal} if it makes the network stable for all $\x\in\Lambda$. From the definition, it follows that the solution to the NUM problem can be achieved in a network decoupling the selection of the arrival rates, $\x$, and the scheduling in the network, operated by a throughput optimal scheduler to guarantee stability independently of $\x$.

\begin{proposition} 
\label{pro:MBP}
The \textit{Maximum Back Pressure Scheduling} algorithm (Alg \ref{alg:gmbp}) is throughput optimal.
\begin{algorithm}
\small
\caption{MBP}
\label{alg:gmbp}
\begin{algorithmic}
\FORALL {$t$}
  \STATE {\begin{equation}\label{eq:MBP}\displaystyle  (\s(t),\pp(t))=\arg \max_{\substack{(\s(t),\pp(t))\\\pp(t)\in \Pi}} \sum_{n=1}^N\sum_{m=1}^N\max_f \underset{{w_{n,m}^f}}{\underbrace{c_{n,m}^f(t)(q_{n}^f-q_{m}^f)}}\end{equation}} 
  \STATE {\textbf{s.t. }}
  \STATE {$f_{n,m}^*=\arg\max_{f}(q_{n}^f-q_{m'}^f)$}
  \STATE {$\xi_{n,m}=\frac{c_{n,m}}{\displaystyle\sum_{m':f_{n,m}^*=f_{n,m'}^*} c_{n,m'}}$}
  \STATE {$c_{n,m}^f=\begin{cases}
		  \min(c_{n,m},q_{n}^f\xi_{n,m})& f=f^*_{n,m}\\
		  0& \text{otherwise}
		\end{cases}$}
\ENDFOR
\end{algorithmic}
\end{algorithm}
\end{proposition}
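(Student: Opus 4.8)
The plan is to follow the now-standard Lyapunov-drift argument for Max-Back-Pressure scheduling (as in \cite{Tass,Eryilmaz2007,Tuto}), checking that the new SDM/SDMA features---directional gains, one transmitter serving several receivers, one receiver served by several transmitters, and the power-sharing constraint $\sum_m p_{n,m}(t)\le 1$---do not break any step. First I would fix an arbitrary $\x$ strictly interior to $\Lambda$, so that there is an $\epsilon>0$ with $\x+\epsilon\one\in\Lambda$. By the definition of $\Lambda$ and a standard stationary-randomized-policy argument, there exists a schedule-selection distribution over $\Pi$ (the schedule set is $\bigcup_{\s}\Pi(\s)$, each $\Pi(\s)$ convex, the union giving a compact-enough set for the usual Carath\'eodory/Birkhoff-type averaging) together with a per-flow rate split $\{c_{n,m}^f\}$ obeying $\sum_f c_{n,m}^f\le c_{n,m}(t)$, whose expectation under that randomized policy satisfies the flow-conservation inequalities with margin $\epsilon$ at every non-destination node. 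This is the only place where membership of $\x$ in $\Lambda$ is used, and it goes through verbatim because the capacity region was \emph{defined} as the stabilizable region.

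Next I would introduce the quadratic Lyapunov function $\mathcal{L}(\q)=\tfrac12\sum_{n,f}(q_n^f)^2=\tfrac12|\q|_2^2$ and compute the one-step conditional drift using the queue update \eqref{eq:qupdate}. Squaring $q_n^f(t+1)=\big(q_n^f(t)+\sum_{m}[c_{m,n}^f(t)-c_{n,m}^f(t)]+a_s^f(t)\big)^+$ and using $(x^+)^2\le x^2$, the cross term produces exactly $\sum_{n,f} q_n^f(t)\big(\sum_m[c_{m,n}^f(t)-c_{n,m}^f(t)]+a_s^f(t)\big)$, and after reindexing the service terms (each $c_{n,m}^f$ appears once with $+q_m^f$ and once with $-q_n^f$) this becomes $-\sum_{n,m,f} c_{n,m}^f(t)(q_n^f(t)-q_m^f(t)) + \sum_{n,f} q_n^f(t)a_s^f(t)$. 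The square terms are bounded by a constant $B$ because all link rates are bounded (the rate formula \eqref{eq:rate} gives $c_{n,m}(t)\le \alpha_1 T_\mathrm{f}W\log(1+\alpha_2 P_n G_{\max} g_{\max}/(WN_0))<\infty$, and arrivals are assumed to have bounded second moment). Hence
\begin{equation}
\Ex{}{\mathcal{L}(\q(t+1))-\mathcal{L}(\q(t))\,\big|\,\q(t)} \le B - \Ex{}{\sum_{n,m}\sum_f c_{n,m}^f(t)(q_n^f(t)-q_m^f(t))\,\big|\,\q(t)} + \sum_{n,f} q_n^f(t)\lambda_s^f(t).
\end{equation}
The key observation is that Alg.~\ref{alg:gmbp} is precisely the choice of $(\s(t),\pp(t))$ and of the rate split $\{c_{n,m}^f(t)\}$ that \emph{maximizes} $\sum_{n,m}\sum_f c_{n,m}^f(t)(q_n^f(t)-q_m^f(t))$ over the feasible set: for any fixed schedule the inner optimization over the per-flow assignment is solved by pushing each link's capacity onto the single flow of largest positive differential backlog $f_{n,m}^*$ (with the proportional tie-breaking $\xi_{n,m}$ only affecting how a link's rate is apportioned when several links of the same transmitter target the same winning flow, which does not change the value $\sum_{n,m}\max_f w_{n,m}^f$), and then the outer $\arg\max$ over $\pp(t)\in\Pi$ is taken explicitly. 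Therefore the drift term under MBP is at least as negative as under the stationary randomized policy from step one, giving $\Ex{}{\mathcal{L}(\q(t+1))-\mathcal{L}(\q(t))\mid\q(t)}\le B-\epsilon\,|\q(t)|_1$, so that the Foster--Lyapunov / Tassiulas drift lemma yields $\limsup_{T\to\infty}\frac1T\sum_{t=1}^T\Ex{}{|\q(t)|_1}\le B/\epsilon<\infty$, hence stability, for every $\x$ in the interior of $\Lambda$; a boundary point is handled by the usual continuity/limiting argument or simply absorbed into the definition of throughput optimality.

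The main obstacle---and the only place the paper's novelty really bites---is the comparison step that replaces the scheduling decision of the arbitrary stabilizing policy by the MBP decision. One must verify that the optimization \eqref{eq:MBP} is actually over a superset of the decisions available to any other policy, i.e.\ that the feasible region $\Pi$ together with the per-flow splitting constraints $\sum_f c_{n,m}^f\le c_{n,m}$, $\sum_m c_{n,m}^f\le q_n^f$ is exactly the set of admissible actions, and that the $\arg\max$ is well defined (attained) over it. Attainment needs that for each of the countably many $\s(t)$ the set $\Pi(\s(t))$ is compact (it is closed and bounded by the power constraints) and that the objective is continuous in $\pp(t)$ through \eqref{eq:rate}; since only finitely many $\s(t)$ have nonzero objective value for a given $\q(t)$ (those compatible with the half-duplex constraint and with nonempty neighbor sets), the supremum over the countable union is a max. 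The separability noted in the model section---that $\Pi=\bigcup_\s\Pi(\s)$ and the half-duplex coupling $p_{n,m}\le s_n(1-s_m)$ decouple the role-selection from the power allocation---is what makes this a genuine (if large) finite-plus-convex program rather than something pathological, and it is exactly the place where the assumption $K\ge\Omega_{\max}$ is used: it guarantees that no additional port-count constraint intervenes, so that the action set has the clean product-over-receivers structure the drift comparison relies on. Once attainment and the ``MBP maximizes the drift term'' claim are in hand, the rest is the classical argument and I would cite \cite{Tuto,Eryilmaz2007} for the remaining routine details.
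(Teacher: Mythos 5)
Your proposal is correct and follows essentially the same route as the paper: the paper proves Proposition~\ref{pro:MBP} as the simplified special case (inelastic traffic, $\mathcal{L}_2\equiv 0$, no utility terms) of its appendix proof for PaC+NUM, which is exactly the argument you write out directly --- quadratic Lyapunov function on the queues, drift bound with bounded-rate/arrival second moments, comparison against a stationary randomized policy with $\epsilon$ slack obtained from $\x\in\Lambda$, the observation that Alg.~\ref{alg:gmbp} maximizes the back-pressure weight over $\Pi$ (with the max-differential flow selection), and the Foster--Lyapunov criterion. Your added remarks on attainment of the $\arg\max$ over $\Pi=\bigcup_{\s}\Pi(\s)$ and the role of $K\ge\Omega_{\max}$ are consistent with, and slightly more explicit than, the paper's treatment.
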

\begin{proof}
The proof is a variation of the proof in \cite{Eryilmaz2007}. The result for MBP is a simplified case of the proof of Proposition \ref{pro:PAC}. The details can be consulted in\iftoggle{TRport}{{\iftoggle{TRportMARK}{\color{red}}{}
  appendix \ref{app:PAC}.
}}{
 \cite{trportSDMAscheduling}.
}
\end{proof}

Throughput optimality alone is often studied on inelastic traffics with fixed arrival mean rates $\lambdab(t)=\lambdab=\x$. In such scenario, the sources must know $\Lambda$ beforehand in order to choose an $\x$ that maximizes \eqref{eq:NUMproblem}.  In order to achieve both stability and maximization of the NUM problem without a priori knowledge of $\Lambda$, it is more convenient to consider an adaptive Congestion Control solution with elastic traffic that adapts the values of $\lambdab(t)$ to the state of the queues of the sources (a subset of $\q$). This solution allows the CC to seek for the optimal $\x$ at run-time (i.e., the CC naturally evolves to the optimal $\x$ over a large number of frames). 

Let $\x^*$ be the exact solution to \eqref{eq:NUMproblem} in a network with scheduling operated by the MBP. Let us consider the following approximation of the NUM problem with a multi-objective optimization in both maximum utility and minimum queue lengths, weighted by an arbitrarily large scalar $V$
\begin{equation}
\label{eq:EPSproblem}
 \x^{V}=\arg\max_{\x} V \sum_{n,f}\mathcal{U}(x_{n}^{f})-\Ex{\q}{\q^T\x}\;
\end{equation}

\begin{proposition} 
\label{pro:NUMCC}
In a network with MBP scheduling and rates controlled by the Adaptive NUM CC algorithm (Alg \ref{alg:anumcc}), long-term rates converge to the solution of the approximate problem $\x^{V}$ and this solution is arbitrarily close to $\x^*$ as $V\to\infty$
\begin{algorithm}
\small
\caption{Adaptative NUM CC}
\label{alg:anumcc}
\begin{algorithmic}
\FORALL {$t$}
\STATE{$C_{\max}=\max_{n,m}(c_{n,m|p_{n,m}=1})$}
\STATE{\begin{equation}
\label{eq:cc}\lambda_n^f(t)=\begin{cases}
                        \max(\min(\dot{\mathcal{U}}^{-1}(\frac{q_n^f(t)}{V},C_{\max}),0) & n\in \mathcal{S}_f\\
                        0& \text{otherwise}
                       \end{cases}
\end{equation}}
\ENDFOR
\end{algorithmic}
\end{algorithm}
\end{proposition}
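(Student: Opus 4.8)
The plan is to run the standard Lyapunov drift-plus-penalty argument (the same machinery that underlies Proposition~\ref{pro:PAC}, cf.~\cite{Eryilmaz2007}), now coupling the scheduling decision with the congestion-control decision. I would take the quadratic Lyapunov function $L(\q(t))=\tfrac12\sum_{n,f}(q_n^f(t))^2$ and study the \emph{drift-plus-penalty}
\[ \Delta_V(\q(t)) \;=\; \Ex{}{L(\q(t+1))-L(\q(t))\mid\q(t)}\;-\;V\,\Ex{}{\textstyle\sum_{n,f}\mathcal{U}(\lambda_n^f(t))\mid\q(t)}. \]
Expanding the square with the queue update~\eqref{eq:qupdate} and writing $d_n^f(t)=q_n^f(t+1)-q_n^f(t)$, the second-order term $\tfrac12\sum_{n,f}(d_n^f(t))^2$ is bounded by a constant $B$ (link rates are at most $C_{\max}$ by~\eqref{eq:rate} with $p_{n,m}=1$, each node has at most $\Omega_{\max}$ out-links, and injections are capped at $C_{\max}$ by the $\min(\cdot,C_{\max})$ in~\eqref{eq:cc}); a flow-conservation relabelling of the first-order term together with $\Ex{}{a_s^f(t)\mid\q(t)}=\lambda_n^f(t)$ then yields
\[ \Delta_V(\q(t)) \;\le\; B \;-\;\Ex{}{\textstyle\sum_{n,m,f}c_{n,m}^f(t)\big(q_n^f(t)-q_m^f(t)\big)\mid\q(t)} \;+\;\Ex{}{\textstyle\sum_{n,f}\big(q_n^f(t)\lambda_n^f(t)-V\mathcal{U}(\lambda_n^f(t))\big)\mid\q(t)}. \]

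The key observation is that MBP and the CC rule are exactly the per-frame minimizers of the two controllable terms on the right. The scheduling/splitting term $-\sum_{n,m,f}c_{n,m}^f(q_n^f-q_m^f)$ is minimized over all feasible $(\s(t),\pp(t))\in\Pi$ and flow allocations $\{c_{n,m}^f\}$ precisely by~\eqref{eq:MBP}: the inner $\max_f$ selecting $f_{n,m}^*$ together with the $\xi_{n,m}$ weighting solves $\max\sum_{n,m,f}c_{n,m}^f(q_n^f-q_m^f)$ subject to the link-capacity and per-queue constraints. The rate-control term decouples across $(n,f)$, and for each $(n,f)$ the scalar concave map $\lambda\mapsto V\mathcal{U}(\lambda)-q_n^f\lambda$ on $[0,C_{\max}]$ is maximized at the clipped value $\min(\max(\dot{\mathcal{U}}^{-1}(q_n^f/V),0),C_{\max})$, i.e.\ at~\eqref{eq:cc}. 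Hence under MBP$+$CC the drift-plus-penalty is no larger than that produced by any other schedule/rate pair, in particular by any queue-independent stationary randomized policy.

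Next I would substitute such a reference policy. Using the characterization of $\Lambda$ established for Proposition~\ref{pro:PAC} --- every rate vector strictly interior to $\Lambda$ is supported by a stationary randomized schedule with a uniform positive flow-conservation slack --- I would take the policy supporting $(1-\eta)\x^*$ with injections $\tilde\lambda_n^f=(1-\eta)x_n^{\ast f}$, obtaining $\Delta_V(\q(t))\le B - V\big(\sum_{n,f}\mathcal{U}(x_n^{\ast f})-\delta(\eta)\big)-\epsilon(\eta)\sum_{n,f}q_n^f(t)$ with $\delta(\eta),\epsilon(\eta)\to0$ as $\eta\to0$ (by continuity/concavity of the $\mathcal{U}^f$). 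Taking expectations, summing over $t=1,\dots,T$, dividing by $T$ and letting $T\to\infty$ yields (i) a finite bound on $\limsup_T\tfrac1T\sum_t\Ex{}{\sum_{n,f}q_n^f(t)}$, so the backlogs are strongly stable and the network is stable, and (ii) $\liminf_T\tfrac1T\sum_t\Ex{}{\sum_{n,f}\mathcal{U}(\lambda_n^f(t))}\ge\sum_{n,f}\mathcal{U}(x_n^{\ast f})-\delta(\eta)-B/V$. Applying Jensen's inequality to each concave $\mathcal{U}^f$ with the long-term averages $x_n^{V,f}=\lim_T\tfrac1T\sum_t\lambda_n^f(t)$ and then letting $\eta\to0$ shows that $\x^V$ is stabilizable, matches the penalized optimality conditions of~\eqref{eq:EPSproblem} (marginal utility equals expected backlog/price), and attains objective value of~\eqref{eq:NUMproblem} within $B/V$ of the optimum; hence $\x^V\to\x^*$ as $V\to\infty$.

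I expect the main obstacle to be the reference-policy step: one must verify that, under the SDM/SDMA constraint set (continuous convex per-state power regions $\Pi(\s)$, simultaneous transmission to several destinations and reception from several sources, per-node half-duplex), the throughput region $\Lambda$ still admits the usual convex-hull / flow-decomposition description, so that Carath\'eodory-type time-sharing produces stationary randomized schedules with strictly positive slack arbitrarily close to the boundary of $\Lambda$. This is precisely the structural fact that has to be established once inside the proof of Proposition~\ref{pro:PAC}; granting it, the remaining steps (drift expansion, per-frame optimality of MBP and of the water-filling-type CC update, and the telescoping) are routine.
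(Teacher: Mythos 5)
Your proposal is correct and follows the same core drift argument as the paper: a quadratic Lyapunov function of the queues, the $V$-weighted utility inserted as a penalty term, the observation that MBP and the clipped $\dot{\mathcal{U}}^{-1}(q_n^f/V)$ rule are the per-frame maximizers of the two controllable terms, and a comparison against a stationary randomized schedule supporting a rate vector inside $\Lambda$ with positive slack $\epsilon$, which the paper likewise justifies (by a short contradiction argument: if no convex combination $\C^V\in\mathrm{Co}(\mathcal{C})$ with slack existed, some source queue would grow unboundedly and the rate would not lie in $\Lambda$). The differences are in the finishing machinery and in scope. The paper's appendix proves the PaC version and treats MBP as the special case in which the second Lyapunov component $\mathcal{L}_2(t)$ (the squared gap between the realized schedule weight and the MBP weight) vanishes; it then concludes via the Foster--Lyapunov positive-recurrence criterion for the joint chain $\y(t)=(\q(t),\C(t))$, which is also what supplies the steady-state distribution implicit in the $\Ex{\q}{\q^T\x}$ term of the approximate problem \eqref{eq:EPSproblem}, and it reads the gap $\one^T\mathcal{U}(\x^*)-\one^T\mathcal{U}(\x^{V})\leq C_1/V$ directly off the drift inequality. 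You instead close with the Neely-style telescoping drift-plus-penalty bound and Jensen's inequality, obtaining strong stability of the backlogs and time-average utility within $B/V$ (plus the $\eta$-slack terms) of optimal; this buys a more self-contained, quantitative route to the $O(1/V)$ gap, at the cost of leaving the existence of the long-term rate limits and the identification of the achieved averages with $\x^V$ somewhat informal (the paper's recurrence argument is what makes that identification clean). You also correctly single out the reference-policy/capacity-region characterization under the SDM/SDMA and half-duplex constraints as the one structural fact that must be re-verified in this model; the paper handles exactly that point, briefly, inside the appendix rather than assuming it from the collision-model literature.
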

\begin{proof}
The proof is a variation of the proof in \cite{Eryilmaz2007}. The result for MBP is a simplified case of the proof of Proposition \ref{pro:PACNUM}. The details can be consulted in\iftoggle{TRport}{{\iftoggle{TRportMARK}{\color{red}}{}
  appendix \ref{app:PAC}.
}}{
 \cite{trportSDMAscheduling}.
}
\end{proof}

\subsection{Discussion}
Algorithm \ref{alg:gmbp} assigns a ``queue back pressure'' to each link as a function of the state of the network, where the back pressure is defined as the largest difference between queues of the same flow at the transmitter and the receiver.  By scheduling all compatible links with the MBP, the algorithm selects the links whose simultaneous transmissions lead to the steepest decrease in queue pressure aggregated across the network.
\iftoggle{TRport}{{\iftoggle{TRportMARK}{\color{red}}{}
  Roughly speaking, when a queue of a node grows it builds up pressure and the node eventually becomes a transmitter and sends out the accumulated packets to a neighbor with shorter queues. Since the destinations always have zero packets, eventually their neighbors build large queues and transfer the packets to the destination. 
}}{}
Rather than actively optimizing packet routing, the scheduler acts only to reduce queue length and stabilize the network. The average rate is guaranteed to the destinations in the long term by a simple law of conservation of traffic in a system with finite occupation.

The idea behind the congestion control is to design a multi-objective approximate optimization, introducing a penalty for queue length in the maximization objective. 
\iftoggle{TRport}{{\iftoggle{TRportMARK}{\color{red}}{}
  Alg \ref{alg:anumcc} reacts to queue lengths by reducing $\lambda_n^f(t)$ at the source nodes of each flow $f$ when the local queue grows, and increases the rate when the queue is small and the network is likely able to accept a higher rate. The utility function governs the response to queue length in a more qualitative sense, whereas the large scalar $V$ governs the strength of the reaction to changes in queue length. 
}}{}
Higher values of $V$ make the solution to the approximate problem, $\x^V$, approach the exact optimal rate distribution $\x^*$ better. However, increasing $V$ means CC tolerates longer queues, bringing the network closer to instability, making MBP require more frames to display its long-term statistic properties. 
\iftoggle{TRport}{{\iftoggle{TRportMARK}{\color{red}}{}
  In other words, the CC+MBP scheme eventually converges for any desired precision, but the more precision is demanded, the more frames it may take.
}}

There are multiple practical issues with Algorithm \ref{alg:gmbp}. First, there are no guarantees that any satisfactory rate is achieved in the short term (any given sequence of a few consecutive scheduling frames); only the result for a very large number of frames is guaranteed. This means that MBP cannot usually be used ``as is'' in the implementation of commercial devices, especially those with mobility where the topology of the network and channels change. Nonetheless, the NUM-MBP-CC framework is a remarkable tool whose value resides in enabling the study of the fundamental throughput capacity limit of a multi-hop wireless network. In this sense, the requirement of a very long number of frames to achieve the stability region $\Lambda$ calculated is not very different from the requirement of a very large codeword length in Shannon capacity analysis.

Moreover, MBP gives a definition of the links with the highest pressure, but no method is provided to select such links.
\iftoggle{TRport}{{\iftoggle{TRportMARK}{\color{red}}{}
  That is, as far as MBP is concerned, the method to obtain the solution \eqref{eq:MBP} is irrelevant, and could still be NP-hard such as an exhaustive search. 
}}{}
In the rest of this paper we address this issue by developing some algorithms that obtain sub-optimal feasible solutions to \eqref{eq:MBP} under the exact assumptions of our model, and some relaxations of our assumptions that allow the construction of exact optimal solutions to \eqref{eq:MBP}.

\begin{table}[b]
 \centering
 \caption{Algorithms considered in this paper}
 \label{tab:algorithms}
 \begin{tabular}{c|c|c|c c}
 MU-MIMO &  $\displaystyle \max_{\s}$& $\displaystyle \max_{\pp(t) s.t. \s(t)}$ &  Acronym & Distributed\\
 \hline
 \hline
None  			& \multirow{2}{*}{Exact Algorithms}&\multirow{3}{*}{N/A}	&MWM  \\\cline{1-1}\cline{4-5}
\multirow{2}{*}{SDMA only} 	& & &SFWBF  \\\cline{2-2}\cline{4-5}
				& \multirow{\iftoggle{TRport}{4}{3}}{*}{\parbox{2.5cm}{\centering Message Passing (Feasible Solution)}}&&SFWMP& \checkmark\\\cline{1-1}\cline{3-5}
\multirow{4}{*}{SDM/SDMA}	&&Waterfilling				&MFWMP & \checkmark \\\cline{3-5}
\iftoggle{TRport}{
  				&				&\multirow{4}{*}{Fixed Power}& \textcolor{\iftoggle{TRportMARK}{red}{black}}{MFWMPOP} & \checkmark \\\cline{4-5}%
				}
				&				&\iftoggle{TRport}{ &}{\multirow{3}{*}{Fixed Power}}	&MFWMPSP & \checkmark \\\cline{2-2}\cline{4-5}%
				& \multirow{2}{*}{Mixed-Integer Linear Program}& &MFWLINOP\\\cline{4-5}
				& 		 				&&MFWLINSP\\\cline{2-5}
				& Random Pick and Compare	&Waterfilling	&MFWPAC & \checkmark \\
 \end{tabular}
\end{table}

\section{Algorithms and Approximations to MBP}
\label{sec:approximations}

In this section we distinguish three strategies to implement \eqref{eq:MBP}.
\begin{enumerate}
 \item The first strategy consists in identifying scheduling algorithms that do not satisfy \eqref{eq:MBP} but still satisfy the throughput-optimality and NUM-optimality Propositions \ref{pro:MBP} and \ref{pro:NUMCC}. The quintessential application of this strategy is the random Pick and Compare (PaC) scheduling algorithm used frequently in NUM literature \cite{Tass}.
 \item The second strategy consists in implementing a heuristic based on Message Passing (MP). MP is an iterative optimization technique that is very well known in physical layer literature and often applied in signal processing techniques such as Turbo Codes. The heuristic converges to a feasible solution (which may be a local maximum) and in some problems optimality can be proven.
 \item The third strategy consists in applying additional constraints to the network and channel model until an exact optimization framework is applicable. We find that one of the issues with our model is that the optimal power allocation varies for every different value of $\s(t)$. Therefore, if we assume a non-optimal fixed power allocation independent of $\s(t)$, the NUM scheduling can be treated as a Mixed Integer Linear Program (MILP) to optimize $\s(t)$. 
\end{enumerate}

The complete list of algorithms covered in this paper is given in Table \ref{tab:algorithms}, along with some of their properties. Algorithms from previous work with no multiplexing \cite{juanScheduling} and with only SDMA \cite{gomezITAoptimal} are also listed.
The only random algorithm is PaC, whereas MP and MILP are deterministic. Both PaC and MP allow for a distributed implementation, whereas MILP is centralized. We implement optimal power allocation with the waterfilling algorithm for PaC and MP scheduling, whereas fixed power allocation is imperative for MILP. In addition, it is also possible to use MP with fixed power to obtain a heuristic with additional simplicity.

%
%
%

\subsection{Pick and Compare}

The optimality of MBP is achieved in a stochastic sense for a large number of frames, so an algorithm that selects suboptimal schedules would still be considered NUM-optimal as long as queues are stable and the long-term average rate approaches $\x^V$. This is exploited in \cite{Tass} to design the Pick-and-Compare algorithm, which employs only randomly selected schedules, a one-frame memory to store the previous schedule, and simple comparison operations. This algorithm achieves a throughput capacity that is indistinguishable from MBP in a sufficiently large number of frames, even though on each particular frame $t$ the selected schedule $(\s(t),\pp(t))$ is not necessarily the one which maximizes the queue pressure. PaC is detailed in Algorithm \ref{alg:pac}.

\begin{algorithm}
\small
\caption{PaC scheduling in instant $t$}
\label{alg:pac}
\begin{algorithmic}
 \STATE {Store previous schedule transmitter roles $\s(t-1)$}
 \STATE {Generate new random schedule transmitter roles $\tilde{\s}(t)$}
 
 \IF {$\max_{\pp(t)\in \Pi(\s(t-1))} \sum_{n=1}^N\sum_{m=1}^Nc_{n,m}(t)Q_{n,m}>\max_{\pp(t)\in \Pi(\tilde{\s}(t))} \sum_{n=1}^N\sum_{m=1}^Nc_{n,m}(t)Q_{n,m}$}
  \STATE {Repeat the previous schedule $\s(t)=\s(t-1)$}
 \ELSE
  \STATE {Use the new random schedule $\s(t)=\tilde{\s}(t)$} 
 \ENDIF
\end{algorithmic}
\end{algorithm}

The propositions for throughput-optimality and NUM-optimality can be extended to the PaC algorithm.

\begin{proposition}
\label{pro:PAC}
The \textit{Pick and Compare} algorithm (Alg \ref{alg:pac}) is throughput optimal.
\end{proposition}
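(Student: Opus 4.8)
The plan is to run a quadratic Lyapunov argument and to treat PaC as a bounded perturbation of MBP, reusing the drift computation that already underlies Proposition~\ref{pro:MBP}. Fix an arrival vector $\x$ in the interior of $\Lambda$ and take $L(\q(t))=\q(t)^T\q(t)$. From the queue recursion~\eqref{eq:qupdate}, and using that every link rate is finite (by~\eqref{eq:rate} with $P_{n,m}\le P_n$) and every per-frame arrival $a_s^f(t)$ is bounded, the one-frame conditional drift obeys
\begin{equation}
\Ex{}{L(\q(t+1))-L(\q(t))\mid\q(t)}\;\le\; B \;-\; 2\,\q(t)^T\Ex{}{\big(\C(t)-\C^T(t)\big)\one-\ab(t)\mid\q(t)}
\end{equation}
for a constant $B$ absorbing the second-order terms. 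The linear term is exactly the scheduling objective: $\q(t)^T(\C(t)-\C^T(t))\one=\sum_{n,m,f}c_{n,m}^f(t)(q_n^f-q_m^f)$, which under the greedy flow-to-link rule of Alg.~\ref{alg:gmbp} equals $\sum_{n,m}c_{n,m}(t)Q_{n,m}(t)$ with $Q_{n,m}(t)=\max_f(q_n^f-q_m^f)^+$ --- the quantity both MBP and the PaC test maximize. Write $W^{*}(t)$ for the optimum~\eqref{eq:MBP} reached by MBP, $W^{\mathrm{PaC}}(t)$ for the value reached by the schedule PaC actually uses in frame $t$, and $\Delta(t)=W^{*}(t)-W^{\mathrm{PaC}}(t)\ge0$.

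The core step is the \emph{randomization-with-memory} estimate $\sup_t\Ex{}{\Delta(t)}\le C<\infty$. Two facts combine. First, the freshly drawn role vector $\tilde{\s}(t)$ has support over all $2^N$ binary vectors, so $\Pr[\tilde{\s}(t)=\s^{*}(t)]\ge\delta>0$ for the MBP-optimal roles $\s^{*}(t)$; since Alg.~\ref{alg:pac} then maximizes the link-weighted rate over $\Pi(\tilde{\s}(t))$ and keeps whichever of the new and retained schedules scores higher under the \emph{current} queues, on that event $W^{\mathrm{PaC}}(t)=W^{*}(t)$, i.e.\ $\Delta(t)=0$. Second, each queue changes by at most a bounded amount per frame, so each $Q_{n,m}(\cdot)$ --- and therefore $W^{*}(\cdot)$ and the value under \emph{any fixed} schedule evaluated at the current queues --- changes by at most a constant $C_1$ between consecutive frames. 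Hence whenever PaC does not refresh, it retains its previous schedule and $W^{\mathrm{PaC}}(t)\ge W^{\mathrm{PaC}}(t-1)-C_1$ while $W^{*}(t)\le W^{*}(t-1)+C_1$, so $\Delta(t)\le\Delta(t-1)+2C_1$; with the reset to $0$ occurring with probability $\ge\delta$ each frame, $\Ex{}{\Delta(t)\mid\mathcal{F}_{t-1}}\le(1-\delta)(\Delta(t-1)+2C_1)$, and iterating gives the uniform bound with $C=2C_1(1-\delta)/\delta$.

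With this in hand, close the argument exactly as in the MBP case (the ``simplified case'' mentioned after Proposition~\ref{pro:MBP}): $\Ex{}{-2\q(t)^T(\C(t)-\C^T(t))\one\mid\q(t)}=-2W^{*}(t)+2\Ex{}{\Delta(t)\mid\q(t)}\le-2W^{*}(t)+2C$, so the drift is $\le B+2C-2W^{*}(t)+2\q(t)^T\lambdab(t)$. Throughput optimality of MBP guarantees a stationary randomized schedule whose mean service dominates $\x+\epsilon\one$ componentwise for some $\epsilon>0$; since $W^{*}(t)$ is the maximum link-weighted rate over all schedules, $W^{*}(t)\ge\q(t)^T(\x+\epsilon\one)$, and with $\lambdab(t)\equiv\x$ (inelastic case; the Ces\`aro average otherwise) we obtain $\Ex{}{L(\q(t+1))-L(\q(t))\mid\q(t)}\le B+2C-2\epsilon|\q(t)|_1$. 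This is a negative drift outside the bounded set $\{|\q(t)|_1\le(B+2C)/(2\epsilon)\}$, so Foster--Lyapunov (equivalently, telescoping the drift over $t=1,\dots,T$ and dividing by $T$) yields $\limsup_T\frac1T\sum_{t=1}^T\Ex{}{|\q(t)|_1}<\infty$ and almost-sure stability; hence every $\x\in\mathrm{int}\,\Lambda$ is stabilized by PaC, which is therefore throughput optimal.

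The step I expect to be delicate is the uniform bound on $\Ex{}{\Delta(t)}$: one must verify that ``keep the better schedule'' never lets the optimality gap grow by more than a constant per frame --- which rests on bounded per-frame queue increments, hence on bounded link rates and bounded arrivals --- and that the refresh probability $\delta$ is bounded away from $0$ uniformly in the queue state; once both are in place, the geometric-sum bound and the remaining Lyapunov bookkeeping are routine and parallel the proof of Proposition~\ref{pro:MBP}.
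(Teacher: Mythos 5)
Your geometric-contraction bound on the optimality gap is essentially the same mechanism the paper uses: its recursion $\Delta w(t+1)\le(1-\delta)\left[C_{\max}NF+\Delta w(t)\right]$, obtained from the reset event (the fresh draw attains the MBP weight with probability at least $\delta$), the bounded per-frame queue increments, and the keep-the-better comparison rule, is exactly your inequality $\mathbb{E}\left[\Delta(t)\mid\mathcal{F}_{t-1}\right]\le(1-\delta)\left(\Delta(t-1)+2C_1\right)$. The genuine gap is in how you feed this back into the queue drift. Under PaC the queue vector alone is not a Markov state: the retained schedule is part of the state, and $\Delta(t)$ is essentially a function of that joint state. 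Your iteration controls the \emph{unconditional} expectation $\mathbb{E}[\Delta(t)]$, but the step $\mathbb{E}[\Delta(t)\mid\q(t)]\le C$ that you substitute into the conditional drift does not follow from it: the queue states reached after long runs without a refresh are precisely those in which the carried-over schedule is badly mismatched, so conditioning on $\q(t)$ can inflate the gap without bound. Hence Foster--Lyapunov cannot be invoked with $L(\q)=|\q|^2$ and conditioning on $\q(t)$ only. This is exactly the difficulty the paper's proof is built around: it takes the joint state $\y(t)=(\q(t),\C(t))$ and augments the Lyapunov function with the squared-gap term $\mathcal{L}_2(t)=(\Delta w(t))^2$, so that the combined conditional drift, bounded by a constant minus $\epsilon|\q|_1$ minus $(\delta-\delta^2)\mathcal{L}_2(t)$ plus lower-order $\sqrt{\mathcal{L}_2(t)}$ terms, is negative outside a bounded set of the \emph{joint} chain, yielding positive recurrence.

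Your parenthetical fallback, telescoping the unconditional expected drift over $t=1,\dots,T$, is sound as far as it goes because it only needs $\sup_t\mathbb{E}[\Delta(t)]\le C$; but it delivers strong (mean) stability, $\limsup_T\frac{1}{T}\sum_{t}\mathbb{E}[|\q(t)|_1]<\infty$, not the almost-sure conclusion you assert, and the paper's stability definition ($\lim_{t\to\infty}q_n^f(t)<\infty$ w.p.~1) is obtained there through positive recurrence of the joint chain. To close the argument in that sense you must either run the drift on the joint state with the augmented Lyapunov function (the paper's route) or use a multi-step drift conditioned on the joint state. The remaining ingredients of your proposal --- the uniform refresh probability $\delta$ independent of the queues, the bounded per-frame change of $W^{*}$ and of the retained schedule's weight (valid here because the channel, hence the capacity of any fixed schedule, is static), and the final comparison of $W^{*}(t)$ against an interior point of $\Lambda$ --- all have direct counterparts in the paper's proof.
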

\begin{proposition}\label{pro:PACNUM}
In a network with PaC and Adaptive NUM CC (Alg \ref{alg:anumcc}), long-term rates converge to the solution of the approximate problem $\x^{V}$ and this solution is arbitrarily close to $\x^*$ as $V\to\infty$
\end{proposition}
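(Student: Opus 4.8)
The plan is to carry out the Lyapunov drift-plus-penalty argument of \cite{Eryilmaz2007}, with the schedule set $\Pi$ and the link-rate model \eqref{eq:rate} of Section \ref{sec:model} in place of their counterparts there; Proposition \ref{pro:PAC} will come out as the special case in which the utility term is dropped, and Propositions \ref{pro:MBP}--\ref{pro:NUMCC} as the special case in which PaC is replaced by the exact maximizer of \eqref{eq:MBP}. First I would fix the quadratic Lyapunov function $L(\q)=\tfrac12\|\q\|_2^2$ and, using the queue recursion \eqref{eq:qupdate} together with the fact that per-frame link rates are at most $C_{\max}$ and arrivals have bounded second moment, bound the one-frame conditional drift-plus-penalty $\Delta_V(t):=\Ex{}{L(\q(t+1))-L(\q(t))\mid\q(t)}-V\Ex{}{\ssum_{n,f}\mathcal{U}(\lambda_n^f(t))\mid\q(t)}$ by
\begin{equation*}
\Delta_V(t)\;\le\; B\;-\;\Ex{}{W(t)\mid\q(t)}\;+\;\Ex{}{\ssum_{n,f}\bigl(q_n^f(t)\,\lambda_n^f(t)-V\,\mathcal{U}(\lambda_n^f(t))\bigr)\,\big|\,\q(t)},
\end{equation*}
where $B$ is a finite constant and $W(t)$ is the scheduling objective of \eqref{eq:MBP} evaluated at $(\s(t),\pp(t))$. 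The congestion-control term is dealt with at once: the clipped assignment \eqref{eq:cc} is by construction the per-frame maximizer of $\ssum_{n,f}\bigl(V\mathcal{U}(\lambda_n^f)-q_n^f(t)\lambda_n^f\bigr)$ over $\lambda_n^f\in[0,C_{\max}]$, so Alg.~\ref{alg:anumcc} minimizes that last term exactly.

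The heart of the matter is the scheduling term $\Ex{}{W(t)\mid\q(t)}$. Let $g(\q):=\max_{(\s,\pp)\in\Pi}\ssum_{n,m}\max_f c_{n,m}^f(q_n^f-q_m^f)$ be the optimal weight at state $\q$; for MBP, $W(t)=g(\q(t))$ by definition. For PaC I would first establish
\begin{equation*}
\Ex{}{W(t)\mid\mathcal{H}_t}\;\ge\;\delta\,g(\q(t))\;+\;(1-\delta)\bigl(W(t-1)-K_1\bigr),
\end{equation*}
where $\mathcal{H}_t$ is the history up to frame $t$, $\delta>0$ lower-bounds the probability that the fresh random role vector $\tilde{\s}(t)$ coincides with a role vector supporting the maximizer of $g(\q(t))$, and $K_1$ bounds the change of a fixed schedule's weight between consecutive frames (finite because queue increments per frame are bounded and the weight is Lipschitz in $\q$). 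The bound $\delta\ge 2^{-N}$, uniform in $\q(t)$, is where our model enters: the role set is finite (Section \ref{sec:model}), the generator has full support on it, and for each fixed $\s$ the inner power allocation over the convex set $\Pi(\s)$ is a deterministic waterfilling, so the hit probability does not depend on $\q(t)$. Iterating this recursion exhibits $\Ex{}{W(t)}$ as a geometrically weighted average of the past $\Ex{}{g(\q(\tau))}$, so a multi-frame ($T$-slot) version of the drift bound shows that the time-average of $\Ex{}{W(t)}$ is within a finite constant of that of $\Ex{}{g(\q(t))}$; i.e.\ PaC inherits the MBP drift bound up to an additive $O(1)$ term absorbed into $B$.

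Finally I would sum $\Delta_V(t)$ over $t=1,\dots,T$, telescope $L$, use $L\ge0$, and divide by $T$. Invoking the standard interior-point characterization of $\Lambda$ in terms of $\Pi$ --- for every $\x$ interior to $\Lambda$ there is a stationary randomized schedule whose expected weight exceeds $\ssum_{n,f}q_n^f\lambda_n^f$ by $\epsilon\,\|\q\|_1$ --- gives negative drift whenever $\|\q(t)\|_1$ is large, hence stability for all $\x\in\Lambda$, which is Proposition \ref{pro:PAC}. Re-inserting the CC maximizer then yields $\tfrac1T\sum_t\Ex{}{\ssum_{n,f}\mathcal{U}(\lambda_n^f(t))}\ge\mathcal{U}^*(\x^{V})-B/V$, where $\mathcal{U}^*(\x):=\sum_f\mathcal{U}^f(\sum_{n}x_n^f)$ is the NUM objective of \eqref{eq:NUMproblem}, together with a queue bound of order $O(V)$; concavity of $\mathcal{U}$ and Jensen promote this to $\mathcal{U}^*(\x)\ge\mathcal{U}^*(\x^{V})-B/V$ for the realized long-term rate $\x$, and the comparison estimate of \cite{Eryilmaz2007} gives $\mathcal{U}^*(\x^{V})\to\mathcal{U}^*(\x^{*})$ as $V\to\infty$, which is the assertion of Proposition \ref{pro:PACNUM}. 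Propositions \ref{pro:MBP}--\ref{pro:NUMCC} are the $\delta=1$ specialization, in which the $O(1)$ PaC penalty disappears.

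The step I expect to be the main obstacle is the PaC weight-tracking lemma of the second paragraph: making rigorous that ``keep the better of the previous schedule and one fresh random schedule,'' iterated across frames with per-frame hit probability $\delta$, produces a time-average weight within $O(1)$ of the per-frame optimum $\Ex{}{g(\q(t))}$, despite $g(\q(t))$ itself being unbounded. This forces one to handle the coupling between the algorithm's one-frame memory and the queue process --- in particular to control how far $g(\q(t))$ can drop below the retained weight over the mixing window of the recursion --- which is precisely where the bounded per-frame queue increments and the Lipschitz dependence of the weight on $\q$ are used. Everything else (the quadratic-drift bookkeeping, the description of $\Lambda$ via $\Pi$, the CC maximization, and the $V\to\infty$ utility-gap estimate) is routine once the model of Section \ref{sec:model} is substituted into the framework of \cite{Eryilmaz2007}.
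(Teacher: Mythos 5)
Your proposal is correct in substance and rests on the same two pillars as the paper's appendix proof --- the uniform hit probability $\delta$ (your $\delta\geq 2^{-N}$) of drawing a role vector that supports the maximizer of \eqref{eq:MBP}, with the inner power allocation resolved deterministically by waterfilling, and the bounded per-frame change of queue-weighted schedule weights --- but it packages them along a genuinely different route. You keep a single quadratic Lyapunov function of $\q$ and absorb the PaC suboptimality through the weight-tracking recursion, iterated into a geometric average and fed into a $T$-slot drift-plus-penalty bound, so the PaC penalty enters as an additive $O(K/\delta)$ constant inside $B$; this is the classical Tassiulas-style time-average argument, and its delicate point is exactly the one you flag, namely controlling how far $g(\q(\tau))$ can lag $g(\q(t))$ over the $O(1/\delta)$ mixing window using the bounded queue increments and the Lipschitz dependence of the weight on $\q$. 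The paper instead augments the state to $\y(t)=(\q(t),\C(t))$ and the Lyapunov function to $\mathcal{L}_1+\mathcal{L}_2$ with $\mathcal{L}_2(t)=(\Delta w(t))^2$, the squared gap between the realized weight and the MBP weight; a one-slot drift computation gives $\mathcal{D}(t)\leq C_1+C_2+C_3-\epsilon|\q|_1+(2+C_4)\sqrt{\mathcal{L}_2(t)}-(\delta-\delta^2)\mathcal{L}_2(t)$, and a two-case comparison of $\sqrt{\mathcal{L}_2(t)}$ against $(2+C_4)/(\delta-\delta^2)$ shows the drift is negative outside a bounded set, so the Foster--Lyapunov criterion yields positive recurrence of the joint chain directly, avoiding any multi-slot bookkeeping at the cost of designing the composite Lyapunov function and the case split. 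Your route buys a more elementary, purely time-average statement (and makes the $O(1)$ degradation of PaC explicit); the paper's route buys a steady-state stability statement in one slot. The remaining ingredients --- the CC step \eqref{eq:cc} as the exact per-frame maximizer of the utility-minus-queue term, the $\epsilon$-slack randomized schedule associated with any $\x^{V}\in\Lambda$, the $O(1/V)$ utility gap, and the recovery of Propositions \ref{pro:MBP}--\ref{pro:NUMCC} by setting the gap term to zero (your $\delta=1$ specialization) --- coincide with the paper's.
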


\begin{proof}
The proof is a minor variation of the proof in \cite{Eryilmaz2007}, which is inspired to the original proposal in \cite{Tass}. The details can be consulted in\iftoggle{TRport}{{\iftoggle{TRportMARK}{\color{red}}{}
appendix \ref{app:PAC}.
}}{
\cite{trportSDMAscheduling}.}
\end{proof}

\subsection{Message Passing}

Message Passing (MP), also known as Belief Propagation (BP), is a family of distributed algorithms for optimization problems that can be separated in independent factors depending on different subsets of the variables. The min-sum algorithm is used in problems of the form
$$\min_{s_1,s_2\dots s_N}\sum_{a=1}^{N_a}f_{a}(\mathcal{S}_a)$$
where $s_1,s_2\dots s_N$ are the variables, $N_a$ is the number of independent factors, $f_{a}$, and $\mathcal{S}_a\subset \{s_1,s_2\dots s_N\}$ represents the subset of variables that have an effect on factor $f_a$.

The min-sum MP algorithm is represented in a ``factor graph'' which is bipartite, where the two types of vertices are variables and factors, and the edges are the connections between them (Fig. \ref{fig:factorgraph}). The MP algorithm is iterative, and at each stage estimates the cost of each variable through a repeated exchange of messages. At each iteration, all variables send a message to their factors with an estimation of their cost. The factors then compute an aggregated message that measures the cost of each factor as a function of its component variables. Next, the variables receive the factor message and update their estimation of their own cost, and the process is repeated until convergence is achieved. Usually convergence means that the cost or value of the variables stops changing between iterations, maybe within a tolerance range. 

\begin{figure}[!ht]
 \centering
 \includegraphics[width=0.35\columnwidth]{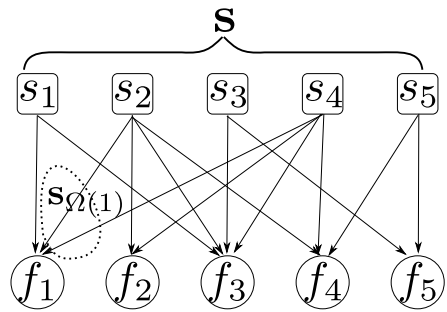}
 \caption{Message-Passing factor graph for a 5-variable 5-factor problem.}
 \label{fig:factorgraph}
\end{figure}

\iftoggle{TRport}{
  \begin{algorithm}
  \iftoggle{TRportMARK}{\color{red}}{}
  \small
  \caption{Min-Sum Message Passing}
  \label{alg:minsum}
  \begin{algorithmic}
  \STATE{$t=0$}
  \FORALL {Variable $s_i$}
    \STATE{Initial estimation of variable cost $C(s_i)^{0}$ as a function of $s_i$}
    \FORALL {Factor $f_a$ where $s_i\in\mathcal{S}_a$}
      \STATE{Initialize the message from $s_i$ to $f_a$, $M_{i,a}^{0}(s_i)$}
    \ENDFOR
  \ENDFOR
  \REPEAT
    \STATE {$t=t+1$}
    \FORALL {Factor $f_a$}
      \FORALL {Variable $s_i\in\mathcal{S}_a$}
	\STATE{Message from $a$ to $s_i$, $M_{a,i}(s_i)=\max_{\mathcal{S}_a\setminus s_i} f_a(\mathcal{S}_a)-\sum_{\mathcal{S}_i}M_{i,a}^{t}(s_i)$}
      \ENDFOR
    \ENDFOR
    \FORALL {Variable $s_i$}
      \STATE{New estimate $C^{t}(s_i)=\sum_{a=1}^{N_a}M_{a,i}^{t}(s_i)$ }
      \FORALL {Factor $f_a$ where $s_i\in\mathcal{S}_a$}
	\STATE{Message from $s_i$ to $f_a$, $M_{i,a}^{t}(s_i)=C^{t}(s_i)-M_{a,i}^{t}(s_i)$}
      \ENDFOR
    \ENDFOR
  \UNTIL{Convergence}
  \STATE{$\s^*\simeq \arg\max \sum C^{t}(s_i)$}
  \end{algorithmic}
  \end{algorithm}
}

Different variants of MP algorithms are discussed in \cite{yedidia2011message}. MP is not guaranteed to converge to a global optimum and its performance is application--dependent. There is a large collection of problems where it has been successfully applied as a heuristic, whereas there are fewer problems where its optimality, or even its convergence, is theoretically studied \cite{moallemi2010convergence}.

It must be noted that our MP implementation is one of those ``optimistic heuristic'' cases in which we do not study the theoretical characteristics of MP (although we have drawn rough inspiration from \cite{moallemi2010convergence}). Instead, we simply take the exact constraints of \eqref{eq:MBP}, build an MP-inspired heuristic with them, and observe its behavior. In our case, the MP convergence is not assured so we have added a hard maximum limit to the number of iterations and an ``oscillation detector'' that stops the algorithm when the proposed solution oscillates between two points, but the set of variables switches between two potential solutions rather than setting on one. We have observed empirically that such a two-solution oscillation occurs in our problem, and fixed the issue by storing and selecting the best of the two potential solutions when this occurs.

The major difference between our problem and the previous scheduling literature solving MBP, which limited the number of incoming links at each receiver, is that we can implement \eqref{eq:MBP} by separating the selection of node roles $\s(t)$ (transmitter or receiver) from the power allocation, that can be performed locally and independently at each transmitter, conditioned on $\s$, using the water-filling algorithm. We denote the back-pressure weight of each link and flow by $w_{n,m}^f=c_{n,m}^f(t)(q_{n}^f-q_{m}^f)$. We have that the single destination per transmitter constraint on $\pp(t)$ causes that if $c_{n,m}^f>0$ then $\forall m'\neq m,\; c_{n,m'}^f=0$. Separating the two problems, we note that for a fixed $\s(t)$ and choosing $Q_{n,m}=\max_{f}(q_{n}^f-q_{m}^f)$, then for any other flow $f'$ and power allocation $\pp'(t)$ that maximizes $w_{n,m}^{f'}$, the relation $w_{n,m}^{f'}<w_{n,m|\pp'(t)}^{f'}$ is satisfied. Thus, we can always replace $f'$ by $\arg\max_{f}(q_{n}^f-q_{m}^f)$ and increase the objective function. Conversely, if we fix $f$ and start with any power allocation $\pp'(t)$, then by definition its weight is less than or equal to the weight contributed by the optimum $\pp^*(t)$. Thus, we rewrite  \eqref{eq:MBP} as
 \begin{equation}
 \label{eq:MBPimplem}
  \max_{\s(t)}\max_{\pp(t)\in \Pi(\s(t))} \sum_{n=1}^N\sum_{m=1}^Nc_{n,m}(t)Q_{n,m}
 \end{equation}

We apply the MP framework to find a solution to the ``outer'' optimization over $\s$ in \eqref{eq:MBPimplem}, satisfying $s_n\in\{0,1\}$, $p_{n,m}\in[0,1]$, $p_{n,m}\leq s_n(1-s_m)$ and $\sum_m p_{n,m}=1 \;\forall n$, where the optimization over $p_{n,m}$ can be solved independently for each $n$ using the water-filling algorithm. This can be written as

\begin{equation}
 \max_{\s} \sum_{n} \underset{-f_n}{\underbrace{\max_{\pp(t)\in \Pi(\s(t))} \sum_{\Omega(n)} c_{n,m}(p_{n,m})\max_{f} (q_{n}^{f}-q_{m}^{f})}}
\end{equation}

Where the problem of finding $\s(t)$ is akin to a partition of the graph in two sets (transmitters and receivers) maximizing the weight on the links that connect a node in the first set to a node in the second. We represent this as an MP factor graph where for each node $n$ we consider one variable for its transmitter-receiver role, $s_n$, and one factor for the back pressure weight contributed by node $n$ when it is in the role of transmitter $-f_n(s_n,s(\Omega(n)))=\max_{m\in\{m\in \Omega(n) s_m=0\}} c_{n,m}(1)\max_{f} (q_{n}^{f}-q_{m}^{f})$.

We cannot guarantee that the MP algorithm is optimal in general. We have the following additional observations.

\begin{itemize}
 \item MP is always optimal if the factor graph contains no loops. However, due to the fact that links are bidirectional, the scheduling problem always contains loops. If $n,m$ are neighbors, there always exist two factors $f_n,f_m$ and two variables $s_n,s_m$ such as they form a loop in the factor graph.
 \item In our simulations we have observed that MP always behaved well in the sense that in some network scenarios it achieved stability, and in others the result oscillated between two feasible solutions that differed only in a small subset of the variables. Thus, we constructed a reasonable  MP heuristic by implementing an ``oscillation check'' to stop the algorithm, a memory of the previous two states of the MP solution, and a final selection of the better among the two oscillating potential solutions. 
 \item We have conjectured that this observation is related to the model in \cite{moallemi2010convergence}, but due to the fact that our factors do not satisfy the conditions for optimality, we do not believe a formal proof is a line of work that should be pursued in detail, because even if our conjecture were correct we would at best obtain a stable suboptimal algorithm without global optimality.
\end{itemize}

\iftoggle{TRport}{{\iftoggle{TRportMARK}{\color{red}}{}
  In order to compare our results in more detail with the MILP implementation described in the sequel, we have modified our MP proposition to also adopt fixed constant power allocations in all links. Different problems may be defined depending on the specific value of the fixed power constraint. We evaluate the stabilized MP scheme with two power values: an Over Powered scheme, where we consider that all links have the maximum power, ignoring the sum-power constraint at the transmitter, and a Split-Power scheme, where we consider that power allocation at each transmitter is suboptimal and equal across all links regardless of $\s(t)$.
}}

\iftoggle{TRport}{{\iftoggle{TRportMARK}{\color{red}}{}
  It must be noted that the use of constant power approximations in a MP scheme does not guarantee an upper bound of the network capacity, due to the fact that, even when stabilized with constant power, the MP distributed implementation only guarantees that a local maximum of \eqref{eq:MBP} is achieved. This means that, even when considering Over Powered links that exceed the per-transmitter power constraint, in some cases the scheduling algorithm could be stuck in local maxima and deliver rates much lower than capacity. The third type of algorithm that we introduce in the sequel is the one that provides the only verifiable upper bounds to network capacity using constant maximal power at the links, at the expense of centralizing the solution.
}}

\iftoggle{TRport}{{\iftoggle{TRportMARK}{\color{red}}{}
 In our simulations we developed the following heuristic variants of the MP MBP scheme.
}}

\iftoggle{TRport}{{\iftoggle{TRportMARK}{\color{red}}{}
  \subsubsection{Single Flow Weighted Message Passing (SFWMP)}\
  \
}}

In a previous work we introduced SDMA at the receivers without allowing transmitters to employ SDM \cite{gomezITAoptimal}. This replaces the power allocation with an easier destination selection subproblem. However, even after such modification the SFWMP algorithm is still suboptimal.

\iftoggle{TRport}{{\iftoggle{TRportMARK}{\color{red}}{}
  Thanks to the simplified receiver selection not considering SDM, we were also able to use brute-force search to obtain the exact MBP solutions in reasonable computation time. We compared the exact MBP solution obtained with brute-force and the MP heuristic, and showed that a) SFWMP does not always achieve the gains of MBP; b) Both SFWMP and MPB always achieve higher gains than Maximum Weight Matching (MWM), the baseline without SDMA the in literature; and c) The SFWMP can be considered a good heuristic algorithm that achieves an average 50\% of the SDMA gains, improving rate by a factor of 1.5.
}}

\iftoggle{TRport}{}{The MP algorithms contributed in this paper are the following:}

\subsubsection{Multiple Flow Weighted Message Passing (MFWMP)}
\

In this paper we contribute an MP algorithm for the MBP problem with multiple flows at once per transmitter using SDM with water-filling power allocation (MFWMP). In simulation we have observed that in some networks this implementation can achieve very high spatial multiplexing rate gains, outperforming SFWMP. Unfortunately, due to the fact that both are only heuristics, we have also observed some network topologies where SFWMP performs better than MFWMP. Essentially, not only is the increased flexibility of waterfilling power allocation not exploited by MP, but indeed the added complexity contributes to increasing its gap from the optimal scheduler.

\iftoggle{TRport}{{\iftoggle{TRportMARK}{\color{red}}{}
  We believe the MP algorithm with water-filling is still of interest, as the rate gain it achieves in the networks where it does well is remarkable. A potential line of future research would be the detection of network topologies well suited for this algorithm, and the use of a hybrid scheme that uses SDM only if the network topology is well conditioned, and falls back to SFWMP otherwise. 
}}

\subsubsection{Multiple Flow Weighted Message Passing with Split Power (MFWMPSP)}
\

We consider a sub-optimal constant solution to the power allocation problem consisting in reserving an equal fraction of the transmitter power for each of its links. If the neighbor is available to receive, the reserved power is assigned to that link, and otherwise the reserved power is never used. With such a sub-optimal static power allocation $p_{n,m}=\frac{1}{|\Omega(n)|}$, all the link capacities can be written as a constant of the form
$$c_{n , m} = \alpha_1  T_\mathrm{f}W  \log \left( 1 + \alpha_2  \frac{\frac{P_n}{|\Omega(n)|}  G_{n,m}   g_{n , m}}{W  N_0}  \right)$$
and with this we rewrite \eqref{eq:MBP} as
\begin{equation}
\label{eq:linMBP}
\begin{split}
  \s(t)&=\arg \max_{\s(t)\in \{0,1\}^N} \sum_{n=1}^N\sum_{m=1}^N s_n(t)(1-s_m(t))c_{n,m}\max_f(q_{n}^f-q_{m}^f)\\
	&=\arg \max_{\s(t)\in \{0,1\}^N} \s^T\C\Q\one - \s^T\C\Q\s
\end{split}
\end{equation}

Where for each pair of variables $s_n(t),s_m(t)$ the problem has two independent linear bivariate factors of the form
$$f^+(s_n(t))=\sum_{m}s_n(t)c_{n,m}\max_f(q_{n}^f-q_{m}^f)$$
$$f^-(s_n(t),s_m(t))=-s_n(t)s_m(t)c_{n,m}\max_f(q_{n}^f-q_{m}^f)$$
which gives an MP problem where each factor is the linear product of two variables and a constant cost. This MP algorithm with a bivariate factor problem always converged in our tests.

\iftoggle{TRport}{{\iftoggle{TRportMARK}{\color{red}}{}
  \subsubsection{Multiple Flow Weighted Message Passing with Over Power (MFWMPOP)}
  \
  Finally we consider the same bivariate scheme as above, but with a different constant power value in each link. We removing the sum-power constraint at each transmitter and set $p_{n,m}=1$ for all active transmitter-receiver pairs. By definition this creates a relaxed network model whose capacity region strictly contains the real network, and the NUM with this scheme upper bounds the real model. However, due to the fact that MP is a heuristic, the utility and rate achieved by the MFWMPOP algorithm may in some cases be suboptimal. This algorithm is of little interest except for the sake of completitude; the valuable insights with Over Powered fixed power allocation are produced by the MILP algorithm described in the next section.
}}

\subsection{Mixed Integer Linear Programming}

The third approach to implementing \eqref{eq:MBP} consists in modifying the network model constraints until the problem fits a MILP formulation. For this
\iftoggle{TRport}{{\iftoggle{TRportMARK}{\color{red}}{}
we consider the two fixed-power allocations described above: Split Power and Over Powered. 
}}{
we consider two different fixed-power allocations: First, we consider the Split Power (SP) allocation described above, which is a sub-optimal allocation by definition. Second, we consider an Over Powered (OP) model as in MP removing the sum-power constraint and setting $p_{n,m}=1$ for all active transmitter-receiver pairs, allowing every link to use the full power of its transmitter even when more than one simultaneous transmission is performed. By definition, when interference is negligible this creates a relaxed network model whose capacity region strictly contains that of the real network, and the optimal NUM with this scheme upper bounds the real model.
}
Under these modified constraints, the MILP enables the calculation of the exact optimal MBP scheduling solutions. This is at the expense of limiting our ability to study the transmitter power model, as well as removing our ability to implement a distributed algorithm. By considering OP links, we are also able to derive a theoretical upper bound to the network throughput capacity. The case of SP is also interesting, as it has a better scheduler than MP and allows to compute the supremum network utility achievable with any SP strategy.

\subsubsection{Multiple Flow Weighted LINear program with Over Power (MFWLINOP)}
\

The scheduling is more accurately represented in this algorithm directly by a binary power allocation per link $p_{n,m}\in\{0,1\}$, where we consider that there are no sum-transmit-power constraints at the transmitters, so all links may be active at once with full power. In addition, we modify the representation of the half-duplex constraint for this scenario. If necessary, $\s(t)$ can be inferred from the values of $\pp(t)$. Note that the representation introduced here is valid only because $p_{n,m}\in\{0,1\}$ takes integer values, and is not equivalent under the general power allocation model. We consider that each link, if active, achieves a constant capacity $C_{n,m}=c_{n,m}(p_{n,m}=1)$. Therefore, the achieved capacity may be directly represented as a product between the binary indicator and a constant as follows

\begin{equation}
\begin{split}
 \max_{\pp}& \sum_{n} \sum_{m\in \Omega(n)} p_{n,m}(t)C_{n,m}Q_{n,m}\\
 &s.t.\; p_{n,m}+\frac{1}{|\Omega(n)|}\sum_{m'\in\Omega(n)}p_{m',n}\leq 1 \forall n,m\\
 \end{split}
\end{equation}

The expression $\frac{1}{|\Omega(n)|}\sum_{m'\in\Omega(n)}p_{m',n}$ equals $0$ only if all neighbors of $n$ are not transmitting to $n$, and $1$ if all transmit. And since $p_{n,m}$ takes binary values we can either have $p_{n,m}=1$ or $\frac{1}{|\Omega(n)|}\sum_{m'\in\Omega(n)}p_{m',n}>0$, but not both. Thus the constraint serves as a linear substitute for the half-duplex rule ``node $n$ cannot transmit if any of its neighbors is transmitting to it''.

The MILP toolbox of MATLAB can be used to solve the problem above in a reasonable time, although these algorithms do not allow for a distributed implementation. Likewise, a practical deployment of this model would still be feasible in centralized-control networks such as LTE.

\subsubsection{Multiple Flow Weighted LINnear program with Split Power (MFWLINSP)}
\

In our notation we have established that $p_{n,m}=\frac{P_{n,m}}{P}$ is the normalized power allocated by transmitter $n$ to receiver $m$. In the Over Power relaxation, we use a binary indicator $p_{n,m}=1$ to represent that the transmitter allocates all its power to the receiver, i.e., $P_{n,m}=P$. In the Split Power scheme, the variable $p_{n,m}\in\{0,1/|\Omega(n)|\}$ is no longer a pure binary expression due to the sum-power constraint at the transmitter. In order to write the problem as a multiplicative binary linear expression, we require a new normalized variable, defined as $\overline{p}_{n,m}=|\Omega(b)|p_{n,m}$. Replacing the normalized variable, we can write the following modification to the OP model
\begin{equation}
\begin{split}
 \max_{\overline{\pp}}& \sum_{n} \sum_{m\in \Omega(n)} \overline{p}_{n,m}(t)c_{n,m}(p_{n,m}=1/|\Omega(n)|)Q_{n,m}\\
 &s.t.\; \overline{p}_{n,m}+\frac{1}{|\Omega(n)|}\sum_{m'\in\Omega(n)}\overline{p}_{m',n}\leq 1 \forall n,m\\
 \end{split}
\end{equation}
where the normalized binary indicator of power $\overline{p}$ is used in the half-duplex constraint, but not to calculate the constant capacity of the link $c_{n,m}(p_{n,m}=1/|\Omega(n)|)$. Anything else is resolved in the same manner.

Differently from the LINOP model, which upper bounds the capacity of the network, in the LINSP case what we obtain is a lower bound to the capacity of the network (because the scheduling is optimal but the power allocation is not) and simultaneously we obtain a supremum of the family of schedulers with SP allocation (thus outperforming MFWMPSP). 

\section{Numerical Analisis}
\label{sec:numeric}

We simulate a randomly generated picocell network as in the example in Fig. \ref{fig:topo}, with 10 UEs randomly distributed in a disk of radius $200$ m, and a BS at the center. Moreover, another four wireless RNs are placed at fixed locations at $115$ m from the BS with a $90^o$ rotation. We define the minimum connectivity requirement as a maximum omnidirectional (i.e., without beamforming) pathloss of $200$ dB. This threshold, inspired by \cite{Barati2015}, is selected for a minimal rate of $10$ Mbps when the BS transmits towards a UE, both transmitter and receiver beamforming gains are $30$dB, and the radio hardware parameters are those in Table~\ref{tab:param_tab}.

Finally, we assume two traffic flows for each UE: one uplink with source at the UE and destination at the BS, and one with source at the BS and destination at the UE. All exogenous arrivals apply the congestion control algorithm specified in \eqref{eq:cc}. To select the value of the congestion control tuning $V$ we set $V=10 C_{\max}^2$ where $C_{\max}$ is the CC maximum rate in Alg. \ref{alg:anumcc} as per the discussion in\iftoggle{TRport}{{\iftoggle{TRportMARK}{\color{red}}{}
  Appendix \ref{app:PAC}.
}}{
  \cite{trportSDMAscheduling}.
}

In our previous article \cite{gomezITAoptimal} we developed a side-by-side comparison of three algorithms: we considered the reference Maximum Weight Matching (MWM) from the literature \cite{juanScheduling} in comparison with SDMA scheduling implemented with MP and by Brute Force (SFWMP and SFWBF). In the current paper we have \iftoggle{TRport}{{\iftoggle{TRportMARK}{\color{red}}{}nine}}{eight} different algorithms and for the sake of clarity we cannot include all of them in every single figure. We will perform separate comparisons focused on different characteristics of the scheduling problem. For each characteristic, we will select the algorithms that highlight the main differences.

\begin{figure}
  \centering
    \includegraphics[width=.4\columnwidth]{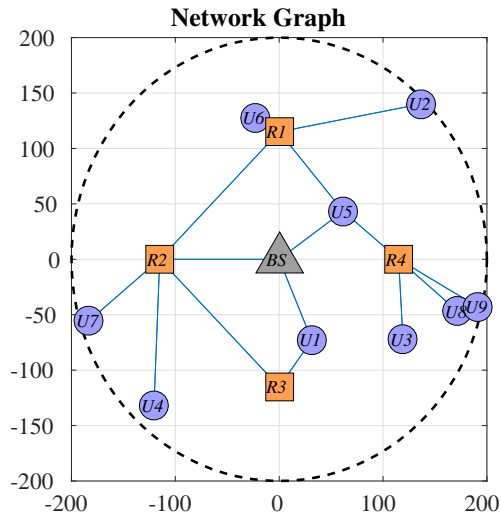}
  \caption{Picocell topology formed by all links with pathloss $\leq 200$ dB.}
  \label{fig:topo}
\end{figure}

\begin{table}[!t]
  \centering
   \caption{mmWave Channel Parameters}
   \label{tab:param_tab}
\begin{tabular}{r|l}
\textbf{Parameter} & \textbf{Values}                                          \\ \hline
Carrier Frequency  & 28 GHz                                                        \\
System Bandwidth   & 1 GHz                                                         \\
Transmission Power & 30 dBm (BS), 25 dBm (RN), 20 dBm (UE)                         \\
Noise Figure       & 5 dB (BS), 6 dB (RN), 7 dB (UE)                               \\
Antenna            & 8x8 (BS), 6x6 (RN), 4x4 (UE) $\lambda/2$  planar array \\
Connectivity       & Pathloss $< 200$ dB                                           \\
\end{tabular}
\end{table}

\subsection{Effect of Interference }

We begin by performing a sanity check on our assumption that interference is negligible as seen in Fig. \ref{fig:interfsanitycheck}. We reproduce the observations in \cite{juanScheduling} comparing the performance of mmWave networks under an interference free assumption and with an actual interference model. We conduct the test with the randomly generated network depicted in Fig. \ref{fig:topo} with a PaC scheduler for single flow without SDM/SDMA, which is the only scheduler that is properly defined in the actual-interference scenario \cite{juanScheduling}.

\begin{figure*}
  \centering
  \subfigure[PaC Single Flow, Interference Free]{
    \includegraphics[width=0.35\columnwidth]{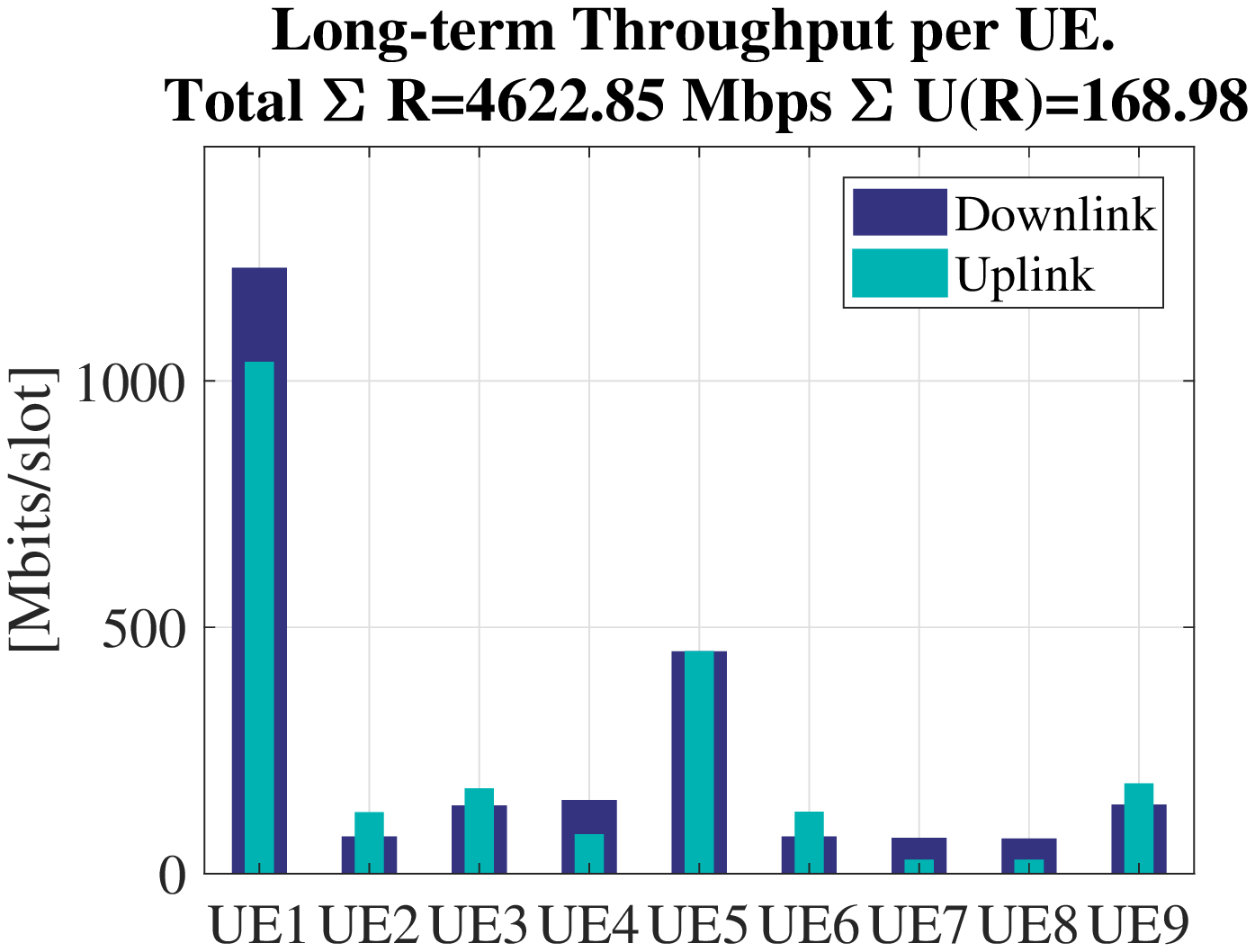}
  \label{fig:nointerf}
  }
  \subfigure[PaC Single Flow, Actual Interference]{
    \includegraphics[width=0.35\columnwidth]{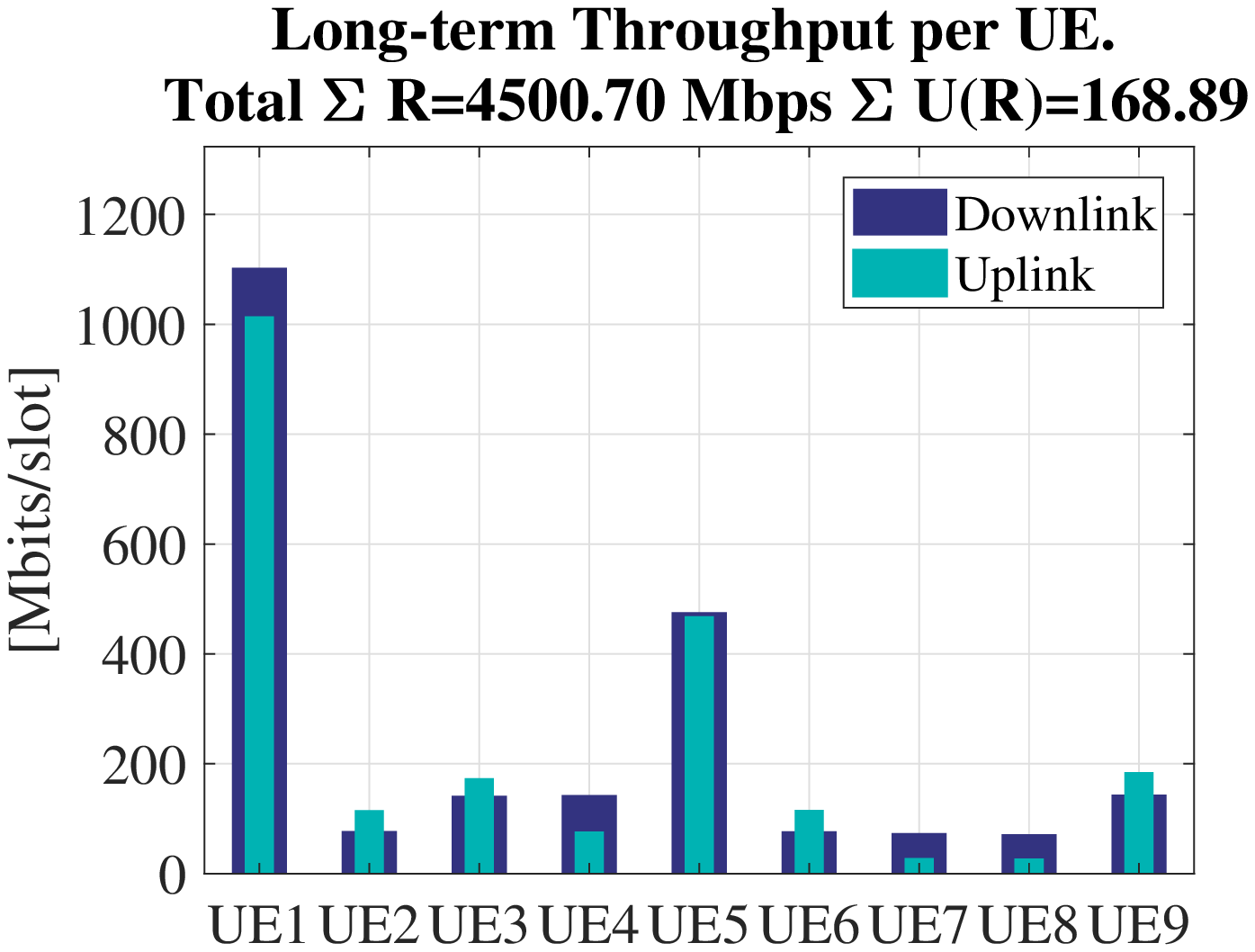}
  \label{fig:actualinterf}
  }
  \caption{Achieved user rates for the network in Fig. \ref{fig:topo} over $10^5$ frames with PaC scheduling for a single flow scheduler (no SDM/SDMA) assuming interference is negligible (Interference Free) vs same simulation with Actual Interference model as in \cite{juanScheduling}. The difference in rate results accounting for interference is minimal.}
  \label{fig:interfsanitycheck}
\end{figure*}
\subsection{MU-MIMO capabilities}

In traditional NUM literature, a one-to-one association is considered, where each transmitter can only select one destination and each receiver can only receive from one source at a time. With this constraint, all schedules are a ``matching'' of edges in the graph, and the optimal schedule can be obtained using the MWM algorithm with complexity $O(N^3)$, as in \cite{juanScheduling}. In contrast, in our previous article \cite{gomezITAoptimal} we proposed a scheme that enabled SDMA but not SDM, allowing a receiver to decode signals from multiple transmitters at once, but not the opposite. In the present paper, unlike in \cite{gomezITAoptimal}, we also allow SDM, and a transmitter can select multiple receivers at once. In Fig. \ref{fig:spatialmultiplexing} we illustrate one example schedule under each type of constraint. It must be noted that the schedules allowed under tighter constraints are a subset of the more relaxed spatial multiplexing constraints. 

\begin{figure*}
  \centering
  \subfigure[MWM]{
    \includegraphics[width=0.3\columnwidth]{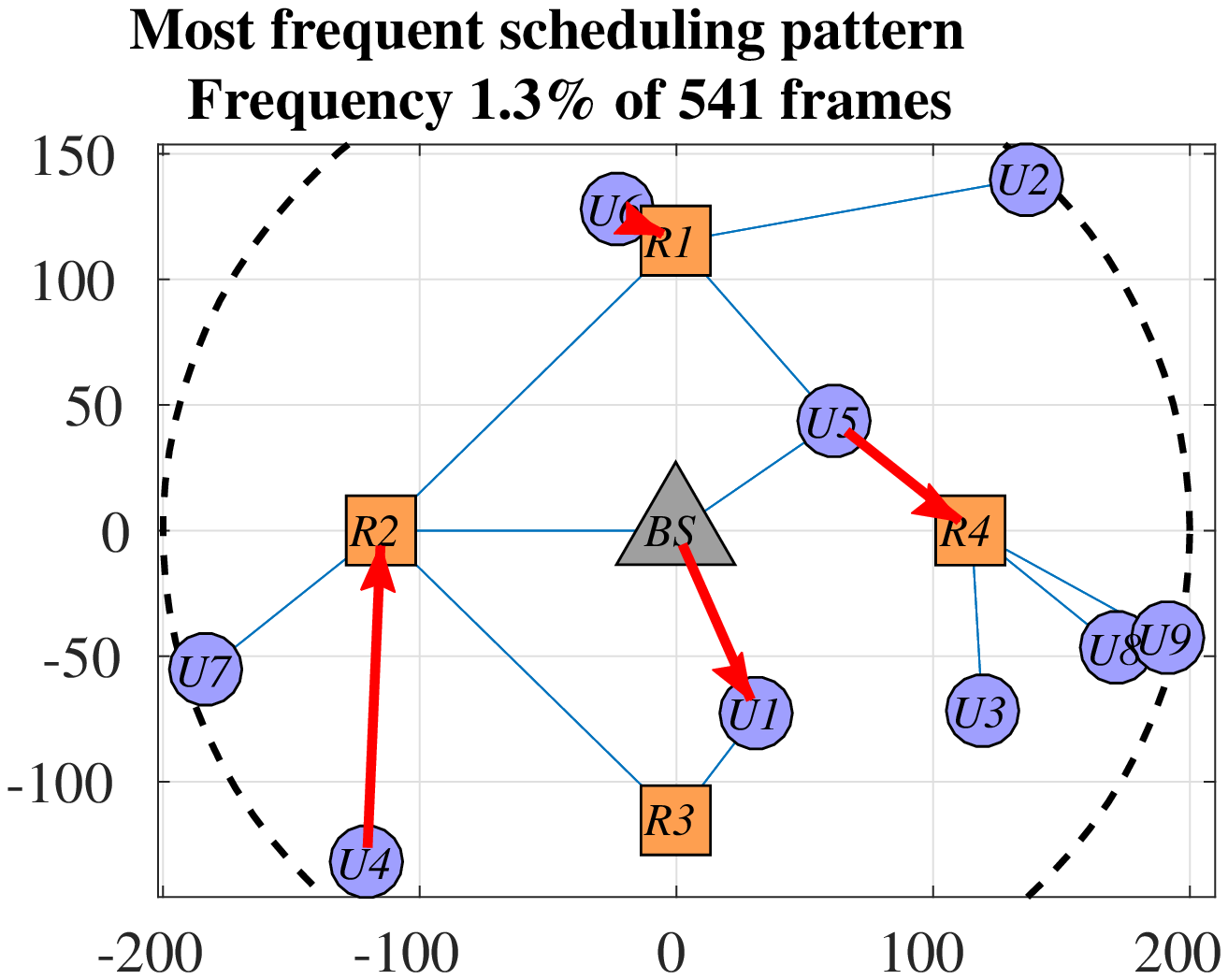}
  \label{fig:schedRef}
  }
  \subfigure[SFWMP]{
    \includegraphics[width=0.3\columnwidth]{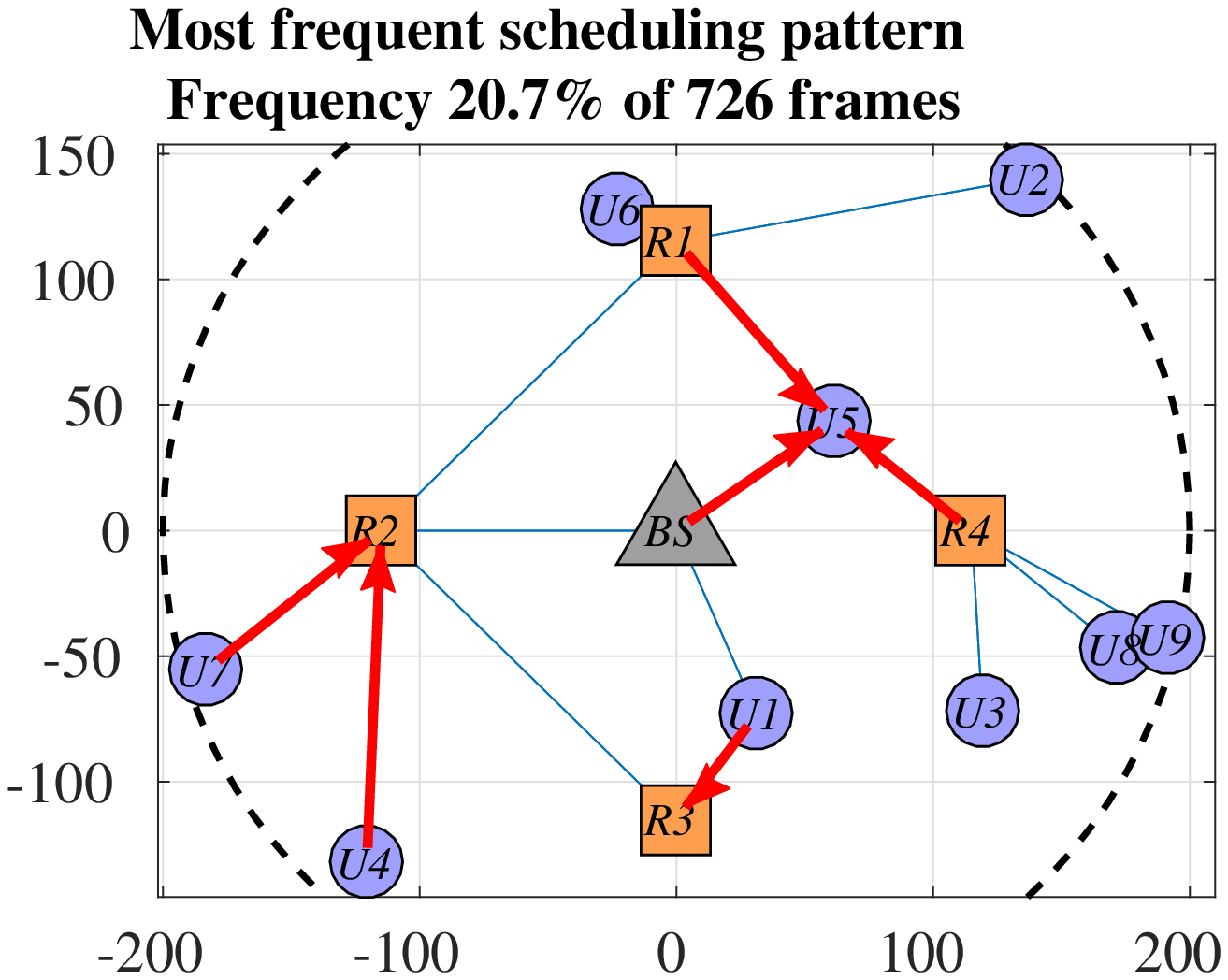}
  \label{fig:schedMP}
  }
  \subfigure[MFWMP]{
    \includegraphics[width=0.3\columnwidth]{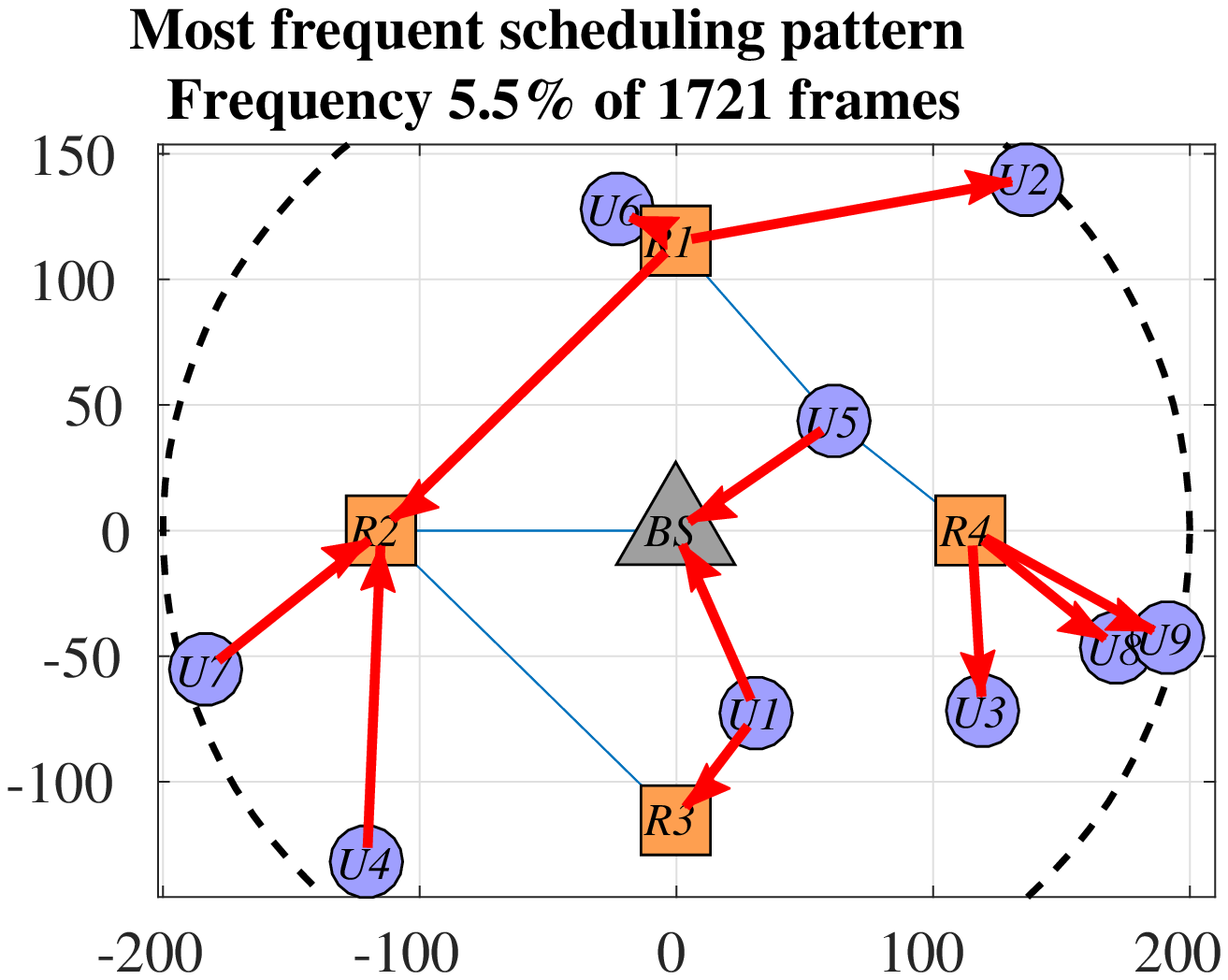}
  \label{fig:schedBF}
  }
  \caption{Example link schedules with different MU-MIMO capabilities.}
  \label{fig:spatialmultiplexing}
\end{figure*}

\iftoggle{TRport}{{\iftoggle{TRportMARK}{\color{red}}{}
  \subsection{Power Allocation}

  Considering only the MP algorithm, it can use different techniques to allocate power from each transmitter to its receivers. We represent in Fig. \ref{fig:powMP} the Network Utility and sum-rates achieved with Split Power, locally-optimal waterfilling, and Over Powered links for 10 random networks. In fig. \ref{fig:powMP} we observe that depending on the drop and the selected power allocation the MP heuristic may underperform: The ideal case is represented in drops like 3, where MP works well with all power allocation and we observe a clear progression consistent with the power model. On the other hand, in some cases like drop 4, the MP algorithm works better with some power allocations than others, and the consistent order of power allocation techniques is reversed.

  \begin{figure}
    \centering
    \subfigure[Utility]{
      \includegraphics[width=0.3\columnwidth]{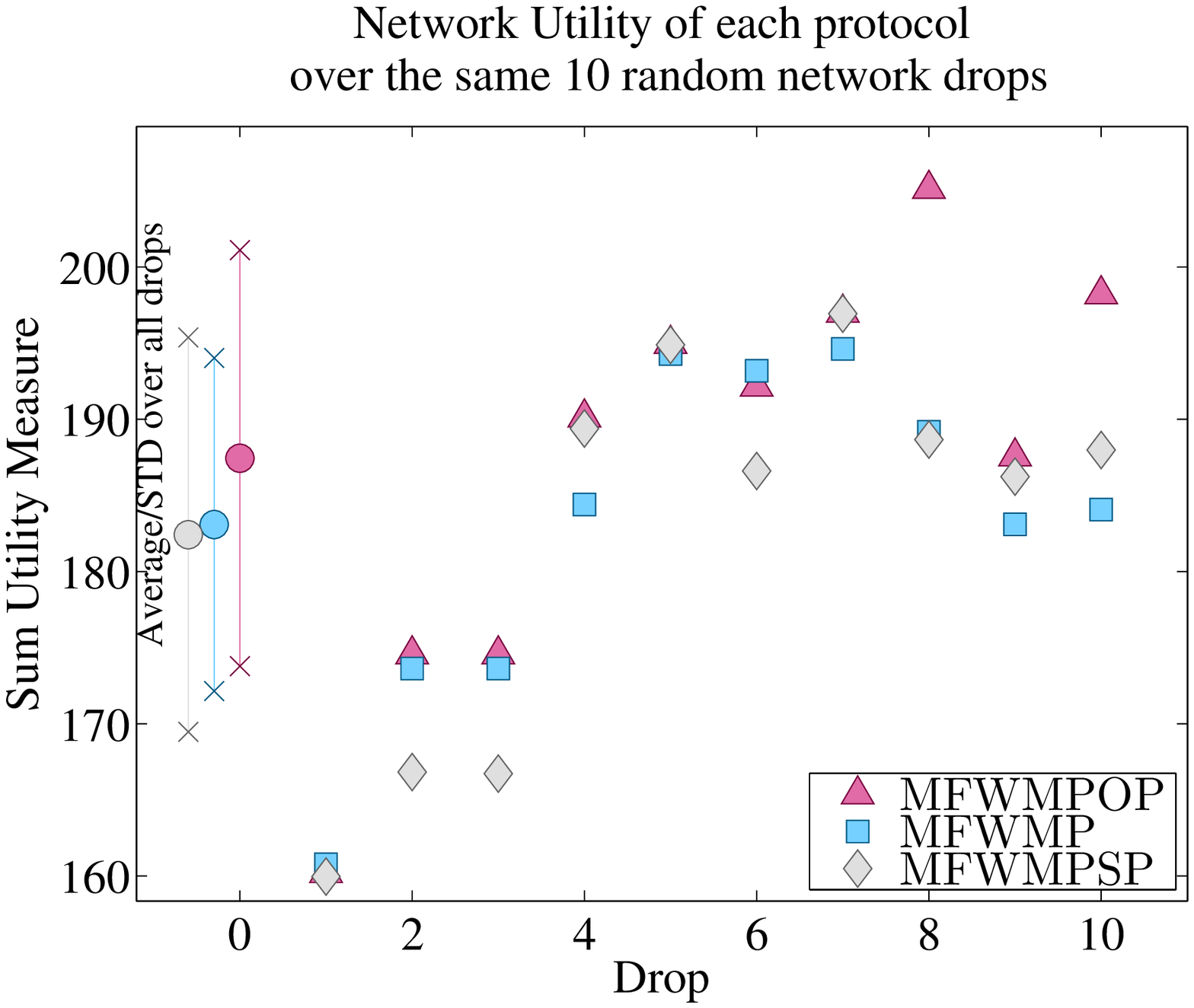}
    \label{fig:powMPutil}
    }
    \subfigure[Sum rate]{
      \includegraphics[width=0.3\columnwidth]{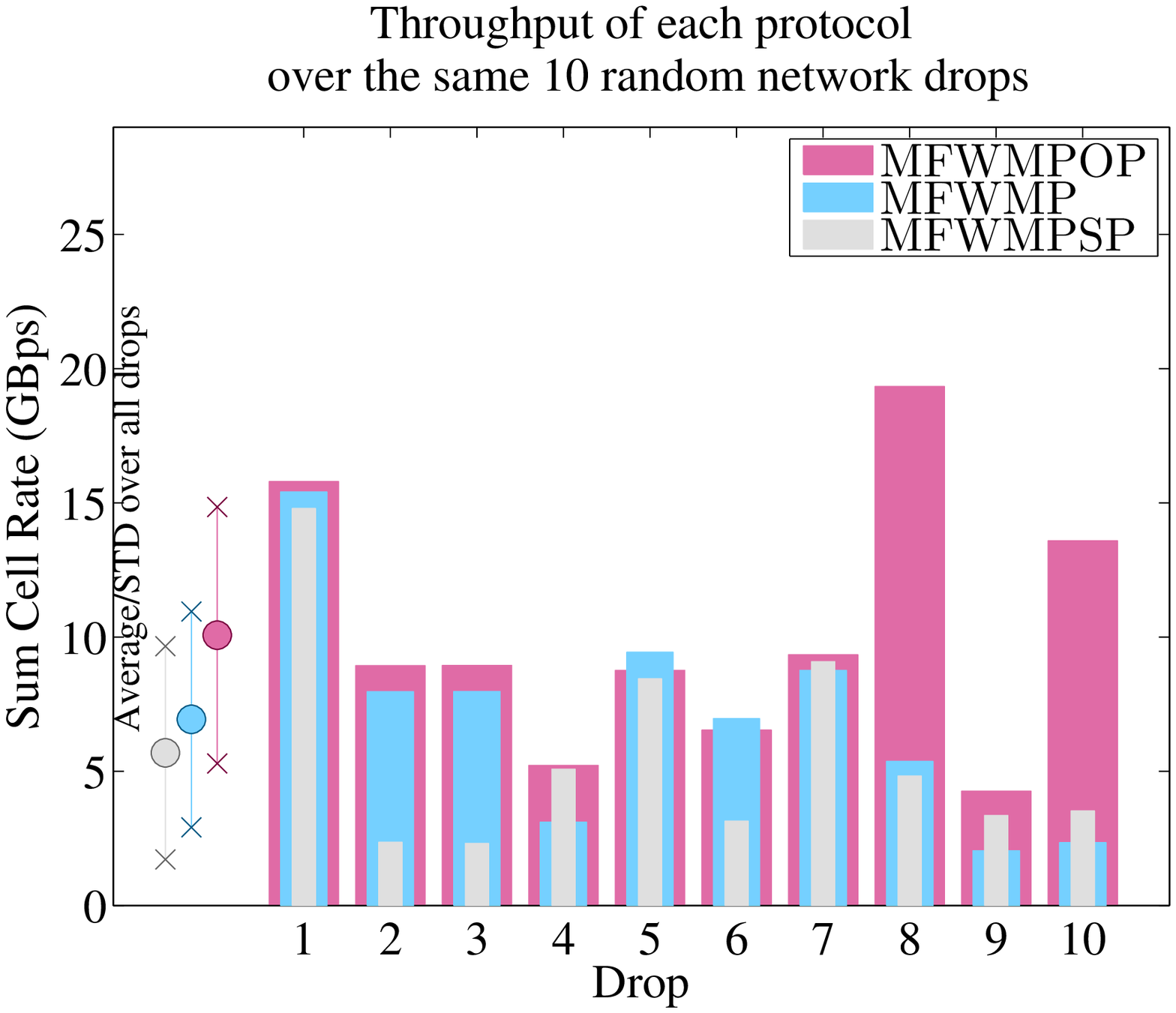}
    \label{fig:powMPrate}
    }
    \subfigure[Queues of drop 4]{
      \includegraphics[width=0.3\columnwidth]{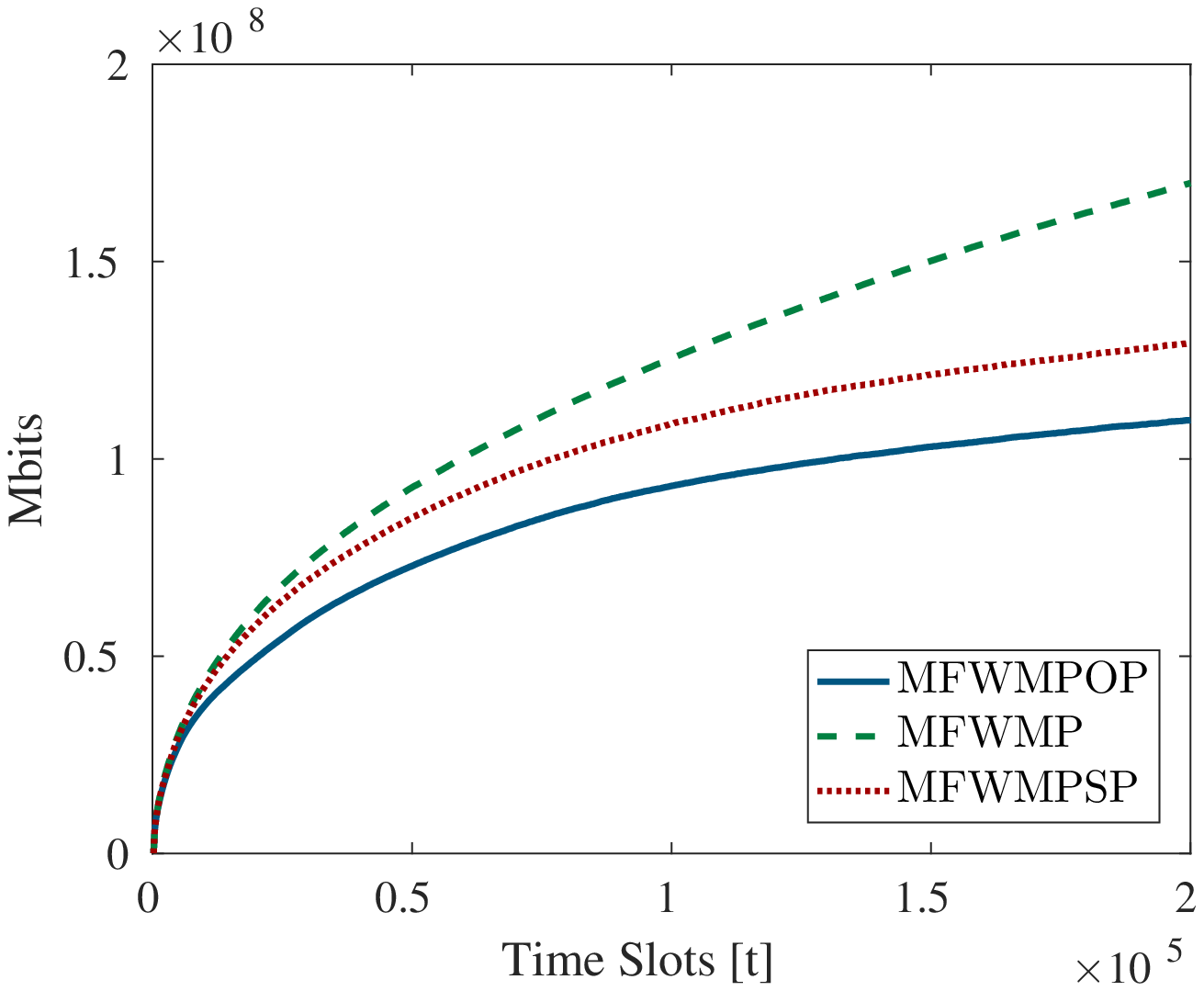}
    \label{fig:powMPqueues}
    }
    \caption{\textcolor{\iftoggle{TRportMARK}{red}{black}}{Performance of MP with three power allocation schemes.}}
    \label{fig:powMP}
  \end{figure}
}}

\subsection{Scheduling Variation}

In all simulations, the queues in the system are stabilized after a large number of frames and the scheduler experiences a steady state distribution. The most frequent states in this distribution could be replicated using practical protocols to approximate the performance of the optimal scheduler. We analyze the histogram of the different algorithms to draw some insights about how such protocols could work.
\iftoggle{TRport}{{\iftoggle{TRportMARK}{\color{red}}{}
  This could provide insights for the design of practical MAC and routing protocols that operate deterministically on much shorter frame durations, in systems constrained by mobility where the network topology varies after only a few frames and long-term effects cannot be exploited. The measure of the ``scheduling-length,'' defined as the time it takes a scheduler to serve all flows once, allows also to characterize the delay of communications in the system.
}}

In Fig. \ref{fig:histograms} we represent the histogram of each schedule $(\s_i,\pp_i)$ under different algorithms, ordered in decreasing order of number of occurrences in 200000 frames. There are two different remarkable trends to be observed here:

\begin{enumerate}
 \item First, we observe the impact of spatial multiplexing. Going from the MWM algorithm to SFWBF and MFWLINSP, all algorithms are deterministic and optimal for a given set of MU-MIMO allowed techniques. As the use of MU-MIMO increases, we see a smaller set of different schedules can cover 95\% of the behavior of the optimal scheduler. 
\iftoggle{TRport}{{\iftoggle{TRportMARK}{\color{red}}{}
 In MWM, 265 different schedules cover 95\% of the realizations of the algorithm, meaning that we could potentially reproduce 95\% of the behavior of the optimal algorithm with a deterministic MAC protocol that cycles through a series of 265 known schedules; for SFWMP, a set of only 92 different schedules can offer the same 95\%, and thus a simple MAC protocol that imitates the optimal distribution would require a 2.8 times shorter cycle. Finally, the 95\% of the time, SFWLINSP operation can be represented with a mere 12 known schedules MAC; 8 times less than SFWMP and 22 times less than MWM.
}}{}
 \item Secondly, we observe the impact of suboptimal implementations with the same level of MU-MIMO. The centralized scheme MFWLINSP spans fewer different frames than 
 the decentralized and deterministic scheme with the same fixed-power constraint, MFWMPSP\iftoggle{TRport}{{\iftoggle{TRportMARK}{\color{red}}{}, which requires 137 schedules to represent 95\% of its behavior}}.
 In turn, increasing the power allocation complexity with waterfilling (MFWMP) increases the number of frames\iftoggle{TRport}{{\iftoggle{TRportMARK}{\color{red}}{}\ up to 518}}{}. And the random scheduler MFWPAC offers the widest variation in different frames\iftoggle{TRport}{{\iftoggle{TRportMARK}{\color{red}}{}
 , with 1631 different schedules in its 95\% most frequent operations.
}}{.}

\end{enumerate}

\begin{figure*}
  \centering
  \subfigure[MWM]{
    \includegraphics[width=0.3\columnwidth]{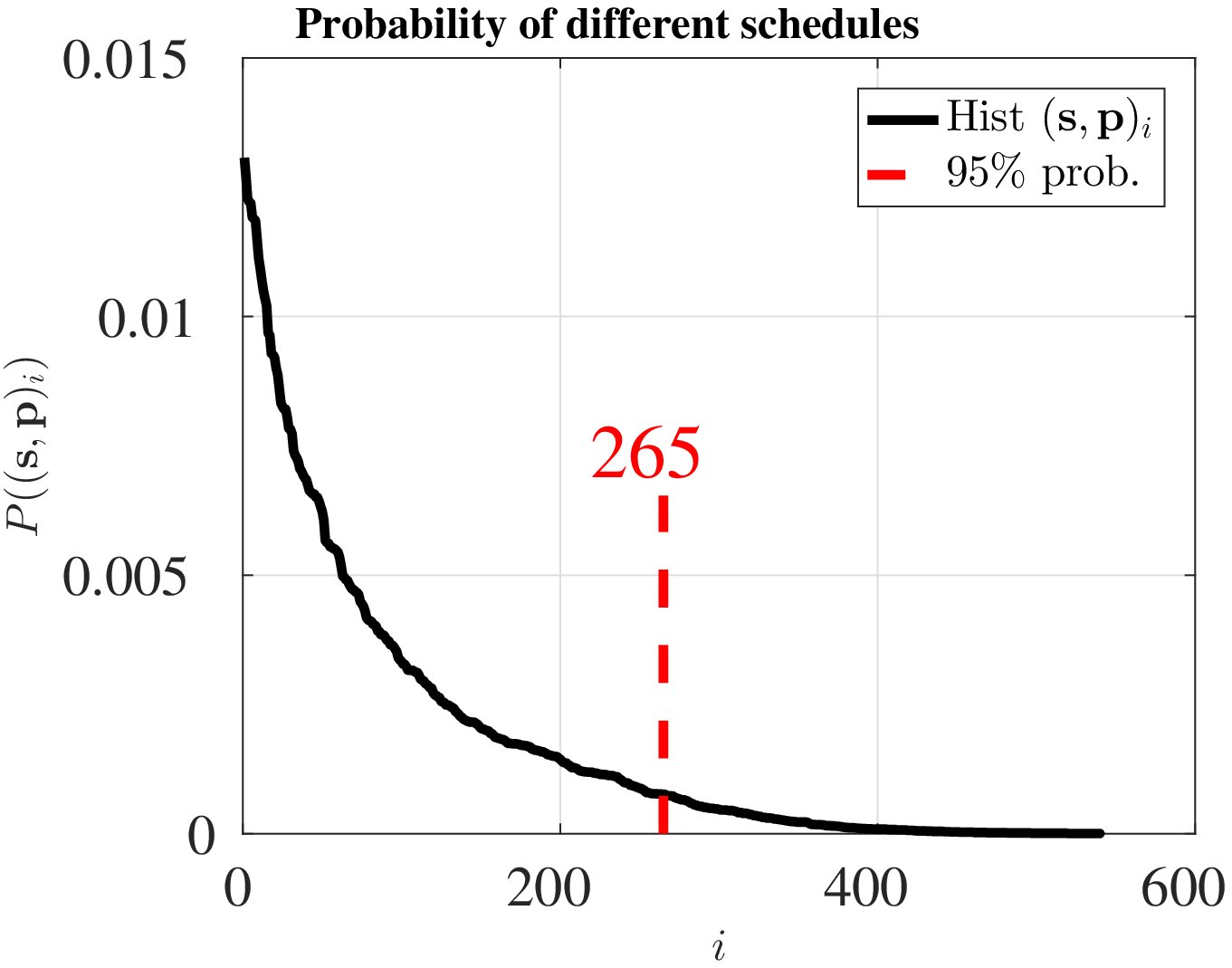}
  \label{fig:histRef}
  }
  \subfigure[SFWBF]{
    \includegraphics[width=0.3\columnwidth]{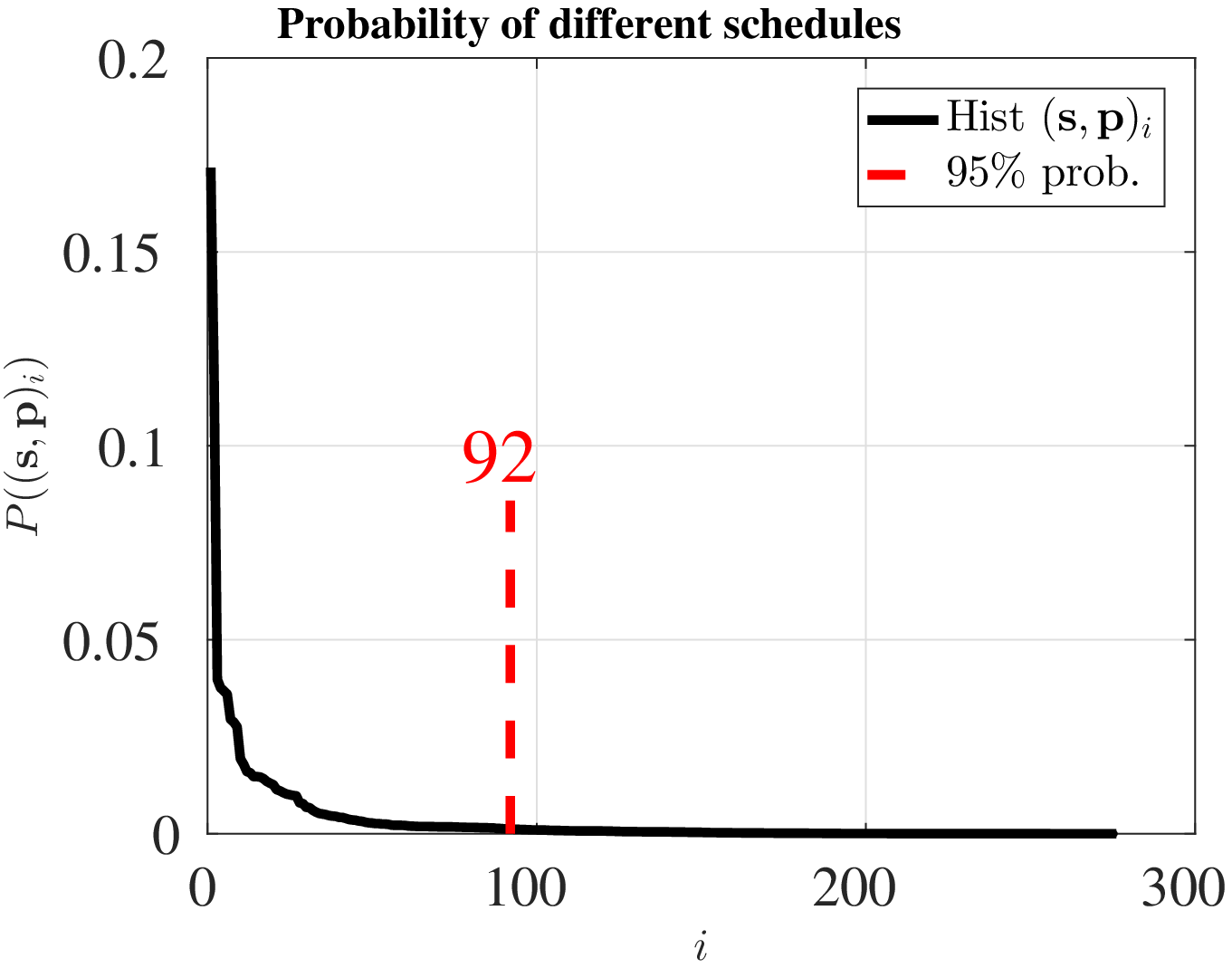}
  \label{fig:histSFWBF}
  }
  \subfigure[MFWLINSP]{
    \includegraphics[width=0.3\columnwidth]{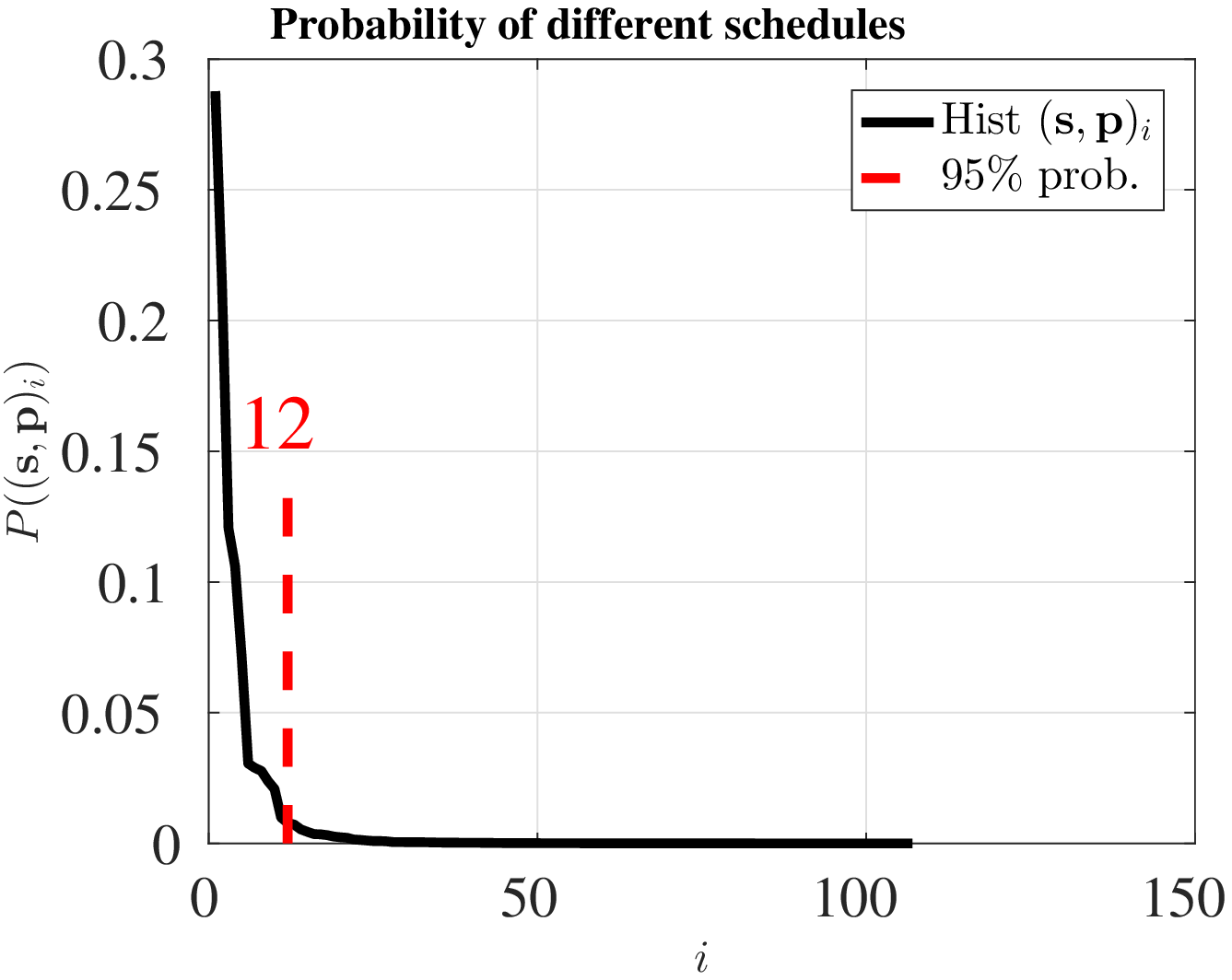}
  \label{fig:histMFWLINSP}
  }
  \subfigure[MFWMPSP]{
    \includegraphics[width=0.3\columnwidth]{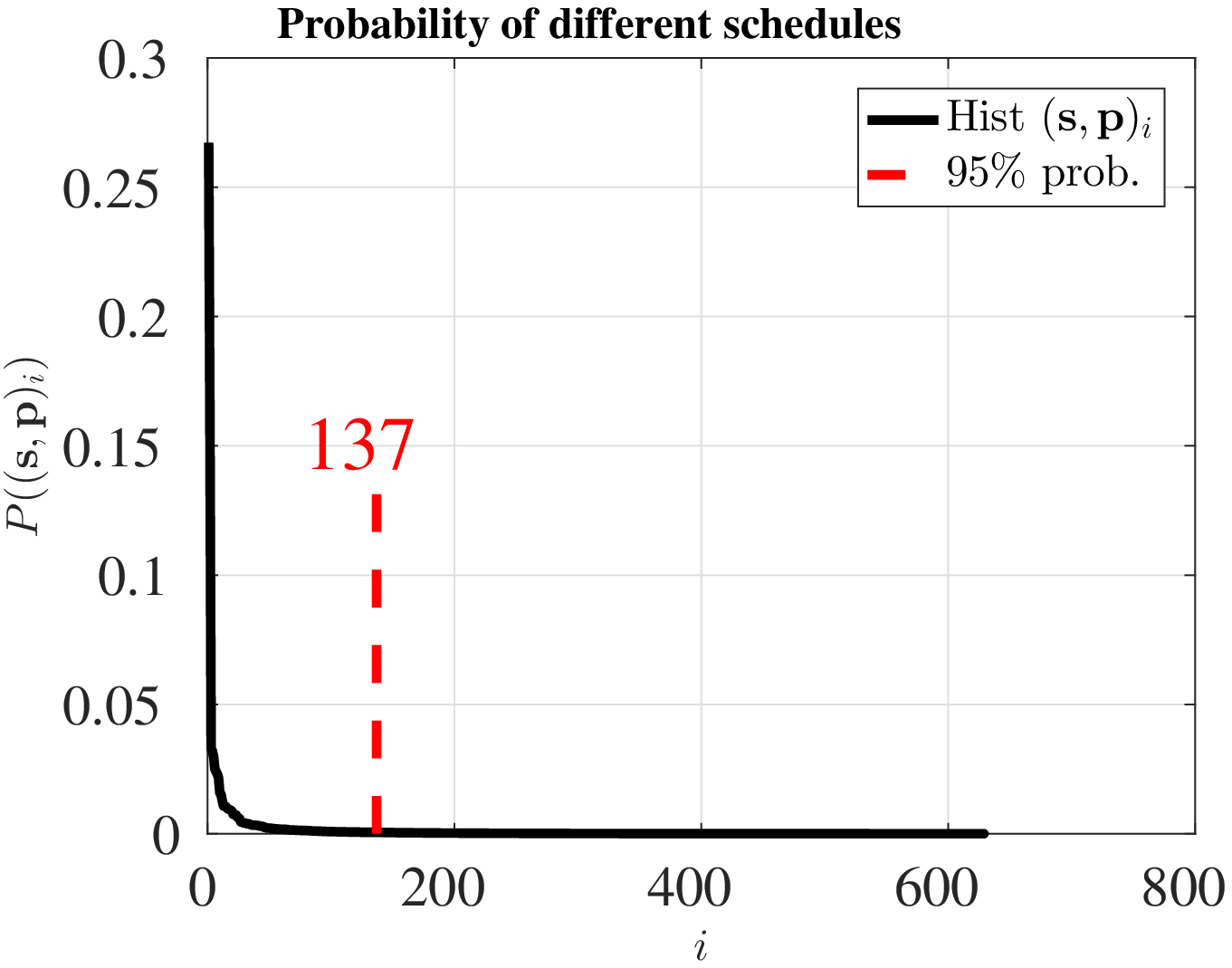}
  \label{fig:histMFWMPSP}
  }
  \subfigure[MFWMP]{
    \includegraphics[width=0.3\columnwidth]{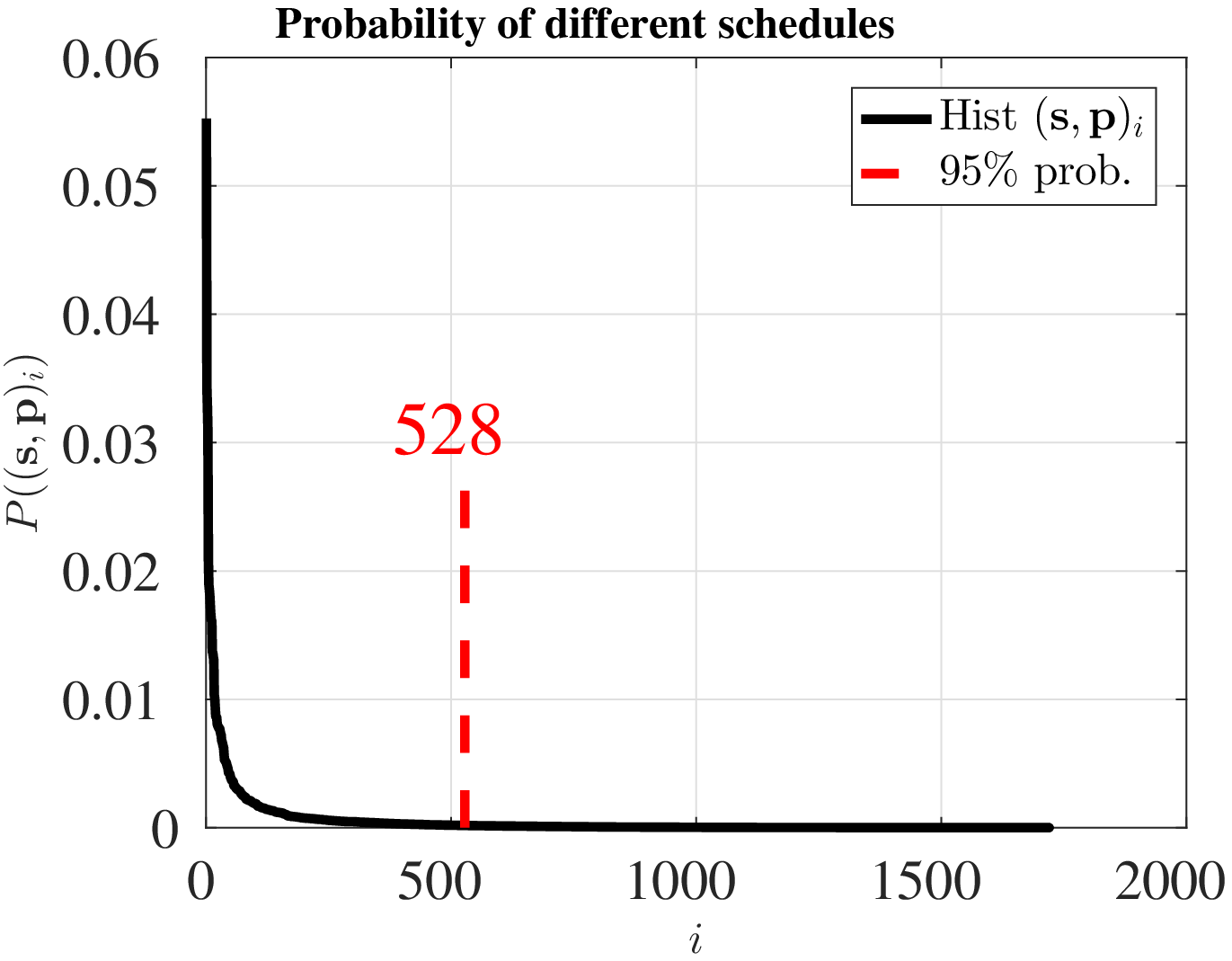}
  \label{fig:histMFWMP}
  }
  \subfigure[MFWPAC]{
    \includegraphics[width=0.3\columnwidth]{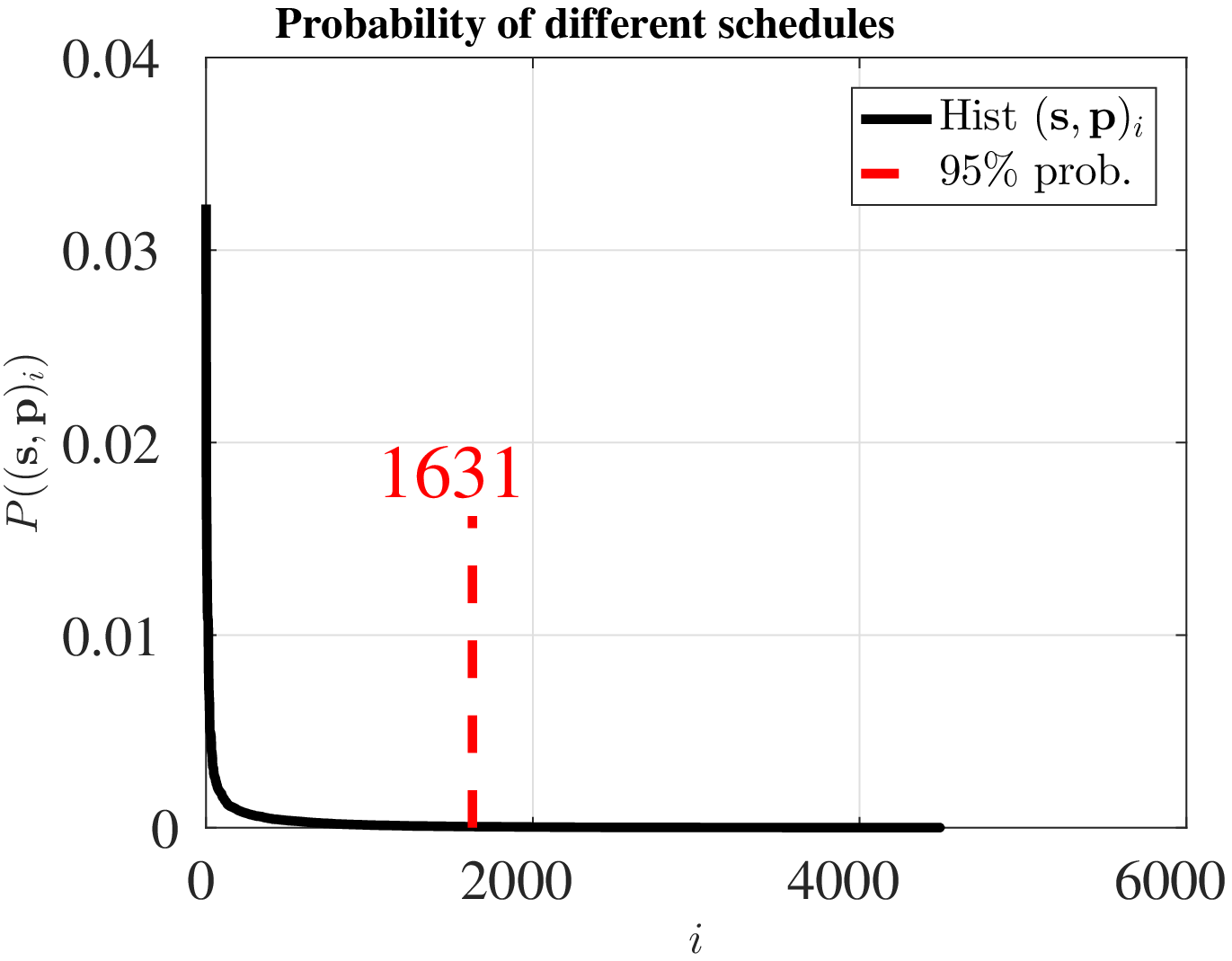}
  \label{fig:histMFWPAC}
  }
  \caption{Histogram of link schedules with different multiple-flow algorithms.}
  \label{fig:histograms}
\end{figure*}

\subsection{Proportional Fairness}

All algorithms employ the utility-maximization congestion control technique with a utility function $\frac{1}{2}\log(r)$\iftoggle{TRport}{{\iftoggle{TRportMARK}{\color{red}}{}. This function gives diminishing returns to an increase in rate; that is, the function values more when $2$ users receive $1$ Gbps each than when $1$ user receives $2$ Gbps, and so on. Among all diminishing-returns functions, the logarithm is of interest because it achieves the so-called proportional fairness, consisting in giving more resources to the users with a better channel and fewer resources to the users with a worse channel, but}}{,} maintaining a proportionality and guaranteeing the service of all users.

In Fig. \ref{fig:propfair} we show the throughput separated by user for three algorithms. We observe that the distribution of traffic is fairly similar in all algorithms, and proportional fairness is maintained
\iftoggle{TRport}{{\iftoggle{TRportMARK}{\color{red}}{}
  at least in a qualitative sense, although the numbers vary slightly.
}}

\begin{figure}
 \centering
 \includegraphics[width=0.5\columnwidth]{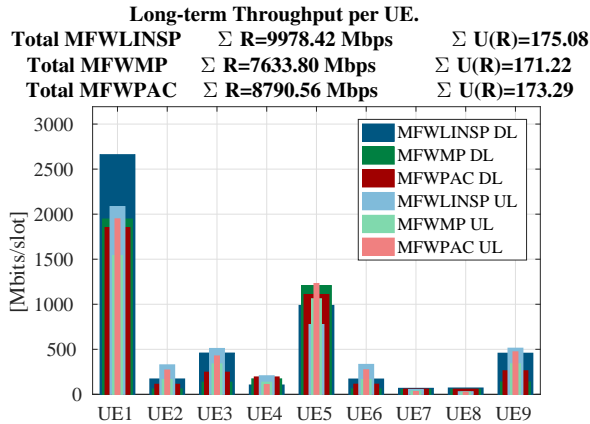}
 \caption{Long-term average throughput of each UE application over 200000 frames for three algorithms.}
 \label{fig:propfair}
\end{figure}

\iftoggle{TRport}{{\iftoggle{TRportMARK}{\color{red}}{}
  \subsection{Practical application}

  It must be noted that all algorithms in Fig. \ref{fig:propfair} are ``practical'' to a certain degree, in the sense that the scheduling policies could potentially be implemented in a real network, although due to the fact that NUM is a long-term result without short term guarantees, the usefulness of such implementation is limited to networks with no mobility or delay requirements.

  In the figure, the MFWLINSP algorithm achieves higher network utility and sum-rate than the PaC and MP implementation variants. This gain, however, would come at the expense of a centralized network controller that can access all network state information; and also of renouncing to the potential gains of waterfilling-based optimal power allocation. The next algorithm in terms of utility and rate is MFWPaC, which would allow a distributed implementation with optimal water-filling power allocation. However, the random algorithm would heavily penalize delay due to its highly unpredictable short-term behavior, as pointed out in Fig. \ref{fig:histograms}. Finally, MFWMP offers the least utility and rate, but it is a deterministic distributed algorithm that does not sacrifice power allocation opportunities.

  Unfortunately, the observations above apply only to one particular network realization and are not really universal results. In the next section, we study the average NUM over a large number of random network layout realizations, or ``drops.''
}}

\subsection{Consistency of the Algorithms}
\iftoggle{TRport}{{\iftoggle{TRportMARK}{\color{red}}{}
  We generalize the observations above, specific to one network topology, by repeating the study over 50 randomly generated node locations (``drops'') and averaging the results.
}}

Some of the algorithms select the optimal schedule under a set of constraints for the network (MFWLINOP, MFWLINSP, SFWBF, MWM). Comparing these algorithms, an increase in device power or scheduling flexibility is guaranteed to produce an improvement in network utility and throughput. We say that these algorithms are \textit{consistent}, because the relation between them is the same across all realizations of the random network.

In Fig. \ref{fig:consistent} we illustrate the achieved long-term average sum rate of each algorithm. As can be seen, MFWLINOP always outperforms SFWBF, which is always better than MWM. 

\begin{figure}
\centering
    \includegraphics[width=0.45\columnwidth]{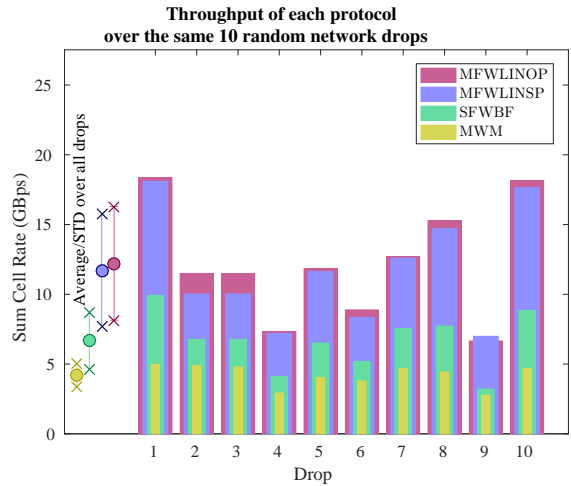}
  \caption{Sum rate of the more consistent algorithms.}
  \label{fig:consistent}
\end{figure}

We say that the 
\iftoggle{TRport}{{\iftoggle{TRportMARK}{\color{red}}{}
  algorithms MFWMPOP, MFWMP, MFWMPSP, MFWPAC and SFWMP
}}{other algorithms}
are \textit{inconsistent}\iftoggle{TRport}{{\iftoggle{TRportMARK}{\color{red}}{}
  \ due to the fact that the comparison between them is not guaranteed to manifest always in the same way. For instance, MFWMP has a better spatial multiplexing than SFWMP, but both algorithms are suboptimal MP schemes that can sometimes perform poorly in certain networks. It would be desirable that MFWMP always outperformed SFWMP due to its increased multiplexing, but in some certain topologies allowing SDM turns out to degrade the performance of MP schedulers, instead of improving it, and MFWMP is worse than SFWMP. In Fig. \ref{fig:inconsistent} we observe the long-term average sum rate of the inconsistent algorithms. Notice how, in drop 5, MFWMP and MFWPAC achieve more rate than MFWMPOP despite the latter having higher power in the nodes. Similarly, in drops 9 and 10, SFWMP achieves better rate than MFWMP despite the higher spatial multiplexing flexibility of the later.

  \begin{figure}
  \centering
    \includegraphics[width=0.45\columnwidth]{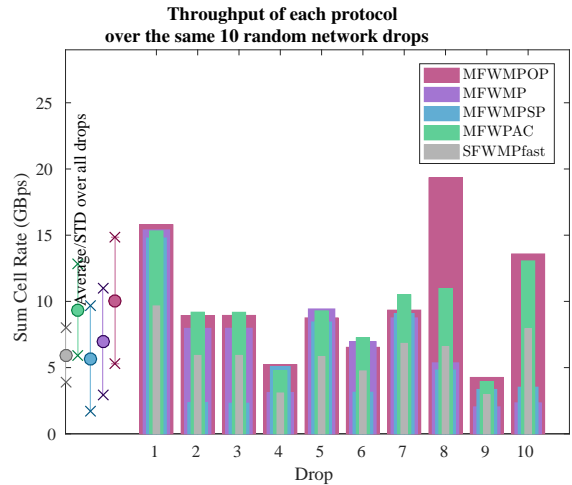}
    \caption{\textcolor{\iftoggle{TRportMARK}{red}{black}}{Sum rate of the more inconsistent algorithms.}}
    \label{fig:inconsistent}
  \end{figure}
  
}}{.}
All in all, even though the inconsistent algorithms can perform poorly in some unfortunate network topologies, they usually perform very well in many networks and should not be discarded that easily. In the next section we take 
a closer look at the performance of all algorithms averaged over many random networks, in order to understand the average and variance of the gains of each algorithm.

\subsection{Average Network Utility and Throughput Capacity}

We first performed a comparison among the MWM, SFWMP, MFWMPSP, MFWPAC, MFWLINOP and MFWLINSP algorithms. Figure \ref{fig:protoAverages} shows the maximum network utility and long-term sum throughput achieved by each algorithm in 50 random network realizations, and the corresponding average and standard deviation. We can observe two clear spatial-multiplexing improvements, from one-to-one transmission (MWM) to SDMA only (SFWMP) and from there to full spatial multiplexing (all algorithms with labeled prefix MFW-). This is the single most important conclusion of our paper: MU-MIMO techniques, if properly managed by scheduling, hold great potential for remarkable gains in throughput capacity and utility in a multi-hop mmWave networks. Additional research is needed to properly design practical scheduling and MAC protocols able to achieve these gains.

To study the performance of the SFWBF and MFWMP algorithms, whose computation complexity is very high, we select a random sample of 14 networks, and for each specific network instance we examine their performance and compare it with the SFWMP and MFLINOP benchmarks. Figure  \ref{fig:protoSamples} shows that, in each network instance, the performance of SFWMP is always close to that of SFWMP, the optimal SDMA-only scheduler. MFWMP beats SDMA-only protocols and performs closest to the full spatial multiplexing upper bound, MFWLINOP, in 8 of the 14 networks, whereas the MP heuristic shows a gain due to full spatial multiplexing (MFWMP beats SFWMP) in 11 of the 14 networks. These observations suggest that in a large number of network cases the exploitation of full spatial multiplexing can increase throughput in multi-hop networks even using heuristic MP distributed algorithms.

%
%

\iftoggle{TRport}{{\iftoggle{TRportMARK}{\color{red}}{}
  In addition, we have introduced different implementations of the MBP scheduler. The use of Split Power and distributed Message Passing produces an interesting average gain, but due to the algorithm inconsistency some networks are strongly benefited while others are penalized. The use of power allocation with random PaC produces an even greater gain, but this comes at the cost of less short-term guarantees in the network. Finally, MFWLINOP serves as an upper bound to the performance of any scheduling algorithm by the use of increased node transmit power, but violates the transmitter-power budget constraint.
}}

\begin{figure}
\centering
  \subfigure[Sum Long-term Throughput]{
  \iftoggle{TRport}{
    \includegraphics[width=0.45\columnwidth]{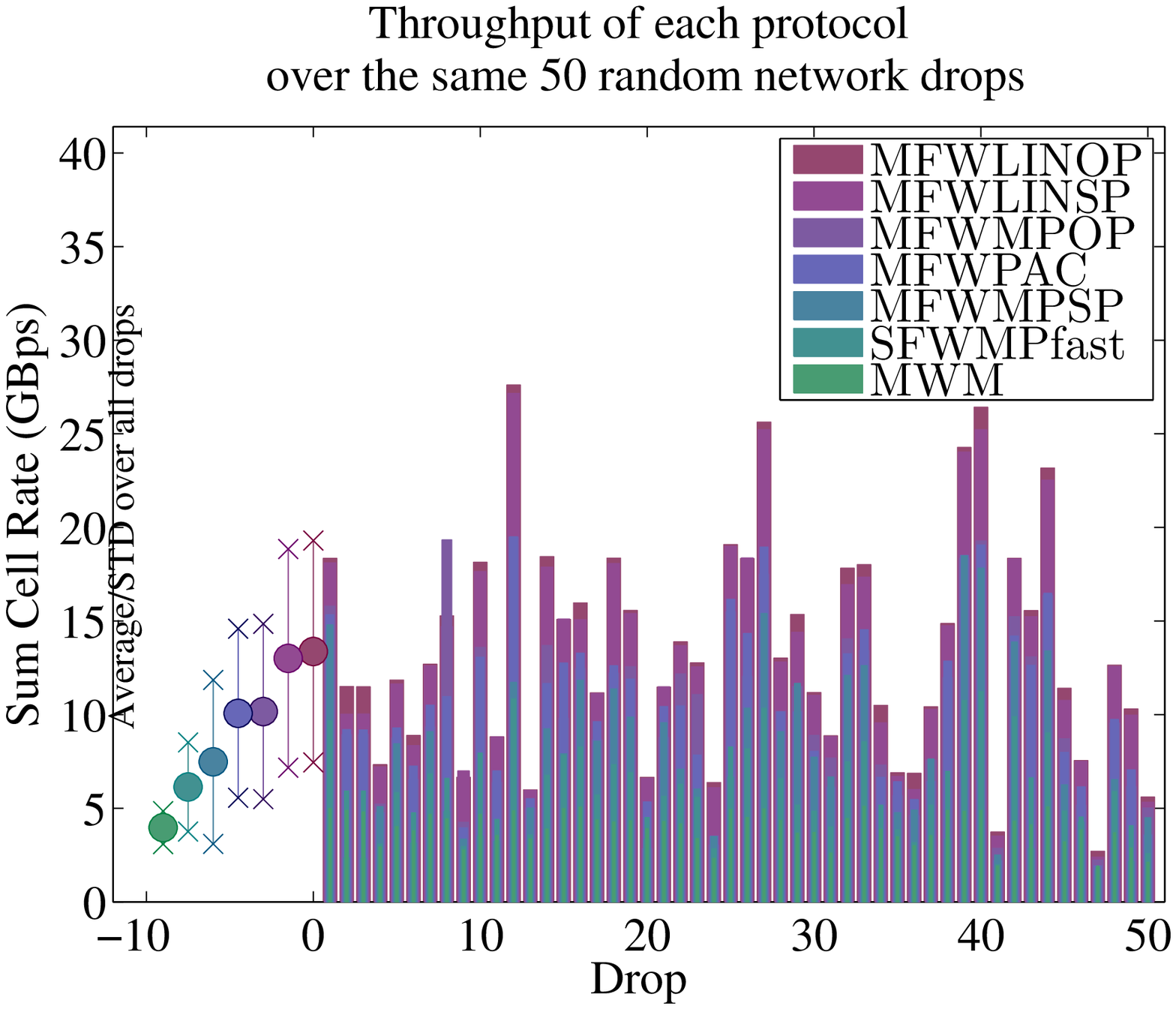}
    }{
    \includegraphics[width=0.45\columnwidth]{GMBP_sumRate_50_noreport}
    }
    \label{fig:sumRavg}
    }
  \subfigure[Sum. Network Utility]{
  \iftoggle{TRport}{
    \includegraphics[width=0.45\columnwidth]{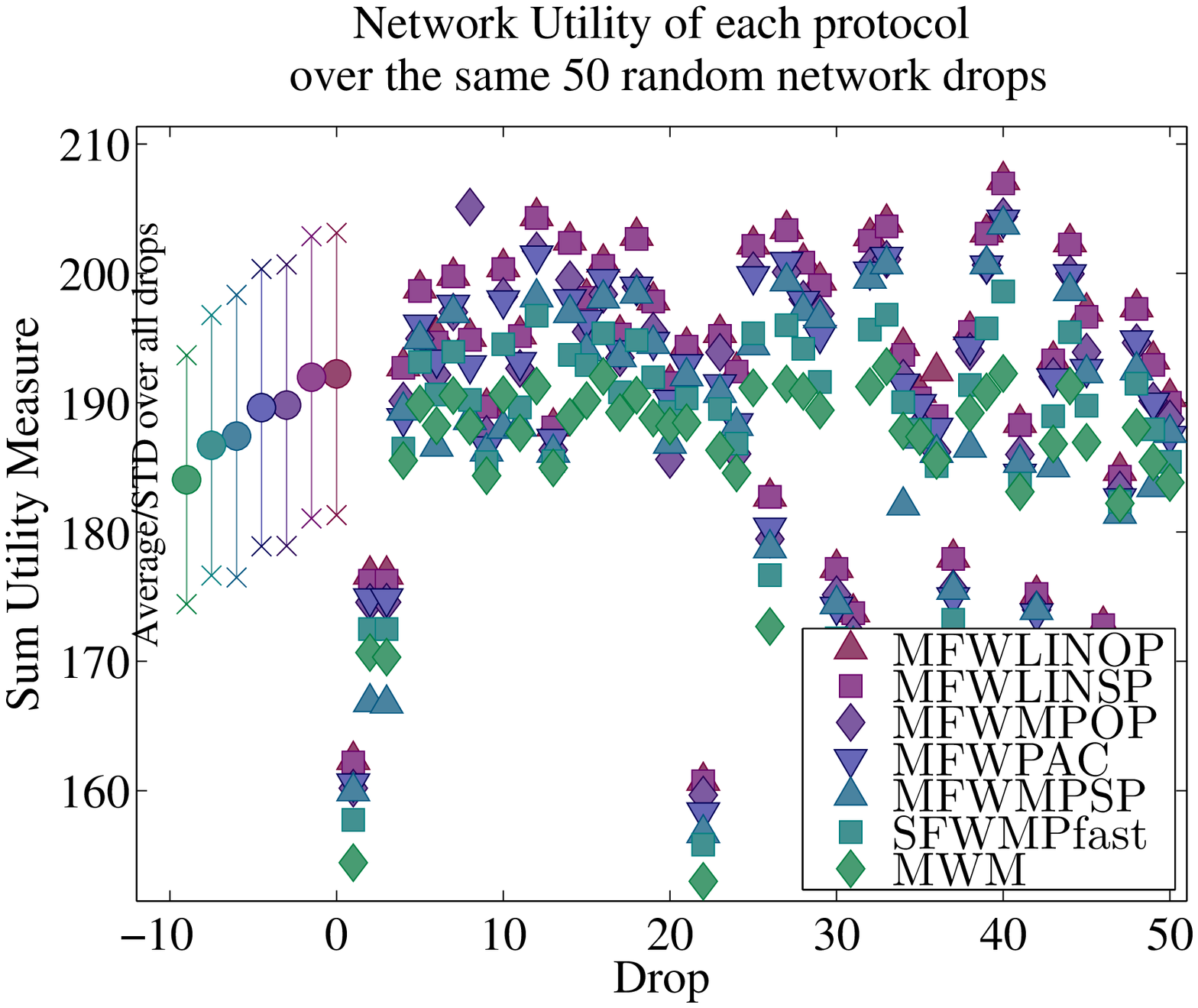}
    }{
    \includegraphics[width=0.45\columnwidth]{GMBP_utilityAll_50_noreport}
    }
    \label{fig:sumUavg}
    }
  \caption{Sum rate and utility of the algorithms over 50 drops.}
  \label{fig:protoAverages}
\end{figure}


  \begin{figure}
  \centering
    \subfigure[Sum Long-term Throughput]{
      \includegraphics[width=0.45\columnwidth]{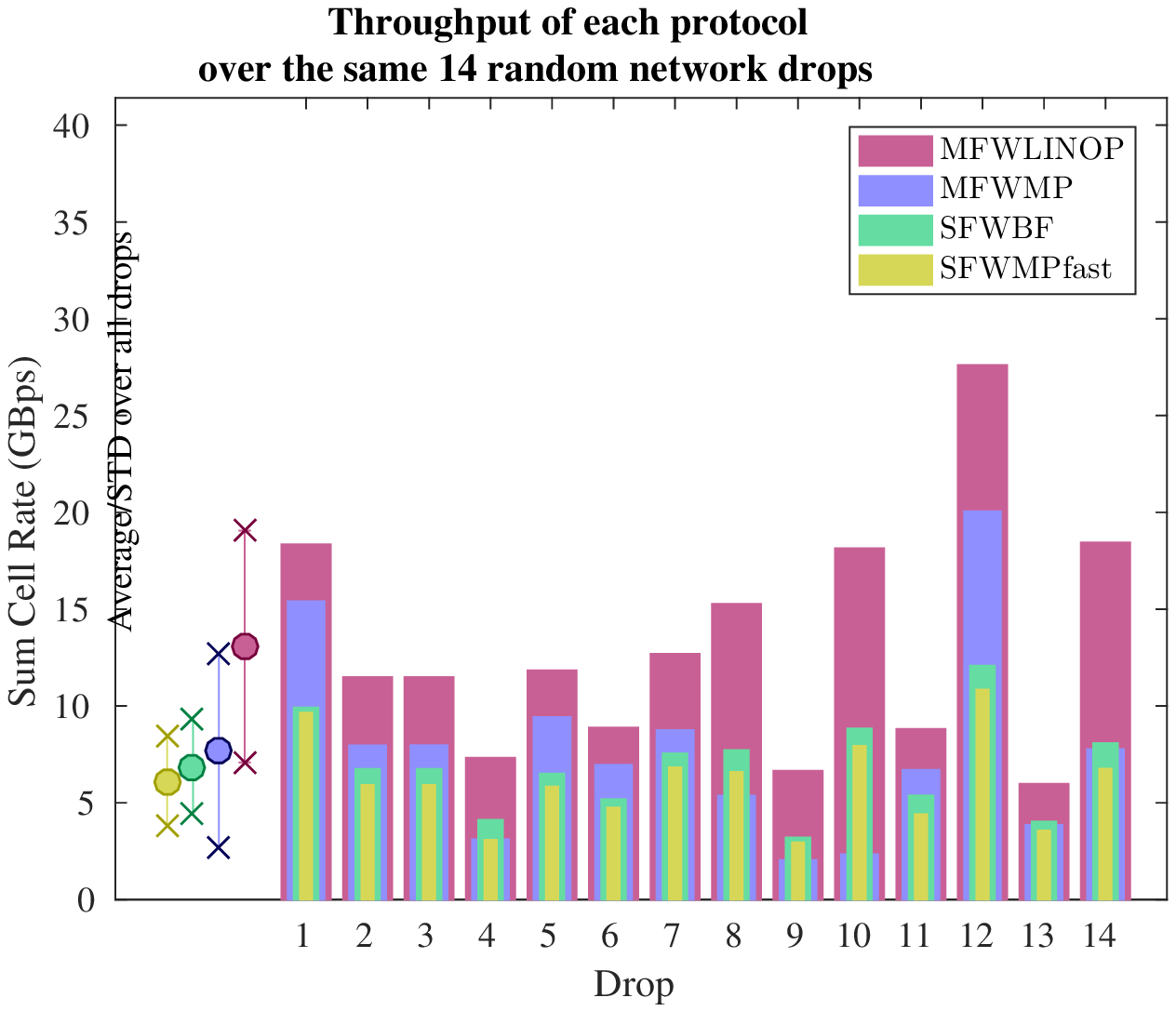}
      \label{fig:sumRsample}
      }
    \subfigure[Sum. Network Utility]{
      \includegraphics[width=0.45\columnwidth]{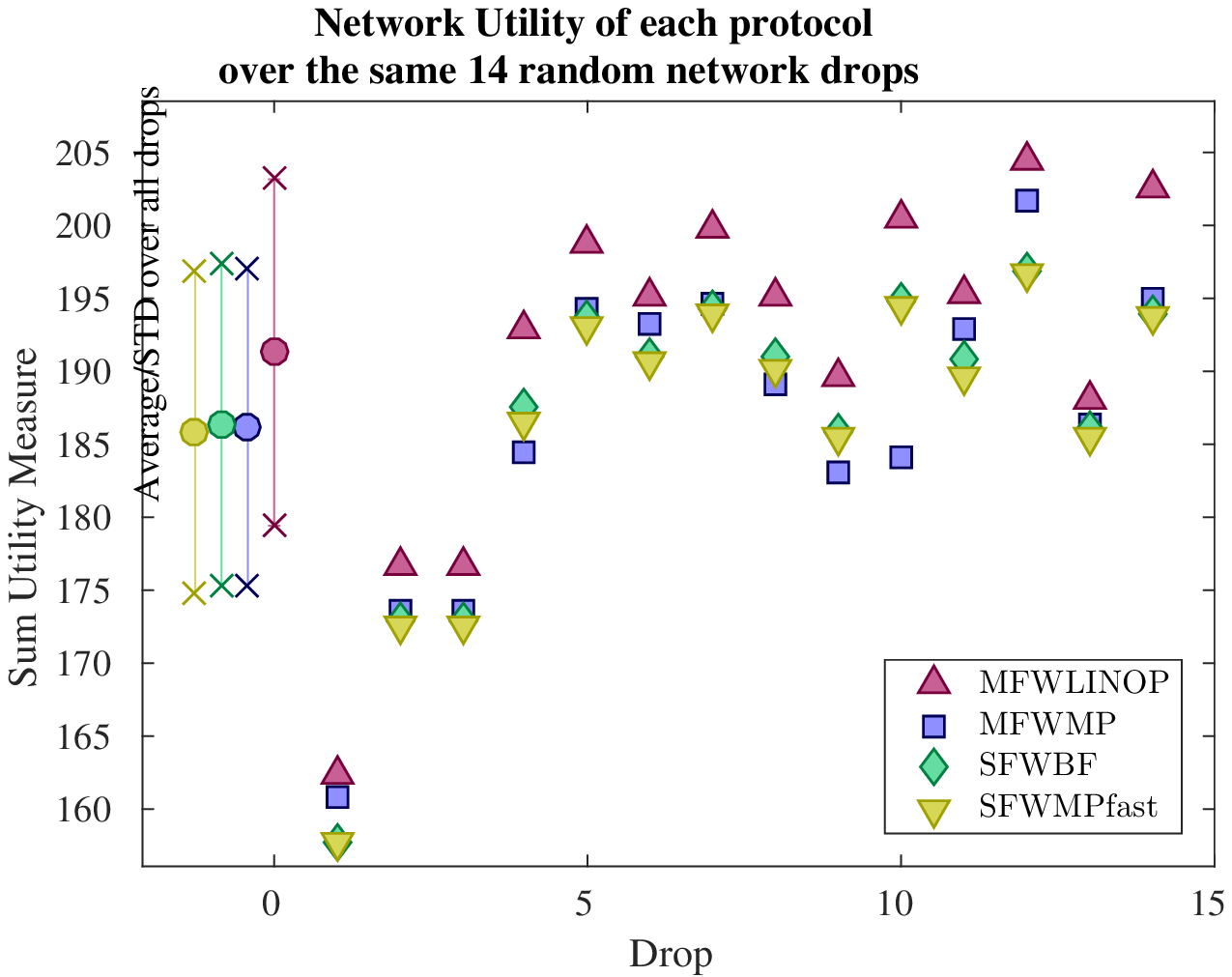}
      \label{fig:sumUsample}
      }
    \caption{Sum rate and utility of the algorithms over 14 drops.}
    \label{fig:protoSamples}
  \end{figure}

\subsection{Contribution Overview}

Among the algorithms in Table \ref{tab:algorithms}, there are five novel algorithms in this paper that have interesting applications

\begin{itemize} 
 \item MFWLINOP is a theoretical upper bound to NUM optimal performance that can be calculated but breaks the power constraints of the network.
 \item MFWLINSP is a practical NUM optimal algorithm with simplified fixed power allocation, though it requires centralized control. The results obtained with this algorithm are very close to those for MFWLINOP, which shows that MFWLINSP is near optimal and MFWLINOP is a tight upper bound.
 \item MFWPAC is a practical NUM optimal algorithm that can be distributed, although its randomness may affect delay.
 \item MFWMP is a practical suboptimal algorithm with optimal power allocation, that can be distributed and offers less randomness than PaC, but suffers catastrophic drops in performance in some networks where it works even worse than SFWMP, a protocol with a theoretically smaller capacity region.
 \item MFWMPSP is a practical suboptimal algorithm that exploits SDM with simplified fixed power allocation, can be distributed, and even though its performance drops in some networks, at least it is never below the results of SFWMP.
\end{itemize}

All algorithms outperform the MWM reference without MU-MIMO \cite{juanScheduling} and, with the exception of some realizations of MFWMP, the SFWMP reference with SDMA but no SDM \cite{gomezITAoptimal}.

\section{Conclusions and Future Work}
\label{sec:conclusion}

Future mmWave 5G networks will require a combined framework drawing from existing models in both single-hop cellular and multi-hop network paradigms. In the past, one driver of cellular rate increase has been the spatial multiplexing gain derived from MU-MIMO techniques. The physical layer assumptions contained in traditional multi-hop literature are usually very different from this. It is possible to adapt the classic multi-hop MBP techniques to this non-trivial PHY by separating the problems of role-assignment to the nodes, between the roles of transmitters and receivers, from the SDM and SDMA transmission design and power allocation, that is local to each specific transmitter thanks to the suppression of most interference due to the severe attenuation, blockage and mismatched array gain in mmWave. The scheduling problem is converted from a weighted matching on a graph to an optimal partition with the weight of each possible partition given by the solution to the power allocation. A formal proof of throughput-optimality and NUM can be obtained with minor adaptation of the classic proof for PaC.

In this paper we only consider interference-free links, where the spatial isolation of mmWave antenna arrays is assumed sufficient to make interfering signals buried in noise. We model multi-hop scheduling with MU-MIMO, enabling both SDMA at the receivers and SDM at the transmitters. This leaves a power allocation problem that can be solved independently at each transmitter using waterfilling or simplified to fixed power allocations if desired. We introduce multiple algorithms to obtain or approximate the optimal transmitter/receiver role assignment: random PaC, distributed heuristic MP, and centralized MILP (with fixed power only). We compare the results obtained with all algorithms. There are still many ways this model can be refined, but even using only heuristics we have observed significant throughput and network utility gains, which make a very strong case for research of multi-hop network scheduling incorporating MU-MIMO. Our future plans include the analysis of power allocation in the presence of interference, where the power allocation can no longer be considered separately at each transmitter, the study of interference control techniques in the scheduling model, and the adaptation of other optimization frameworks to address the MBP problem.

\iftoggle{TRport}{{\iftoggle{TRportMARK}{\color{red}}{}

\appendices
\section{Proof of NUM and Throughput Optimality}
\label{app:PAC}

Proofs for results very similar to Proposition \ref{pro:PAC} and \ref{pro:PACNUM} appear often in literature, considering different variants of the scheduling problem. Some examples include \cite{Tass,Eryilmaz2007,Kelly1997,Kelly1998,ModianoPower}. The main argument of the proof is always the same and traces back to the analysis of ergodic Markov chains by Tweedie \cite{Tweedie83Markov}. 

We combine ideas from two modifications of this proof with different network assumptions: in \cite{Eryilmaz2007,Eryilmaz2010Implementation} the throughput optimality is demonstrated for multi-hop network with an arbitrary interference model scheduling constraint, but only for constant-capacity links $c_{n,m}=1$ (links transmit one packet per frame, if two links in a vicinity transmit at the same time, they ``collide'' and both packets are lost, and the sets of links subject to such conflict is a given parameter). On the other hand, in some other works the proof is given for random-capacity links, but only for single-hop networks and only with a specific type of 2-hop interference constraint (a particular choice of conflict as a given parameter in the collision model). We have that these collision models do not accurately model directive transmissions and spatial multiplexing in mmWave networks, so we have replaced the collision model with a transmit/receive role assignment $\s(t)$ and a power allocation scheme $\pp(t)$. Moreover, we combine ideas from the multi-hop and variable-capacity proofs to create our version of the Proposition.

\subsection{General Overview}

The proof begins by considering the joint variable $\y(t)=(\q(t),\C(t))$ to represent a state of the network and scheduling system. This joint variable follows a Markov chain with some state space $\mathcal{Y}$. 

We consider a Lyapunov function of the state of the system defined as 
\begin{equation}
\mathcal{L}(\y(t))=\underset{\mathcal{L}_1(t)}{\underbrace{\sum_{n,f}|q_{n}^{f}(t)|^2}}+\underset{\mathcal{L}_2(t)}{\underbrace{\sum_{n,m,f}\left[([c_{n,m}^f]_{\textnormal{MBP}}-c_{n,m}^f)(q_{n}^{f}(t)-q_{m}^{f}(t))\right]^2}}
\end{equation}

The first term bounds the total queue length in the system
$$\mathcal{L}_1(t)=|\q(t)|^2=\q^T(t)\q(t)\geq \sum_{n,f} q_{n}^{(f)}(t)$$

The second term bounds the effect of the difference in effect between the random PaC scheduler and the ideal solution to \eqref{eq:MBP}. For convenience, we define a notation to measure the different weighted back pressure in the random scheduling and the optimal MBP scheduling. Let us denote the weight of a selected schedule by $w(t)=\q^T(t)(\C(t)+\C^T(t))\one$ and respectively the MBP by $w_{\textnormal{MBP}}(t)=\q^T(\C_{\textnormal{MBP}}(t)-\C_{\textnormal{MBP}}^T(t))\one$. The difference between the two being $\Delta w(t)=w_{\textnormal{MBP}}(t)-w(t)$ so the second term of the Lyapunov function is
$$ \mathcal{L}_2(t)=(\Delta w(t))^2=\left[\q^T(t)\left[\C_{\textnormal{MBP}}(t)-\C_{\textnormal{MBP}}^T(t)-\C(t)+\C^T(t)\right]\one\right]^2$$
which will be useful to characterize a part of the Lyapunov function and becomes zero when MBP is used.

The goal is to show that under the condition that there is a finite highest link capacity, $C_{\max}\triangleq \max c_{n,m}(t)<\infty$, it can be proved that the average Lyapunov drift 
$$\mathcal{D}(\y(t))=\Ex{t}{\mathcal{L}(\y(t+1))-\mathcal{L}(\y(t))|\y(t)}$$
is always negative if $\mathcal{L}(\y(t))$ is greater than some large scalar $B$. 

When this claim is satisfied, the Foster-Lyapunov criterion \cite{Tweedie83Markov} establishes that the Markov process is possitive recurrent with a steady state distribution contained in $\mathcal{S}=\{\y(t): \mathcal{L}(\y(t))<B\}\subset\mathcal{Y}$. Due to the fact that we define a Lyapunov function that upper bounds the aggregate queue length in the system, the states $\mathcal{S}$ correspond to a stable network by definition. In addition, due to the fact that the Lyapunov function grows with the difference between MBP and PAC, the steady states contained $\mathcal{S}$ are also associated with a small difference in long-term throughput between the two schedulers.

The interpretation of this result is that the system with sufficiently long queues has a stochastic tendency to drift to lower queue length states, and simultaneously the states that select a schedule that is very far off of the optimal have a stochastic tendency to select better schedules next; both behaviors occurring at the same time stabilize the network and guarantee NUM.

\subsection{Analysis of $\mathcal{L}_1(t)$}

We compute the 1-step Lyapunov drift of the function $\mathcal{L}_1(t)$ as
$$\mathcal{D}_1(t)=\Ex{}{\mathcal{L}(\y(t+1))-\mathcal{L}(\y(t))}$$

Hereafter, we introduce the queue-update function \eqref{eq:qupdate} and we drop the time index $(t)$ . This leaves
$$\mathcal{D}_1(t)=\Ex{}{|\q+(\C^T-\C)\one_{NF,1}+\ab|^2-|\q|^2}$$

Expanding the square sum we get
\begin{equation}
\begin{split}
   \mathcal{D}_1(t)&=\underset{ 0}{\underbrace{\Ex{}{|\q|^2-|\q|^2}}}+\underset{\leq N^2C_{\max}+N\Omega_{\max}C_{\max}}{\underbrace{\Ex{}{|(\C^T-\C)\one_{NF,1}+\ab|^2}}}\\
   &+2\Ex{}{\q^T\left[(\C^T-\C)\one_{NF,1}+\ab\right]}\\
  \end{split}
\end{equation}

We add and subtract the term $\Ex{}{2V\one_{1,NF}\mathcal{U}(\ab)}$ (this can be omitted for throughput optimality with a fixed rate $\lambda\in\Lambda$), wrap the first terms in a constant $C_1$ and reorganize the last term
\begin{equation}
\begin{split}
   \mathcal{D}_1(t)&=C_1+\Ex{}{2V\one_{1,NF}\mathcal{U}(\ab)}\\
   &-\Ex{}{\Ex{}{2V\one_{1,NF}\mathcal{U}(\ab)}-2\q^T\ab}\\
   &+2\Ex{}{\q^T(\C^T-\C)\one_{NF,1}}\\
  \end{split}
\end{equation}

We next add and subtract $2\Ex{}{\q^T(\C_{\textnormal{MBP}}^T-\C_{\textnormal{MBP}})\one_{NF,1}}$, expressing the Lyapunov drift of the system as the drift of an optimal system plus a gap that is equivalent to $\Delta w(t)$
\begin{equation}
\begin{split}
   \mathcal{D}_1(t)&=C_1+\Ex{}{2V\one_{1,NF}\mathcal{U}(\ab)}\\
   &-\Ex{}{\Ex{}{2V\one_{1,NF}\mathcal{U}(\ab)}-2\q^T\ab}\\
   &+2\Ex{}{\q^T(\C_{\textnormal{MBP}}^T-\C_{\textnormal{MBP}})\one_{NF,1}}\\
   &+2\underset{\Ex{}{\Delta w(t)}=\Ex{}{\sqrt{\mathcal{L}_2(t)}}}{\underbrace{\Ex{}{\q^T(\C_{\textnormal{MBP}}-\C_{\textnormal{MBP}}^T)\one_{NF,1}-\q^T(\C-\C^T)\one_{NF,1}}}}
  \end{split}
\end{equation}

We introduce $\x^{V}$, the solution to the approximate problem \eqref{eq:EPSproblem}, which by definition maximizes the absolute value of the third term. The drift is averaged conditioned on $\y(t)$; on a given state $\y(t)$ the value of $\mathcal{L}_1(t)$ is deterministic, so we clear the $\Ex{}{}$ from that term, obtaining
\begin{equation}
\begin{split}
   \mathcal{D}_1(t)&\leq C_1+\Ex{}{2V\one_{1,NF}\mathcal{U}(\ab)}\\
   &-\Ex{}{\Ex{}{2V\one_{1,NF}\mathcal{U}(\x^{V})}-2\q^T\x^{V}}\\
   &+2\Ex{}{\q^T(\C_{\textnormal{MBP}}^T-\C_{\textnormal{MBP}})\one_{NF,1}}\\
   &+2\sqrt{\mathcal{L}_2(t)}\\
  \end{split}
\end{equation}
we then arrange this as 
\begin{equation}
\begin{split}
   \mathcal{D}_1(t)&\leq C_1+\Ex{}{2V\one_{1,NF}\mathcal{U}(\ab)-2V\one_{1,NF}\mathcal{U}(\x^{V})}\\
   &+2\Ex{}{\q^T(\C_{\textnormal{MBP}}-\C_{\textnormal{MBP}}^T)\one_{NF,1}-\q^T\x^{V}}\\
   &+2\sqrt{\mathcal{L}_2(t)}\\
  \end{split}
\end{equation}

From the assumption that $\x^{V}\in\Lambda$ we get that there exists some convex linear combination of feasible capacities $\C^V\in \text{Co}(\mathcal{C})$ such that the net traffic in the source is $((\C^V)^T-\C^V)\one_{NF,1}-\epsilon \one_{NF,1}=\x^{V}$ for some small $\epsilon$. By contradiction, if no such linear combination existed, then there would be at least one source queue that grows to infinity and $\x^{V}\notin\Lambda$ and the network would never be stable.

Finally, we note that $\min_{\pp} \q^T(\C^T-\C)\one_{NF,1}$ is equivalent to $ =\max_{\pp} \q^T(\C-\C^T)\one_{NF,1}$ and both are reorderings of the MBP in \eqref{eq:MBP}, therefore $\q^T(\C_\text{MBP}^T-\C_\text{MBP})\one_{NF,1}<\q^T((\C^V)^T-\C^V)\one_{NF,1}$.

\begin{equation}
\begin{split}
\label{eq:finalD1}
   \mathcal{D}_1(t)&\leq C_1+\Ex{}{2V\one_{1,NF}\mathcal{U}(\ab)-2V\one_{1,NF}\mathcal{U}(\x^{V})}\\
   &-\Ex{}{\epsilon \q^T\one_{NF,1}}+2\sqrt{\mathcal{L}_2(t)}\\
  \end{split}
\end{equation}

This captures the essential simplifications on specific terms in Propositions \ref{pro:MBP} and \ref{pro:NUMCC}. 
\begin{itemize}
 \item The second term is related to NUM and can be omitted if traffic is static and assumed to be in the throughput capacity region, $\lambdab\in \Lambda$, to prove throughput-optimality of an inelastic traffic systems. 
 \item The third term is related to the use of random or sub-optimal approximations of \eqref{eq:MBP}. If the MBP scheduler is directly employed, the term $\mathcal{L}_2(t)$ is zero and the proof is simplified. 
\end{itemize}

The second term can in general be upper bounded by $C_2=2VN\mathcal{U}(C_{\max}\Omega_{\max})$, which means that the Lyapunov drift can be written as a positive constant minus the sum queue length and scheduling weight differential
\begin{equation}
 \label{eq:D1compact}
 \mathcal{D}_1(t)\leq C_1+C_2-\epsilon |\q|_1+2\sqrt{\mathcal{L}_2(t)}
\end{equation}

The proof would be complete with this if this was the case with MBP scheduling, where the Foster-Lyapunov criterion would be met by defining a complementary to the set of steady states $\mathcal{S}^c$ where any sufficiently long queues $\q(t)$ satisfy $C_1+C_2-\epsilon |\q|_1<0$. In addition, since this stabilization converges to the approximate optimum $\x^{V}$, we have that the distance to the true optimum is $\one_{1,NF}\mathcal{U}(\x^*)-\one_{1,NF}\mathcal{U}(\x^{V})\leq \frac{C_1}{V}$. This suggests that $V\geq10C_{\max}^2$ gives close approximations of the optimal solution to the MBP subproblem \eqref{eq:MBP}.

Since we are interested in extending the proof to a random PaC scheduler, the following additional steps are necessary.

\subsection{Analysis of $\mathcal{L}_2(t)$+$\mathcal{L}_1(t)$}
We compute the Lyapunov drift of the complete function $\mathcal{L}(t)$ as
$$\mathcal{D}(t)=\Ex{}{\mathcal{L}(\y(t+1))-\mathcal{L}(\y(t))}=\underset{\mathcal{D}_1(t)}{\underbrace{\Ex{}{\mathcal{L}_1(\y(t+1))-\mathcal{L}_1(\y(t))}}}+\underset{\mathcal{D}_2(t)}{\underbrace{\Ex{}{\mathcal{L}_2(\y(t+1))-\mathcal{L}_2(\y(t))}}},$$
where $\mathcal{D}_1(t)$ follows the analysis steps above up to \eqref{eq:D1compact}, which shows the dependency of the queue drift on $\mathcal{L}_2(t)$ if the latter is not zero.

We focus therefore on the drift of $\mathcal{L}_2(t)$, which can be derived from squaring

\begin{equation}
  \begin{split}\Delta w(t+1)&=\Ex{}{\q^T(t+1)\left[\C_{\textnormal{MBP}}(t+1)-\C_{\textnormal{MBP}}^T(t+1)-\C(t+1)+\C^T(t+1)\right]\one}\\
			    &=(1-\delta)\Ex{}{\q^T(t+1)\left[\C_{\textnormal{MBP}}(t+1)-\C_{\textnormal{MBP}}^T(t+1)-\C(t+1)+\C^T(t+1)\right]\one|w(t+1)>0}        
    \end{split}
\end{equation}
where the conditioning on the second step comes from the fact that $w(t+1)$ is zero with probability $\delta$.

To upper bound the conditional mean we express $\q^T(t+1)=\q^T(t)+\Delta\q$ where $\Delta\q$ represents a vector of arrivals to the queues that is bounded. The details of $\Delta\q$ are irrelevant as long as it is bounded, but we could easily obtain it from \eqref{eq:qupdate}. This gives some interesting rearrangements

\begin{equation}
  \begin{split}
    \Delta w(t+1)=\;&\Ex{}{\Delta\q^T(t)\left[\C_{\textnormal{MBP}}(t+1)-\C_{\textnormal{MBP}}^T(t+1)\right]\one-\Delta\q^T(t)\left[\C(t+1)+\C^T(t+1)\right]\one}\\
    &+\Ex{}{\q^T(t)\left[\C_{\textnormal{MBP}}(t+1)-\C_{\textnormal{MBP}}^T(t+1)\right]\one-\q^T(t)\left[\C(t+1)-\C^T(t+1)\right]\one}\\
    \end{split}
\end{equation}

The first line of this expression can be bounded with constants due to the boundedness of the queue arrivals vector. The second line can be bounded by $\Delta w(t))=\sqrt{\mathcal{L}_2(t)}$ using two rules
\begin{itemize}
 \item $\q^T(t)\left[\C_{\textnormal{MBP}}(t+1)-\C_{\textnormal{MBP}}^T(t)\right]\one<\q^T(t)\left[\C_{\textnormal{MBP}}(t)-\C_{\textnormal{MBP}}^T(t)\right]\one$ due to the definition of $\C_{\textnormal{MBP}}(t)$
 \item $\q^T(t)\left[\C(t+1)-\C^T(t+1)\right]>\q^T(t)\left[\C(t)-\C^T(t)\right]$ due to the definition of the decision rule in PaC.
\end{itemize}

Using the above and the fact that there is a finite highest link rate in the network $C_{\max}\triangleq \max_{n,m} c_{n,m}(p_{n,m}=1)<\infty$, $\Delta \q$ is bounded and so
$$\Delta w(t+1)\leq (1-\delta)[C_{max}NF+\Delta w(t)]$$
With this we can upper bound the second term of the Lyapunov drift as
$$\mathcal{D}_2(t)\leq (C_{max}NF+\Delta w(t))^2-(\Delta w(t))^2 =C_3+C_4\sqrt{\mathcal{L}_2(t)}-(\delta-\delta^2)\mathcal{L}_2(t)$$

This means that we can write the combined drift of the system as 
$$\mathcal{D}(t)\leq  C_1+C_2+C_3-\epsilon |\q|_1+(2+C_4)\sqrt{\mathcal{L}_2(t)}-(\delta-\delta^2)\mathcal{L}_2(t)$$

Now define the set $\mathcal{S}$ where we want to evaluate the Foster-Lyapunov criteria as the set of states such that $\mathcal{L}(t)<B$ for some arbitrary large number $B$. We have that for all states in $\mathcal{S}^c$, either of two cases occur, depending on the comparison of the values of $\sqrt{\mathcal{L}_2(t)}$ and $\frac{(2+C_4)}{\delta-\delta^2}$ (note this is some large finite constant too)
\begin{itemize}
 \item If $\sqrt{\mathcal{L}_2(t)}>\frac{(2+C_4)}{\delta-\delta^2}$, all terms that grow with $\mathcal{L}_1(t)$ or $\mathcal{L}_2(t)$ in  $\mathcal{D}(t)$ are negative and the drift is negative for a sufficiently high $\mathcal{L}(t)>B$.
 \item On the other hand, if $\sqrt{\mathcal{L}_2(t)}<\frac{(2+C_4)}{\delta-\delta^2}$ but $\mathcal{L}(t)>B$, we can write $|\q|_1=\sqrt{\mathcal{L}_1(t)}\geq \sqrt{\mathcal{L}_1(t)}\geq \sqrt{B-\mathcal{L}_2(t)}$.  We divide the term $|\q|_1$ in two halves and rewrite the drift as
$$\mathcal{D}(t)\leq  C_1+C_2+C_3-\epsilon \frac{1}{2}|\q|_1-\epsilon\underset{\frac{1}{2}|\q|_1}{\underbrace{\frac{1}{2}\sqrt{B-\mathcal{L}_2(t)}}}+2\sqrt{\mathcal{L}_2(t)}-(\delta-\delta^2)\mathcal{L}_2(t)$$
Finally, we note that for any $0<\delta<1$ and $\sqrt{\mathcal{L}_2(t)}<\frac{(2+C_4)}{\delta-\delta^2}$ if we use sufficiently high $B$ we can guarantee that the term $-\epsilon\frac{1}{2}\sqrt{B-\mathcal{L}_2(t)}+2\sqrt{\mathcal{L}_2(t)}$ is negative.
\end{itemize}

Thus, the drift $\mathcal{D}(t)$ is negative for all states in $\mathcal{S}^c$ in both cases, the Foster-Lyapunov criterion is met, and the system is positive recurrent with a steady state distribution in $\mathcal{S}$ QED.

}}


\end{document}